%%% mainoverleaf
%%% for compilation on Overleaf only
\documentclass[a4paper,UKenglish,cleveref,autoref, thm-restate]{lipics-v2021}

\bibliographystyle{plainurl}% the mandatory bibstyle

\title{Finding Diverse Strings and Longest Common Subsequences in a Graph}
\titlerunning{Finding Diverse Strings and Longest Common Subsequences} %TODO optional, please use if title is longer than one line

%% 著者順，corresponding author は暫定です．
\author{Yuto Shida}{Hokkaido University, Japan}{}{}{}
\author{Giulia Punzi}{National Institute of Informatics, Japan \\ University of Pisa, Italy}
{giulia.punzi@unipi.it}
% {punzi@nii.ac.jp}
{https://orcid.org/0000-0001-8738-1595}{}
\author{Yasuaki Kobayashi}{Hokkaido University, Japan}{koba@ist.hokudai.ac.jp}{https://orcid.org/0000-0003-3244-6915}{}
\author{Takeaki Uno}{National Institute of Informatics, Japan}{uno@nii.ac.jp}{https://orcid.org/0000-0001-7274-279X}{}
\author{Hiroki Arimura}{Hokkaido University, Japan}{arim@ist.hokudai.ac.jp}{https://orcid.org/0000-0002-2701-0271}{}
%%%
%% \authorrunning{Y.~Shida et al.}
\authorrunning{
  Y.~Shida,
  G.~Punzi, 
  Y.~Kobayashi, 
  T.~Uno, and 
  H.~Arimura 
}

\Copyright{
  Yuto Shida,
  Giulia Punzi, 
  Yasuaki Kobayashi, 
  Takeaki Uno, and 
  Hiroki Arimura, 
} %TODO mandatory, please use full first names. LIPIcs %\ccsdesc[100]{\textcolor{red}{Replace ccsdesc macro with valid one}} %TODO mandatory: Please choose ACM 2012 classifications from https://dl.acm.org/ccs/ccs_flat.cfm 
\ccsdesc[500]{Theory of computation~Design and analysis of algorithms}
%% Theory of computation
%% Design and analysis of algorithms
%% Data structures design and analysis

\category{} %optional, e.g. invited paper

\keywords{Sequence analysis, longest common subsequence, Hamming distance, dispersion, approximation algorithms, parameterized complexity}

\relatedversion{}

\funding{
% Giulia Punzi and 
Uno Takeaki: Work supported by JSPS KAKENHI Grant
Number JP20H05962.  
Yasuaki Kobayashi: Supported by JSPS KAKENHI Grant
Numbers JP20H00595, JP23K28034, JP24H00686, and JP24H00697.  Hiroki
Arimura: Supported by JSPS KAKENHI Grant Numbers JP20H00595,
JP20H05963, and JST CREST Grant Number JPMJCR18K2.
Giulia Punzi: Work supported by JSPS KAKENHI Grant
Number JP20H05962 and partially supported by MIUR, the Italian Ministry of Education, University and Research, under Grant 20174LF3T8 AHeAD: efficient Algorithms for HArnessing networked Data.
}
% \funding{This work was
% %partially
% supported by JSPS Kakenhi Grant Numbers
% JP20H00595, JP23K28034, JP24H00686, JP24H00697, %% Kobayashi
% JP20H05963, JP20H00595, %% Arimura
% and JP20H05962. %% Punzi and Uno
% }
%optional, to capture a funding statement, which applies to all authors. Please enter author specific funding statements as fifth argument of the \author macro.

%% \acknowledgements{I want to thank \dots}%optional

%%%%%%%%%%%%%%%%%%%%%%%%%%%%%%%%%%%%%%
\EventEditors{Shunsuke Inenaga and Simon J. Puglisi}
\EventNoEds{2}
\EventLongTitle{35th Annual Symposium on Combinatorial Pattern Matching (CPM 2024)}
\EventShortTitle{CPM 2024}
\EventAcronym{CPM}
\EventYear{2024}
\EventDate{June 25--27, 2024}
\EventLocation{Fukuoka, Japan}
\EventLogo{}
\SeriesVolume{296}
\ArticleNo{21}
%%%%%%%%%%%%%%%%%%%%%%%%%%%%%%%%%%%%%%

%%% macros
%% send a proof to the appendix
%%\usepackage[appendix=strip]{apxproof} %% camera 
\usepackage[appendix=append]{apxproof} %% arxiv

%%======================================
\usepackage{graphicx}
\usepackage{amsmath,amssymb}
\usepackage{mathtools}
\mathtoolsset{showonlyrefs=false} %% 参照している式参照のみ表示
%% \mathtoolsset{showonlyrefs=true} %% 参照している式参照のみ表示
\usepackage{cite}
\usepackage {booktabs}
\captionsetup{belowskip=0pt}
%%%
%%\usepackage{titlesec}
%% \titlespacing*{\section}{0pt}{5.5ex plus 1ex minus .2ex}{4.3ex plus .2ex}
%% \titlespacing*{\subsection}{0pt}{5.5ex plus 1ex minus .2ex}{4.3ex plus .2ex}
%%%
\usepackage[skip=0.5ex]{subcaption} %%for subfigure
%% \usepackage{multirow}
%% \usepackage{booktabs}
% 
% private macros by others 
\usepackage{url}
\usepackage{bm}
\usepackage{bbm}
\usepackage{textcomp}%%for cent
%% \usepackage{wrapfig}

%%%% shida
\usepackage{tikz}
\usetikzlibrary{backgrounds, positioning}
%\usepackage[margin=1.75in]{geometry}
%\usepackage[margin=1.0in]{geometry}
% \usepackage{subfig}
%% \usepackage{natbib}

%%% private %%%%%%%%%%%%%%%%%%%%%%%%%%%%%%%%%%%%%

%% private macros by arim@ist

%%%%%%%%%%%%%%%%%%%%%%%%%%%%%%%%%%%%%%%%

%%% draft 
%% \def\baselinestretch{1.2}

%% template 
%% \subsection{Name}\label{refsec:name}\begin{toappendix}\refsubsection{refsec:name}{Name}\end{toappendix}%%ref

\newcommand{\myparagraph}[1]{\textbf{#1}.\hspace{0.25em}}
%\renewcommand{\mysubsubsection}[1]{\textbf{#1}.\hspace{0.25em}}
%% \newcommand{\mysubsubsection}[1]{\paragraph{#1.}}

%%%%%%%%%%%%%%%%%%%%%%%%%%%%%%%%%%%%%%%%
%%comment out a paragraph
\newsavebox{\cmbox}
\newenvironment{commbox}{
  \begin{lrbox}{\cmbox}
    \begin{minipage}{.9\textwidth}
}{\end{minipage}
  \end{lrbox}
  %\framebox{\usebox{\cmbox}}
}

%%%%%%%%%%%%%%%%%%%%%%%%%%%%%%%%%%%%%%%%
%%% empty environment 
\newenvironment{myempty}{\begin{commbox}}{\end{commbox}}
%%% proof environment 
 %default
 %default
 %default

%% replace an enviroment #1 with #2

%\newcommand{\textsfbf}[1]{\bgroup{\sffamily\bfseries #1}\egroup}

%% proof %%
%% \newenvironment{myproofonly}[1]{%
%% \begin{trivlist}\item[]$\blacktriangleright$\;{\sffamily\bfseries #1.}\hskip3pt}{\hfill $\blacktriangleleft$\end{trivlist}}
%%% 
\newenvironment{myproofonly}[1]{%
\begin{trivlist}\item[]$\blacktriangleright$\;{\sffamily\bfseries #1.}\hskip3pt}{\hfill\end{trivlist}}

%% \newenvironment{proofofclaim}[1]{\bgroup\parindent=1em\parskip=0mm\par (\textit{Proof for Claim~{#1}})}{(\textit{End of Proof for Claim})\par\egroup}  

%% \newenvironment{myclaim}[1][]{\bgroup\parskip=0mm\par (\textit{Claim{#1}})}{(\textit{End of Claim})\par\egroup}

%% \newenvironment{proofofclaim}{\bgroup\parskip=0mm\par (Proof for the claim)}{(End of the proof for the claim)\par\egroup}  

%%%%%%%%%%%%%%%%%%%%%%%%%%%%%%%%%%%%%%%%
\newtheorem{problem}{Problem}
%% \newtheorem{conjecture}{Conjecture}

%%%%%%%%%%%%%%%%%%%%%%%%%%%%%%%%%%%%%%%%
%% lipics predefined
%%%%%%%%%%%%%%%%%%%%%%%%%%%%%%%%%%%%%%%%

%%% cleveref
%% \usepackage{cleveref}
\crefname{section}{Sec.}{Sections}
\crefname{chapter}{Chapter}{Chapters}
\crefname{algorithm}{Algorithm}{Algorithms}
\crefname{table}{Table}{Tables}
\crefname{figure}{Fig.}{Figures}
\crefname{definition}{Definition}{Definitions}
\crefname{lemma}{Lemma}{Lemmas}
\crefname{proposition}{Proposition}{Propositions}
\crefname{theorem}{Theorem}{Theorems}
\crefname{remark}{Remark}{Remarks}
\crefname{conjecture}{Conjecture}{Conjectures}
\crefname{problem}{Problem}{Problems}
\crefname{observation}{Observation}{Observations}
%% \crefname{lemma}{Lemma}{Lemmas}
\crefname{equation}{Eq.}{Equations}
%%%%%%

%% %% send a proof to the appendix
%% \usepackage[appendix=strip]{apxproof} %%% => main.tex, amain.tex
%% \usepackage{apxproof}
\newtheoremrep{definition}{Definition}[section]
\newtheoremrep{theorem}[theorem]{Theorem}[section]
\newtheoremrep{lemma}[theorem]{Lemma}[section]
\newtheoremrep{proposition}[proposition]{Proposition}[section]
\newtheoremrep{corollary}[theorem]{Corollary}[section]
\newtheoremrep{remark}[theorem]{Remark}
%% \newtheoremrep{example}[example]{Example}[section]
\newtheoremrep{conjecture}[theorem]{Conjecture}
\renewcommand{\paragraph}[1]{\textbf{#1}\hskip 0.5em}%% camera

\newenvironment{inalign}{\begin{math}\catcode`&=9}{\catcode`&=4\end{math}}%exp

 %inline frac
\newcommand{\nofbox}[1]{#1}

%%%%%%%%%%%%%%%%%%%%%%%%%%%%%%%%%%%%%%%%%
%%% experimental 

\renewcommand{\vec}[1]{\bm{#1}}

%%%%%%%%%%%%%%%%%%%%%%%%%%%%%%%%%%%%%%%%%
%%% private
\newcommand{\Div}[2][d]{D_{#2}^\idrm{#1}}

\newcommand{\X}{\sig{X}}
\newcommand{\Y}{\sig{Y}}
\renewcommand{\S}{\sig{S}}

%% \newcommand{\U}{\sig{U}}

%override
\newcommand{\Wone}[1][1]{$\fn{W}\lbrack {#1}\rbrack$}

\newcommand{\E}[1][\delta]{E^{#1}}

\newcommand{\Eout}{\E[+]}

%%%% 

%%% lipics
\newcommand{\WT}{\op{Weights}}
\newcommand{\size}{\fn{size}}
\newcommand{\dep}{\fn{depth}}

\newcommand{\str}{\fn{str}}
\newcommand{\lab}{\fn{lab}}
%% \newcommand{\dom}{\fn{dom}}
%% \newcommand{\child}[1][i]{\fn{child}_{#1}}

%%%%%%%%%%%%%%%%%%%%%%%%%%%%%%%%%%%%%%%%%

%%narrow minus
%%negative space
\newcommand{\withpos}{\cup\set{0}}

%%%%%%%%%%%%%%%%%%%%%%%%%%%%%%%%%%%%%%%%%
%% basics 

%% symbols
\newcommand{\nat}{\mathbb{N}}
\newcommand{\rat}{\mathbb{R}}
\newcommand{\zat}{\mathbb{Z}}
\newcommand{\ratpos}{\mathbb{R}_{\ge 0}}

%%rel slant
%% rel
\renewcommand{\leq}{\leqslant}%override
\renewcommand{\le}{\leq}%override
%override
\renewcommand{\geq}{\geqslant}%override
\renewcommand{\ge}{\geq}%override
%override

%% typeface 
\newcommand{\sig}[1]{\mathcal{#1}}
\newcommand{\fn}[1]{\ensuremath{\mathrm{#1}}}
\newcommand{\proc}[1]{\textsc{#1}}

\newcommand{\op}[1]{\mathtt{#1}}

%% \newcommand{\op}[1]{\mathsf{#1}}

%% special symbols and opertors
\newcommand{\eps}{\varepsilon}
 %$
 %$
  %$

%%% from jsv1.sty

\def\idrm#1{\ensuremath{\mathrm{#1}}}

 % added by arim
%%% end jsv1.sty

%% math: general 
\newcommand{\by}{\times}
\newcommand{\pair}[1]{\langle #1\rangle}

\newcommand{\set}[1]{\{#1\}}
\newcommand{\sete}[1]{\{\:#1\:\}}

\renewcommand{\bar}[1]{\overline{#1}}

%% math: index
\newcommand{\idx}[3]{_{{#1}={#2}}^{#3}}
\newcommand{\rk}[1]{^{\scriptsize\kern.125pt\textrm{(#1)}}}
\def\fwd|#1|{\overrightarrow{#1}}
\def\rev|#1|{\overleftarrow{#1}}
\newcommand{\subext}[1]{_{\scriptsize\kern.125pt\textrm{#1}}}

%override
%% \renewcommand{\iff}{\Leftrightarrow}%%override

%% \Ind{ y \not= f(x)}
\newcommand{\Ind}[1]{\mathbbm{1}\kern-0.2em\left\{\kern0.1em{#1}\kern0.1em\right\}}

%% \Expect[xy]{loss(y,f(x))}
\newcommand{\Expect}[2][{}]{\mathop{\mathbb{E}}_{#1}\kern-0.2em\left\{\rule{0pt}{2.0ex} \kern0.2em{#2}\kern0.2em\right\}}

%%% tentative

%%% tex hack: ".."コマンドを作成する
%%% https://tex.stackexchange.com/questions/4216/how-to-typeset-correctly
% \mathchardef\ordinarycolon\mathcode`\.
%%% raise: http://www.math.kobe-u.ac.jp/HOME/kodama/tips-latex-math-margin.html#raise 
\mathcode`\.=\string"8000
\begingroup \catcode`\.=\active
  \gdef.{\hbox{\hskip0px\raise-2pt\hbox{$\cdot$\hskip0pt}}}%%exp
\endgroup
%% \mathcode`\.=\string"8000
%% \begingroup \catcode`\.=\active
%%   \gdef.{\hbox{\hskip1pt\raise-2pt\hbox{$\cdot$\hskip1pt}}}
%% \endgroup
%%% end tex hack

%%%%%%%%%%%%%%%%%%%%%%%%%%%%%%%%
%%% algorithm2e : begin

%%\usepackage[\algoopts]{algorithm2e} 
\usepackage[ruled,vlined,linesnumbered]{algorithm2e} 

\makeatletter
\renewcommand{\@algocf@capt@plain}{above}% formerly {bottom}
\makeatother
%%% algorithm2e
\newcommand{\brcomment}[1]{\hfill$\rhd$\ \textit{#1}}
\newcommand{\blcomment}[1]{\kern0.5em$\rhd$\ \textit{#1}}
%% \newcommand{\blcomment}[1]{\kern0.5em$\rhd$\ \textit{#1}\ $\lhd$}
%% \newcommand{\Commentblock}[1]{\hfill$\rhd$\ \textit{#1}}
%% \newcommand{\Commentblockl}[1]{\kern0.5em$\rhd$\ \textit{#1}\ $\lhd$}
%% \SetArgSty{textrm}%arguments for e.g., if, else, while etc. 
%% \SetCommentSty{textit}%arguments for comments
\SetKwComment{Comment}{$\rhd$\hspace{0.125em}}{} 
\SetKwInput{KwGlobal}{Global}
\SetKwInput{KwWork}{Variable}
\SetKwInput{KwInput}{Input}
\SetKwInput{KwOutput}{Output}
\SetKwInput{KwTask}{Task}
\SetKwInput{KwProc}{Procedure}
\SetArgSty{textrm}%arguments for e.g., if, else, while etc. 
\SetCommentSty{textit}%arguments for comments
\newcommand{\iIf}[2]{\textbf{if} {#1} \textbf{then}\hspace{0.125em}{\relax #2}}

\newcommand{\Procedure}{\textbf{Procedure}\hspace{0.375em}}

%%% algorithm2e: end
%%%%%%%%%%%%%%%%%%%%%%%%%%%%%%%%

%% EOF

%divtree
\newcommand{\citep}[2][]{{#1}\cite{#2}}%no natbib

\usepackage[appendix=append]{apxproof}%% send a proof to the appendix
%\usepackage[appendix=strip]{apxproof}%% send a proof to the appendix

%% camera ready
\nolinenumbers

%% %%% exp : pagecounter
%% \newcount\pagehardmatch
%% \pagehardmatch=\thepage
%% \advance\pagehardmatch by 1
%% \multiply\pagehardmatch by 1
%% \divide\pagehardmatch by 1

\begin{document}
\maketitle

%TODO mandatory: add short abstract of the document

%%%%
%%% lipbody.tex

%%% exp : counter
\newcount\pagefighardmatch %% counter for page of {fig:hardmatch}
\newcount\pagefigfptalgo %% for {fig:fptalgo}

%%%% abst 

\begin{abstract}
In this paper, we study for the first time the \textit{Diverse Longest Common Subsequences} (LCSs) problem under Hamming distance. Given a set of a constant number of input strings, the problem asks to decide if there exists some subset $\mathcal X$ of $K$ longest common subsequences whose \textit{diversity} is no less than a specified threshold $\Delta$, where we consider two types of diversities of a set $\mathcal X$ of strings of equal length: the \textit{Sum diversity} and the \textit{Min diversity} defined as the sum and the minimum of the pairwise Hamming distance between any two strings in $\mathcal X$, respectively. We analyze the computational complexity of the respective problems with Sum- and Min-diversity measures, called the \textit{Max-Sum} and \textit{Max-Min Diverse LCSs}, respectively, considering both approximation algorithms and parameterized complexity. Our results are summarized as follows. When $K$ is bounded, both problems are polynomial time solvable. In contrast, when $K$ is unbounded, both problems become NP-hard, while Max-Sum Diverse LCSs problem admits a PTAS. Furthermore, we analyze the parameterized complexity of both problems with combinations of parameters $K$ and $r$, where $r$ is the length of the candidate strings to be selected. Importantly, all \textit{positive results} above are proven in a more general setting, where an input is an edge-labeled directed acyclic graph (DAG) that succinctly represents a set of strings of the same length. \textit{Negative results} are proven in the setting where an input is explicitly given as a set of strings. The latter results are equipped with an encoding such a set as the longest common subsequences of a specific input string set.  
\end{abstract}

%%%%%%
\setcounter{page}{0}
\newpage
\section{Introduction}
\label{sec:intro}

%% %%%%%%% 
%% \begin{wraptable}[10]{r}{.5\textwidth}
%% \vspace{-.85\baselineskip}
%% \abovecaptionskip=.25\baselineskip
%% %\belowcaptionskip=0pt
%% \caption{Longest common subsequences of two input strings $X$ and $Y$ over $\Sigma = \set{A,B,C,D,E}$}\label{fig:example:lcs}
%% \input{figexlcs}
%% \end{wraptable}
%% %%%%%%%

The problem of finding a \textit{longest common subsequence} (LCS) of a set of $m$ strings, called the LCS problem, is a fundamental problem in computer science, extensively studied in theory and applications for over fifty years~\cite{sankoff1972matching,hirschberg1983recent,maier1978complexity,bodlaender1995parameterized,irving1992mlcs:two}.
In application areas such as computational biology, pattern recognition, and data compression, longest common subsequences are used for consensus pattern discovery and multiple sequence alignment~\cite{sankoff1972matching,gusfield1993efficient}. It is also common to use the length of longest common subsequence as a similarity measure between two strings.
For example, \cref{fig:example:lcs} shows longest common subsequences (underlined) of the input strings $X_1 = ABABCDDEE$ and $Y_1 = ABCBAEEDD$.

%%%%%%%
\newcommand{\emphlcs}[1]{\underline{#1}}
\begin{table}[h]
\vspace{-0.0\baselineskip}
\abovecaptionskip=.0\baselineskip
%\belowcaptionskip=0pt
\caption{Longest common subsequences of two input strings $X_1$ and $Y_1$ over $\Sigma = \set{A,B,C,D,E}$.}\label{fig:example:lcs}
\centering
\small
%%\scriptsize
\fbox{
\parbox{.95\textwidth}{
\centering
$\epsilon,\; A,\; B,\; C,\; D,\; E$,\; 
$AA,\; AB,\; AC,\; AD,\; AE,\; BA,\; \dots,\; CD,\; CE,\; DD,\; EE$,\;\\
$ABA,\; ABB,\; ABC,\; ABD,\; \dots,\; CEE,\;$
$ABAD,\; ABAE,\; ABBD,\; \dots,\; BCEE$,\; \\
$\emphlcs{ABADD},\; \emphlcs{ABAEE},\; \emphlcs{ABBDD},\;$
    $\emphlcs{ABBEE},\; \emphlcs{ABCDD},\; \emphlcs{ABCEE}$
}
}
\end{table}
%%%%%%%

The LCS problem can be solved in polynomial time for constant $m \ge 2$ using dynamic programming by Irving and Fraser~\cite{irving1992mlcs:two} requiring $O(n^m)$ time, where $n$ is the maximum length of $m$ strings. When $m$ is unrestricted, LCS is NP-complete~\cite{maier1978complexity}.
From the view of parameterized complexity,
Bodlaender, Downey, Fellows, and Wareham~\cite{bodlaender1995parameterized} showed that 
the problem
is W[$t$]-hard parameterized with $m$ for all $t$, 
is W[2]-hard parameterized with the length $\ell$ of a longest common subsequence, and
is W[1]-complete parameterized with $\ell$ and $m$.
%%~\cite{bodlaender1995parameterized}.
%% From the view of parameterized complexity, the problem is W[2]-hard parameterized with the length $\ell$ of a longest common subsequence and is W[1]-complete parameterized with $\ell$ and $m$~\cite{bodlaender1995parameterized}.
Bulteau, Jones, Niedermeier, and Tantau~\cite{bulteau2022fpt} presented a \textit{fixed-paraemter tractable} (FPT) algorithm with different parameterization. 

Recent years have seen increasing interest in efficient methods for finding a \textit{diverse set of solutions}~\cite{baste2022diversity,fellows:rosamond2019algorithmic,hanaka2023framework,matsuoka2024maximization}. 
%%%%
%%%%%
Formally, let $(\sig F, d)$ be a distance space with a set $\sig F$ of feasible solutions and a distance $d: \sig F\by\sig F\to \ratpos$, where $d(X, Y)$ denotes the distance between two solutions $X, Y \in \sig F$. We consider two diversity measures for
a subset
%% a multiset%
%% \footnote{
%% Alternatively, we can prohibid $\sig X$ from containing duplicated elements. Since $d(X, X) = 0$ for any element $X$ under any distance $d$, this does not affect the diversity measure. Nevertheless, we prefer to allow $\sig X$ to duplicated elements due to convention in metrics~\cite{deza1997geometry:book,avis:maehara1994metric}. }%
$\X = \set{X_1, \dots, X_K}\subseteq \sig F$ of solutions: 
%%allowing repeated elements: 
\begin{align}
  \Div[sum]{d}(\X) &:=\textstyle  \sum_{i<j} d(X_i, X_j),
  & (\textsc{Sum diversity}), 
    \\ \Div[min]{d}(\X) &:= \textstyle \min_{i<j} d(X_i, X_j),
    & (\textsc{Min diversity}). 
\end{align}
For $\tau \in \set{\fn{sum},\fn{min}}$, 
%$\Div[\tau]{d}$ denotes one of $\Div[sum]{d}$ and $\Div[min]{d}$. 
a subset $\X\subseteq \sig F$ of feasible solutions is said to be \textit{$\Delta$-diverse} w.r.t.~$\Div[\tau]{d}$ (or simply, \textit{diverse}) if $\Div[\tau]{d}(\X)\ge \Delta$ for a given $\Delta\ge 0$. 
%%%
Generally, the \textsc{Max-Sum} (resp.~\textsc{Max-Min}) \textsc{Diverse Solutions} problem related to a combinatorial optimization problem $\Pi$ is the problem of, given an input $I$ to $\Pi$ and a nonnegative number $\Delta \ge 0$, deciding if there exists a subset $\sig X 
%\subseteq Sol(I)
\subseteq \fn{Sol}_\Pi(I)$ of $K$ solutions on $I$ such that $\Div[sum]{d}(\sig X) \ge \Delta$ (resp.~$\Div[min]{d}(\sig X) \ge \Delta$), 
where $\fn{Sol}_\Pi(I)\subseteq \sig F$ is the set of solutions on $I$. 
%%%
For many distance spaces related to combinatorial optimization problems, both problems are known to be computationally hard with unbounded~$K$~\cite{baste2022diversity,baste2019fpt,chandra2001approximation,erkut1990discrete,fellows:rosamond2019algorithmic,hanaka2021finding,hanaka2023framework,hansen1988dispersing,kuby1987programming,wang1988study}.
%%%%

In this paper, we consider the problem of finding
a diverse set of solutions for 
%% a diverse solution for
%the \textit{set $LCS(\S)$ of
\textit{longest common subsequences} of a set $\S$ of input strings under Hamming distance.
%where $LCS(\S)$ is the set of all longest common subsequences of $\S$. 
%% As an example, \cref{fig:example:lcs} shows six longest common subsequences (underlined) of the input strings
%% $X = ABABCDDEE$ and $Y = ABCBAEEDD$.
%%$X$ and $Y$. 
The task is to select $K$ longest common subsequences, maximizing the minimum pairwise Hamming distance among them. In general, a set of $m$ strings of length $n$ may have exponentially many longest common subsequences in $n$. Hence, efficiently finding such a diverse subset of solutions for longest common subsequences is challenging. 

% %%%%%
% Formally, 
Let $d_H(X, Y)$ denote the Hamming distance between two strings $X, Y \in \Sigma^r$ of the equal length $r\ge 0$, called \textit{$r$-strings}.
Throughout this paper, we consider two diversity measures over sets of equi-length strings, the Sum-diversity $\Div[sum]{d_H}$ and the Min-diversity $\Div[min]{d_H}$ under the Hamming distance $d_H$. 
% %% We define the \textit{diversity} of a set $\X = \set{X_1, \dots, X_K}\subseteq \Sigma^r$ of solutions to be: 
% Throughout this paper, we consider two diversity measures for a multiset $\X = \set{X_1, \dots, X_K}\subseteq \Sigma^r$ of solutions, allowing repeated elements: 
% %% %%%%%
% %% \footnote{
% %% Althought a multiset $\X$ allows duplicated elements, We will use the set notation. 
% %% }
% %% %%%%%
% \begin{align}
%   \Div[sum]{d_H}(\X) &:=\textstyle  \sum_{i<j} d_H(X_i, X_j),
%   & (\textsc{Max-Sum diversity}), 
%     \\ \Div[min]{d_H}(\X) &:= \textstyle \min_{i<j} d_H(X_i, X_j),
%     & (\textsc{Max-Min diversity}). 
% \end{align}
% For $\tau \in \set{\fn{sum},\fn{min}}$, $\Div[\tau]{d_H}$ denotes one of
% $\Div[sum]{d_H}$ and $\Div[min]{d_H}$. 
% %%the above measures.  
% A subset $\X\subseteq \Sigma^r$ is said to be \textit{$\Delta$-diverse} w.r.t.~$\Div[\tau]{d_H}$ if $\Div[\tau]{d_H}(\X)\ge \Delta$ for a given $\Delta\ge 0$. 
% %%%%%
Let $LCS(\S)$ denotes the \textit{set of all longest common subsequences} of a set $\S$ of strings. 
Now, we state our first problem.

\begin{problem}\label{def:problem:maxmin:lcs}
\rm\parindent=0mm
\textsc{Diverse LCSs with Diversity Measure $\Div[\tau]{d_H}$}
\par\textit{Input}: Integers $K, r\ge 1$, and $\Delta \ge 0$, and a set $\S = \set{S_1, \dots, S_m}$ of $m\ge 2$ strings over $\Sigma$ of length at most $r$ ; 
\par\textit{Question}: Is there some set $\X\subseteq LCS(\S)$
of longest common subsequences of $\S$
such that $|\X| = K$ and $\Div[\tau]{d_H}(\X) \ge \Delta$?
\end{problem}

Then, we analyze the computational complexity of \textsc{Diverse LCSs} from the viewpoints of approximation algorithms~\cite{vazirani:book2010approx} and parameterized complexity~\cite{cygan:book2015parameterized,flum:grohe:book2006param:theory}.
For proving positive results for the case that $K$ is bounded, actually, we work with a more general setting in which a set of strings to select is implicitly represented by the language $L(G)$ accepted by an edge-labeled DAG $G$, called a $\Sigma$-DAG. This is motivated by the fact implicit within the well-known algorithm for $K$-LCSs by Irving and Fraser~\cite{irving1992mlcs:two} that the set $LCS(\sig S)$ can be succinctly represented by such a $\Sigma$-DAG (see \cref{lem:lcsdag}). 
In contrast, \textit{negative results} will be proven in the setting where
an input is explicitly given as a set of strings.

Let $\tau \in \set{\fn{sum},\fn{min}}$ be any diversity type. Below, we state the modified version of the problem, where an input string set is an arbitrary set of equi-length strings, no longer a set of LCSs, and it is implicitly represented by 
either a $\Sigma$-DAG $G$ or the set $L$ itself.

\begin{problem}\label{def:problem:maxmin:sigdag}
\rm\parindent=0mm
\textsc{Diverse String Set with Diversity Measure $\Div[\tau]{d_H}$}\par 
\textit{Input}:
Integers $K$, $r$, and $\Delta$, and a $\Sigma$-DAG $G$ for a set $L(G) \subseteq \Sigma^r$ of $r$-strings. \par 
\textit{Question}: Decide if there exists some subset $\X\subseteq L(G)$ such that $|\X| = K$ and $\Div[\tau]{d_H}(\X)\ge \Delta$. 
%% over all subsets of $$. 
\end{problem}

%%%%% 
\textbf{Main results}.
Let $K\ge 1$, $r > 0$, and $\Delta\ge 0$ be integers and $\Sigma$ be an alphabet. The underlying distance is always Hamming distance $d_H$ over $r$-strings.
In \textsc{Diverse String Set}, we assume that an input string set $L\subseteq \Sigma^r$ of $r$-strings is represented by either a $\Sigma$-DAG $G$ or the set $L$ itself.
In \textsc{Diverse LCS}, we assume that the number $m = |\S|$ of strings in an input set $\S$ is assumed to be constant throughout. 
Then, the main results of this paper are summarized as follows.
%%%%% 
\begin{enumerate}[1.]
\item When $K$ is bounded, both \textsc{Max-Sum} and \textsc{Max-Min} versions of \textsc{Diverse String Set} and \textsc{Diverse LCSs} can be solved in polynomial time
%in $||\S||$ and $\Delta$
using dynamic programming (DP). (see \cref{lem:k:const:dp:algo:maxmin}, \cref{lem:k:const:dp:algo:maxsum})

\item When $K$ is part of the input, the \textsc{Max-Sum} version of \textsc{Diverse String Set} and \textsc{Diverse LCSs} admit a PTAS by local search showing that the Hamming distance is a \textit{metric of negative type}%
%%%%  
\footnote{It is a finite metric satisfying a class of inequalities of negative type~\cite{deza1997geometry:book}. For definition, see \cref{sec:approx}.}.
%%%%  
  (see \cref{thm:k:input:ptas})

\item Both of 
the 
\textsc{Max-Sum} and 
\textsc{Max-Min} 
versions of \textsc{Diverse String Set} and \textsc{Diverse LCSs} are fixed-parameter tractable (FPT) when parameterized by $K$ and $r$ (see \cref{thm:k:r:delta:const:fpt:maxmin}, \cref{thm:k:r:delta:const:fpt:maxsum}).
These results are shown by combining Alon, Yuster, and Zwick's \textit{color coding technique}~\cite{alon1995color} and the DP method in Result~1 above. 
  %\textsc{Diverse LCSs} problems have FPT by the same parameterization. 

\item When $K$ is part of the input, the \textsc{Max-Sum} and \textsc{Max-Min} versions of \textsc{Diverse String Set} and \textsc{Diverse LCSs} are NP-hard for any constant $r \ge 3$ (\cref{thm:nphard:div:strdag}, \cref{thm:nphard:divlcs}). 

\item When parameterized by $K$, the \textsc{Max-Sum} and \textsc{Max-Min} versions of \textsc{Diverse String Set} and \textsc{Diverse LCSs} are W[1]-hard (see \cref{thm:w1hard:div:strdag}, \cref{thm:w1hard:divlcs}). 
\end{enumerate}

A summary of these results is presented in \cref{tab:summary:results}.
We remark that the \textsc{Diverse String Set} problem coincides the original \textsc{LCS} problem when $K = 1$. It is generally believed that a W[1]-hard problem is unlikely to be $\fn{FPT}$~\cite{flum:grohe:book2006param:theory,downey:fwllows:2012parameterized}.
Future work includes the approximability of the \textit{Max-Min} version of both problems for unbounded $K$, 
and extending our results to other distances and metrics over strings, e.g., \textit{edit distance}~\cite{levenshtein1966binary,wagner1974string}
%\textit{edit distance with moves}~\cite{cormode2007string}, 
and \textit{normalized edit distance}~\cite{fisman2022normalized}. 
%and applying our proposed methods to other diversity maximization problems on strings. 

%% table %%%%%%%%%
\begin{table}[t]
\caption{Summary of results on
%the \textsc{Max-Min} and \textsc{Max-Min} versions of
\textsc{Diverse String Set}  and \textsc{Diverse LCSs} 
%problems
under Hamming distance, 
where $K$, $r$, and $\Delta$ stand for 
the \textit{number, length}, and \textit{diversity threshold} 
for a subset $\X$ of $r$-strings,
and 
$\alpha$:\;\textit{const, param}, and \textit{input} 
%$\alpha$:\;const, $\alpha$:\;param, and $\alpha$:\;input 
indicate that $\alpha$ is a constant, a parameter, and part of an input, respectively. A representation of
an input set $L$ 
%an input string set
is 
%always 
both of $\Sigma$-DAG and LCS otherwise stated. 
}\label{tab:summary:results}
\newcommand{\mystrut}{\rule{0pt}{1.1\baselineskip}}
\centering
\begin{tabular}{p{2.22cm}p{1.25cm}p{2.4cm}p{3.2cm}p{2.8cm}}
\toprule
Problem & Type
& $K$:\;const %$r,\Delta$:~input 
& $K$:\;param %$r,\Delta$:~input 
& $K$:\;input %$r,\Delta$:~input   
\\
%%%%
\midrule
\textsc{Max-Sum \makebox{Diverse} \makebox{String Set \&\;\textsc{LCS}} }
%under Hamming distance 
& Exact 
& Poly-Time \makebox{(\cref{lem:k:const:dp:algo:maxsum})} 
&
\makebox{W[1]-hard on $\Sigma$-DAG} \makebox{(\cref{thm:w1hard:div:strdag})})
\makebox[2.8cm]{}
\makebox{W[1]-hard on LCS} \makebox{(\cref{thm:w1hard:divlcs})})
  &  
  \makebox{NP-hard on $\Sigma$-DAG}
  \makebox{if $r\ge 3$:const}
  \makebox{(\cref{thm:nphard:div:strdag})}  
  \makebox{NP-hard on LCS}
  \makebox{(\cref{thm:nphard:divlcs})}
\\
%%%%
%\cmidrule(lr){2-2} 
\cmidrule(lr){2-5}
 & Approx. or FPT
 & ---
% & 
 & \makebox{FPT if $r$:\;param} \makebox{(\cref{thm:k:r:delta:const:fpt:maxsum})}
 & PTAS \makebox{(\cref{thm:k:input:ptas})} 
 \\
%%%%
\midrule
\textsc{Max-Min \makebox{Diverse} \makebox{String Set}
\&\;{LCS}}
%under Hamming distance 
& Exact 
& Poly-Time \makebox{(\cref{lem:k:const:dp:algo:maxmin})} 
%% & Open %%(Conjecture: W[1]-hard)
&
\makebox{W[1]-hard on $\Sigma$-DAG}
\makebox{(\cref{thm:w1hard:div:strdag})}
\makebox[2.8cm]{}
\makebox{W[1]-hard on LCS}
\makebox{(\cref{thm:w1hard:divlcs})} 
%%%%
  &
  NP-hard \makebox{on $\Sigma$-DAG} \makebox{if $r\ge 3$:const}
  \makebox{(\cref{thm:nphard:div:strdag})} 
  \makebox{NP-hard on LCS}
  \makebox{(\cref{thm:nphard:divlcs})} \\  
%%%%
%\cmidrule(lr){2-2} 
\cmidrule(lr){2-5}
& Approx. or FPT
& ---
%& 
%% & \makebox{FPT if $r,\Delta$:\;param} \makebox{(\cref{thm:k:r:delta:const:fpt:maxmin})} 
& \makebox{FPT if $r$:\;param} \makebox{(\cref{thm:k:r:delta:const:fpt:maxmin})} 
& Open \rule{1.2cm}{0pt}
 \\
%%% 
\bottomrule
\end{tabular}
\end{table}
%%%%%%%%%%%

%%%%
\subsection{Related work}
\label{subsec:related}

\textbf{Diversity maximization for point sets} in metric space and graphs has been studied since 1970s under various names in the literature~\cite{shier1977min,kuby1987programming,erkut1990discrete,hansen1988dispersing,ravi1994heuristic,chandra2001approximation,birnbaum2009improved,cevallos2019improved} (see~Ravi, Rosenkrantz, and Tayi~\cite{ravi1994heuristic} and Chandra and Halld{\'o}rsson~\cite{chandra2001approximation} for overview).
There are two major versions: 
%The
\textsc{Max-Min} version is known as \textit{remote-edge}, \textit{$p$-Dispersion}, and \textit{Max-Min Facility Dispersion}~\cite{shier1977min,erkut1990discrete,wang1988study}; 
%The
\textsc{Max-Sum} version is known as \textit{remote-clique}, \textit{Maxisum Dispersion}, and \textit{Max-Average Facility Dispersion}~\cite{hansen1988dispersing,ravi1994heuristic,birnbaum2009improved,cevallos2019improved}.
%%%
Both problems are shown to be NP-hard with unbounded $K$ for general distance and metrics (with triangle inequality)~\cite{erkut1990discrete,hansen1988dispersing}, while they are polynomial time solvable for $1$- and $2$-dimensional $\ell_2$-spaces~\cite{wang1988study}. It is trivially solvable in $n^{O(k)}$ time for bounded $K$.
%for any dimension. 

\textbf{Diversity maximization in combinatorial problems}. 
However, extending these results for finding diverse solutions to combinatorial problems is challenging~\cite{fellows:rosamond2019algorithmic,baste2022diversity}. While methods such as \textit{random sampling}, \textit{enumeration}, and \textit{top-$K$ optimization} are commonly used
%% for finding moderately diverse solution sets 
for increasing the diversity of solution sets
in optimization, they lack theoretical guarantee of the diversity~\cite{baste2022diversity,fellows:rosamond2019algorithmic,baste2019fpt,hanaka2023framework}.
In this direction, Baste, Fellows, Jaffke, Masar{\'\i}k, de Oliveira Oliveira, Philip, and Rosamond~\cite{baste2019fpt,baste2022diversity}
pioneered the study of finding diverse solutions in combinatorial problems, investigating the parameterized complexity of well-know
graph problems such as \textit{Vertex Cover}% 
%in graphs
~\cite{baste2019fpt}. Subsequently,
%%, Feedback Vertex Set, and $d$-Hitting Set
Hanaka, Kiyomi, Kobayashi, Kobayashi, Kurita, and Otachi~\cite{hanaka2021finding}
%Hanaka~\textit{et al.}~\cite{hanaka2021finding}
explored the fixed-parameter tractability of finding
%\textit{diverse spanning trees, paths}, and
various \textit{subgraphs}.
%Hanaka~\textit{et al.}~\cite{hanaka2023framework}
They further proposed a framework for \textit{approximating} diverse solutions, leading to efficient approximation algorithms
%% constant-factor approximation
for diverse matchings, and
%% a PTAS for
diverse minimum cuts~\cite{hanaka2023framework}.
%and interval schedulings
While previous work has focused on diverse solutions in graphs and set families, the complexity of finding diverse solutions in \textit{string problems} remains unexplored. 
Arrighi, Fernau, de Oliveira Oliveira, and Wolf~\cite{arrighi2023synchronization} conducted one of the first studies in this direction, investigating a problem of finding a diverse set of subsequence-minimal synchronizing words. 
%% While prior work focused on diverse solutions in graphs and set families, the complexity of finding diverse solutions in \textit{string problems} seems to remain unexplored yet.
%% As one of the first study in this direction, Arrighi, Fernau, de Oliveira Oliveira, and Wolf~\cite{arrighi2023synchronization} studied a problem of finding a diverse set of subsequence-minimal synchronizing words. 

%the set of
%% \textit{all longest common subsequences}
\textbf{DAG-based representation} for all 
LCSs 
%longest common subsequences 
have appeared from time to time in the literature. The LCS algorithm by Irving and Fraser~\cite{irving1992mlcs:two} for more than two strings can be seen as DP 
%dynamic programming
on a grid DAG for LCSs. 
Lu and Lin's parallel algorithm~\cite{lu1994parallel} for LCS 
%on the CREW PRAM 
used a similar grid DAG. Hakata and Imai~\cite{hakata1992longest} presented a faster algorithm based on a DAG of \textit{dominant matches}.
Conte, Grossi, Punzi, and Uno~\cite{conte:grossi:punzi:uno2023compact} and Hirota and Sakai~\cite{hirota:sakai2023enumerating:maximal} independently proposed 
%data structures based on 
DAGs 
%for storing and enumerating 
of maximal common subsequences of two strings 
%%for queries and
for enumeration.

%% \textbf{Hamming distance and other metrics}. The relationship between Hamming distance and $\ell_1$-metric on $\rat^*$ has been explored in string and geometric algorithms. 
%% \textbf{Hamming distance and other metrics}. The relationship between Hamming distance and $\ell_1$-metric
\textbf{The relationship between Hamming distance and other metrics}
%%. Hamming distance and $\ell_1$-metric
%on $\rat^*$
has been explored in string and geometric algorithms. 
Lipsky and Porat~\cite{lipsky2008l1} presented linear-time reductions 
%(not isometry)
from \textsc{String Matching} problems
under Hamming distance to equivalent problems under $\ell_1$-metric.
%% Besides, we present an isometry from Hamming distance to $\ell_1$-metric for approximability of \cref{thm:k:input:ptas}. 
Gionis, Indyk, and Motwani~\cite{gionis1999similarity} used an \textit{isometry} (a distance preserving mapping) from an $\ell_1$-metric to Hamming distance over binary strings with a polynomial increase in dimension. 
%%%
Cormode and Muthukrishnan~\cite{cormode2007string} showed an efficient
$\ell_1$-embedding of edit distance allowing moves over strings into $\ell_1$-metric with small distortion. 
Despite these advancements, existing techniques haven't been successfully applied to our problems.

%% %%% L1
%% For the distortion of $\ell_1$-embedding of Edit Distance over binary $r$-strings, Ostrovsky and Rabani~\cite{ostrovsky2007low} showed an upperbound $2^{O(\sqrt{\log r\log\log r})}$, while Khot and Naor~\cite{khot2006nonembeddability} showed a non-constant lowerbound $\sqrt{\log r}/2^{O(\sqrt{\log\log r\log\log\log r})}$.

%% While any graph $G$ with $n$ vertices can be embedded into a finite-dimensional  $\ell_1$-space (see Maehara~\cite{maehara1984space}), Kang and M{\"u}ller~\cite{kang2011sphere} showed that the resolution of the minimum dimensional $\ell_1$-space can be exponential in the number of vertices, and moreover, finding such a minimum dimension is NP-hard.
%% As independent interest, we can obtain a non-minimum, $O(n^3)$ dimensional embedding of any graph $G$ into a unit-disk graph in $\ell_1$-space combining the reduction in the proof of~\cref{thm:nphard:div:strdag} with the isometry of~\cite{gionis1999similarity}.
%% %% Linial, London, and Rabinovich~\cite{linial1995geometry} showed that any $n$-point metric space $(X, d)$ can be embedded in $(\set{0,1}, \gamma\cdot d_{\ell_1})$ with distortion $\Theta(\log n)$ and scaling constant $\gamma > 0$.

\section{Preliminaries}
\label{sec:prelim}

We denote by $\zat$, $\nat = \sete{ x \in \zat \mid x \ge 0 }$, $\rat$, and $\ratpos = \sete{ x \in \rat \mid x \ge 0 }$ the sets of \textit{all integers}, \textit{all non-negative integers}, \textit{all real numbers}, and \textit{all non-negative real numbers}, respectively. For any $n\in \nat$, $[n]$ denotes the set $\set{1,\dots,n}$.
Let $A$ be any set. Then, $|A|$ denotes the \textit{cardinality} of $A$. 
Throughout, our model of computation is the word RAM, where the space is measured in $\Theta(\log n)$-bit machine words. 

Let $\Sigma$ be an \textit{alphabet} of $\sigma$ symbols.
For any $n\ge 0$, $\Sigma^n$ and $\Sigma^*$ denote the sets of all strings of length $n$ and all finite strings over $\Sigma$, respectively.
Let $X = a_1\dots a_n \in \Sigma^n$ be any string. Then, the \textit{length} of $X$ is denoted by $|X| = n$. For any $1\le i,j\le n$, $X[i..j]$ denotes the substring $a_i\dots a_j$ if $i\le j$ and the \textit{empty string} $\eps$ otherwise. 
A \textit{string set} or a \textit{language} is a set $L = \set{X_1, \dots, X_n}\subseteq \Sigma^*$ of $n\ge 0$ strings  over $\Sigma$. The \textit{total length} of a string set $L$ is denoted by $||L|| = \sum_{X \in L} |X|$, while the length of the longest strings in $L$ is denoted by $\fn{maxlen}(L) := \max_{S \in L} |S|$. 
For any $r\ge 0$, we call any string $X$ an \textit{$r$-string} if its length is $r$, i.e., $X \in \Sigma^r$.
%$|X| = r$. 
Any string set $L$ is said to be of \textit{equi-length} if $L\subseteq \Sigma^r$ for some $r\ge 0$.

%Let $\Sigma$ be an \textit{alphabet} of symbols. 
%% A \textit{string set} or a \textit{language} is a set $L = \set{X_1, \dots, X_n}\subseteq \Sigma^*$ of $n\ge 0$ strings  over $\Sigma$. The \textit{total length} of a string set $L$ is denoted by $||L|| = \sum_{X \in L} |X|$.
%% We
%% call any string $X$ 
%% %% refer to any string $X$ as
%% a \textit{$r$-string} if its length is $r$, i.e., $|X| = r$.
%%Any string set $L$ is said to be \textit{$r$-regular} if it connsists of $r$-strings, i.e., $L\subseteq \Sigma^r$.
%For any $r\ge 0$, members of $\Sigma^r$ are called \textit{$r$-strings}.

%% %%%%%% fig : input DAG
%% \begin{wrapfigure}[13]{r}{.49\textwidth}
%% %% \hline 
%% \abovecaptionskip=0.5\baselineskip
%% \belowcaptionskip=0\baselineskip
%%   \centering
%%   \vspace{-1.0\baselineskip}
%%   {\includegraphics[height=3.2cm]{./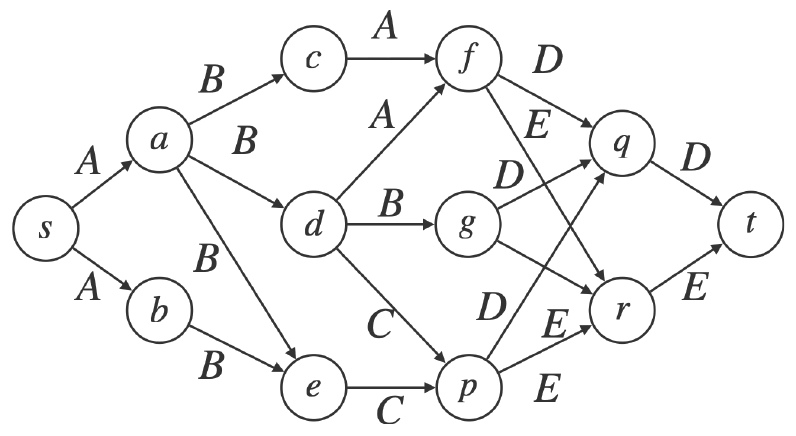}}
%%   \caption{An input $\Sigma$-DAG $G_1$ over $\Sigma=\set{A,B,C,D,E}$ for the set $LCS(X_1, Y_1)$ of all longest common subsequences of two strings $X_1 = ABABCDDEE$ and $Y_1 = ABCBAEEDD$ in \cref{fig:example:lcs}}\label{fig:exam:main:kdivlcs:a}
%% \end{wrapfigure}
%% %%%%%%

\subsection{$\Sigma$-DAGs}
\label{subsec:sigma:dag}
%%%%

A \textit{$\Sigma$-labeled directed acyclic graph} (\textit{$\Sigma$-DAG}, for short) 
is an edge-labeled directed acyclic graph (DAG) $G = (V, E, s, t)$ satisfying: 
%% \begin{enumerate*}[(i)]
%\item
 (i) $V$ is a set of vertices; 
%\item
 (ii) $E \subseteq V\by \Sigma\by V$ is a set of labeled directed edges, where 
each edge $e = (v, c, w)$ in $E$ is labeled with a symbol $c = \lab(e)$ taken from $\Sigma$; 
%\item
 (iii) $G$ has the unique \textit{source} $s$ and \textit{sink} $t$ in $V$ such that every vertex lies on a path from $s$ to $t$. 
%% \end{enumerate*}
We define the \textit{size} of $G$ as the number $\size(G)$ of its labeled edges. 
%%We define the \textit{size} of $G$, denoted by $\size(G)$, as the number of its labeled edges. 
%%\cref{fig:example:lcsdag} 
%%% 
From (iii) above, $G$ contains no unreachable nodes.
For any vertex $v$ in $V$, we denote the \textit{set of its outgoing edges} by
$\Eout(v)$ $=$ $\sete{ (v, c, w) \in E \mid w \in V }$.
%% $\Eout(v) = \sete{ w  \mid (v, c, w) \in E }$.
Any path $P = (e_1, \dots, e_n) \in E^n$ of length $n$ \textit{spells out} a string $\str(P) = \lab(e_1) \cdots \lab(e_n) \in \Sigma^n$ of length $n$, where $n\ge 0$.
%%%
A $\Sigma$-DAG $G$ \textit{represents} the string set, or language, denoted $L(G)\subseteq \Sigma^*$, as the collection of all strings spelled out by its $(s,t)$-paths. Essentially, $G$ is equivalent to a non-deterministic finite automaton (NFA)~\cite{hopcroft:ullman79book:automata} over $\Sigma$ with initial and final states $s$ and $t$, and without $\eps$-edges. 

\cref{fig:exam:main:kdivlcs:a} shows an example of $\Sigma$-DAG representing the set of six longest common subsequences of two strings in \cref{fig:example:lcs}. 
%%%
Sometimes, a $\Sigma$-DAG 
%with $m$ edges 
can succinctly represent a string set by its language $L(G)$. Actually, the size of $G$ can be logarithmic in $|L(G)|$ in the best case,%
%%%%
\footnote{%N
For example, for any $r\ge 1$, the language $L = \set{a,b}^r$ over an alphabet $\Sigma=\set{a,b}$ consists of   $|L|=2^r$ strings, while it can be represented by a $\Sigma$-DAG with $2r$ edges.
%%For example, for any $r\ge 1$, $L = \set{a,b}^r$ with $|L|=2^r$  has a $\Sigma$-DAG with $2r$ edges, where $\Sigma=\set{a,b}$.
}
%%%%
while $\size(G)$ can be arbitrary larger than $||L(G)||$ (see \cref{lem:const:trie} in \cref{sec:fpt}). 

%% For our purpose, we introduce a special class of $\Sigma$-DAGs, called \textit{$r$-regular $\Sigma$-DAGs}, whose languages consists of equi-length strings. 

% \begin{remarkrep}\label{rem:rep:strdag}
% For any set $L$ of strings over $\Sigma$, there exists a $\Sigma$-DAG $G$ such that $L(G) = L$ and $\size(G) \le ||L||$. Moreover, $G$ can be constructed from $L$ in $O(||L||f(\sigma))$ time, where $f(n)$ is the query time of search and insert operation on a dictionary with $n$ elements. 
% \end{remarkrep}

\begin{remarkrep}\label{rem:rep:strdag}
For any set $L$ of strings over $\Sigma$, the following properties hold: 
(1) there exists a $\Sigma$-DAG $G$ such that $L(G) = L$ and $\size(G) \le ||L||$. 
%%%
(2) Moreover, $G$ can be constructed from $L$ in $O(||L||f(\sigma))$ time, where $f(n)$ is the query time of search and insert operation on a dictionary with $n$ elements. 
%%%
(3)   Suppose that a $\Sigma$-DAG $G$ represents a set of strings $L\subseteq \Sigma^*$. 
  If $L\subseteq \Sigma^r$ for $r\ge 0$, then all paths from the source $s$ to any vertex $v$ spell out strings of the same length, say $d\le r$.
  %\label{lemstrdag:depth}
\end{remarkrep}

\begin{proof}
  (1) We can construct a \textit{trie}~$T$ for a set $L$ of strings over $\Sigma$, which is a deterministic finite automaton for recognizing $L$ in the shape of a rooted trees and has at most $O(||L||)$ vertices and edges. By identifying all leaves of $T$ to form the sink, we obtain a $\Sigma$-DAG with $||L||$ edges for $L$.
%%%
  (2) It is not hard to see that the trie $T$ can be built  in $O(||L||\log\sigma)$ time from $L$.
%%% 
  (3) In what follows, we denote the string spelled out by any path $\pi$ in $G$ by $str(\pi)$. Suppose by contradiction that $G$ has some pair of paths $\pi_1$ and $\pi_2 \in E^*$ from $s$ to a vertex $v$ such that $|str(\pi_1)| - |str(\pi_2)| > 0$ (*).
  By assumption (iii) in the definition of a $\Sigma$-DAG, the vertex $v$ is contained in some $(s,t)$-path in $G$. Therefore, we have some path $\theta$ that connects $v$ to $t$. By concatenating $\pi_k$ and $\theta$, we have two $(s,t)$-paths $\tau_k = \pi_k\cdot \theta$ for all $k=1,2$.
  %Let $Y_k$ be the string spelled out by $\tau_k$.
  Then, we observe from claim (*) that
  \begin{math}
    |str(\tau_1)| - |str(\tau_2)|
    = |str(\pi_1\cdot \theta)| - |str(\pi_2\cdot \theta)|
    = (|str(\pi_1)| +  |str(\theta)|) - (|str(\pi_2)| + |str(\theta)|)
    = |str(\pi_1)| - |str(\pi_2)|
    > 0.
  \end{math} 
  On the other hand, we have $L(G)$ contains both of $str(\pi_1)$ and $str(\pi_2)$ since $\tau_1$ and $\tau_2$ are $(s,t)$-paths. This means that $L(G)$ contains two strings of distinct lengths, and this contradicts that $L(G)\subseteq \Sigma^r$ for some $r\ge 1$.
  Hence, all paths from $s$ to $v$ have the same length. Hence, (3) is proved. 
\end{proof}

%Property (3) of \cref{rem:rep:strdag} is useful for handling $\Sigma$-DAGs in the following sections.  
%%%
By Property (3) of \cref{rem:rep:strdag}, we define the \textit{depth} of a vertex $v$ in $G$ by the length $\dep(v)$ of any path $P$ from the source $s$ to $v$, called a \textit{length-$d$ prefix (path)}. In other words, $\dep(v) = |\str(P)|$.
Then, the vertex set $V$ is partitioned into a collection of disjoint subsets $V_0 = \set{s}\cup \dots\cup V_r = \set{t}$, where $V_d$ is the subset of all vertices with depth~$d$ for all $d\in [r]\withpos$. 

% \begin{lemma}\label{lemstrdag:depth}
%   Suppose that a $\Sigma$-DAG $G$ represents a set of strings $L\subseteq \Sigma^*$. 
%   If $L\subseteq \Sigma^r$ for $r\ge 0$, then all paths from the source $s$ to any vertex $v$ spell out strings of the same length, say $d\le r$.
% \end{lemma}

\subsection{Longest common subsequences}

A string $X$ is a \textit{subsequence} of another string $Y$ if $X$ is obtained from $Y$ by removing some characters retaining the order. $X$ is a \textit{common subsequence} (CS) of
any set $\S = \set{S_1, \dots, S_m}$ of $m$ strings
if $X$ is a subsequence of any member of $\S$. A CS of $\S$ is called a \textit{longest common subsequence} (LCS) if it has the maximum length among all CSs of $\S$. 
We denote by $CS(\S)$ and $LCS(\S)$, respectively, the \textit{sets of all CSs and all LCSs} of $\S$.  
%We denote by $LCS(\S)$ the set of all LCSs of $\S$.  
Naturally, all LCSs in $LCS(\S)$ have the same length, 
denoted by 
$0\le lcs(\S)\le \min_{S \in \S}|S|$.
%$0\le lcs(\S)\le \min\sete{|S| : S \in \S}$.
%%, denoted by $lcs(\S)$.
While a string set $\S$ can contain exponentially many LCSs compared to the total length $||\S||$ of its strings, 
%% $||\S|| = \sum_{S\in \S} |S|$,
we can readily see the next lemma.

\begin{lemma}[$\Sigma$-DAG for LCSs]\label{lem:lcsdag}
  For any constant $m \ge 1$ and any set
  $\S = \set{S_1, \dots, S_m}\subseteq \Sigma^*$ of $m$ strings, 
  there exists a $\Sigma$-DAG $G$ of polynomial size in $\ell  := \fn{maxlen}(\S)$  such that $L(G) = LCS(\S)$, and $G$ can be computed in polynomial time in~$\ell$. 
\end{lemma}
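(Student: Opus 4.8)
The plan is to build the $\Sigma$-DAG $G$ for $LCS(\mathcal S)$ directly from the classical dynamic-programming table of Irving and Fraser. Recall that for a constant number $m$ of strings, one computes the function $\mathrm{lcs}$ on the grid $[0..n_1]\times\cdots\times[0..n_m]$, where $\mathrm{lcs}(i_1,\dots,i_m)$ is the length of an LCS of the prefixes $S_1[1..i_1],\dots,S_m[1..i_m]$; this table has $O(\ell^m)$ cells and is filled in $O(\ell^m)$ time (times a $\sigma$ or $m$ factor for the character comparisons), which is polynomial in $\ell$ since $m$ is constant. First I would set up the vertex set of $G$ to be (a subset of) the grid cells, with source $s=(0,\dots,0)$ and sink $t=(n_1,\dots,n_m)$, and introduce an edge labeled $c$ from a cell $\vec i=(i_1,\dots,i_m)$ to $\vec i+\vec 1=(i_1+1,\dots,i_m+1)$ whenever $S_1[i_1+1]=\cdots=S_m[i_m+1]=c$ and this ``diagonal match'' move is consistent with extending an optimal solution, i.e.\ $\mathrm{lcs}(\vec i+\vec 1)=\mathrm{lcs}(\vec i)+1$ and $\mathrm{lcs}(\vec i)=\ell - (\text{distance from }\vec i\text{ to }t\text{ in the LCS sense})$. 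To make ``consistent with an optimal solution'' precise, I would also compute the symmetric table $\mathrm{lcs}^{R}(\vec i)$, the LCS length of the corresponding suffixes, and keep a match edge at $\vec i$ only if $\mathrm{lcs}(\vec i) + 1 + \mathrm{lcs}^{R}(\vec i+\vec 1) = \ell$, where $\ell=\mathrm{lcs}(\mathcal S)$.

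Next I would argue correctness of $L(G)=LCS(\mathcal S)$ in two directions. For the inclusion $L(G)\subseteq LCS(\mathcal S)$: any $(s,t)$-path uses exactly $\ell$ match edges (each match edge increments all coordinates and increments the optimal value, and the filter guarantees the path stays on the ``optimal surface''), and reading the labels of the match edges in order yields a common subsequence of length $\ell$, hence an LCS. For the inclusion $LCS(\mathcal S)\subseteq L(G)$: given any LCS $X=c_1\cdots c_\ell$, it is realized by some increasing sequence of match positions in each $S_j$; stringing these together gives a sequence of cells $s=\vec i^{(0)}\prec\vec i^{(1)}\prec\cdots\prec\vec i^{(\ell)}$, and between consecutive matched cells one can slide coordinate-wise (non-match steps) so that the match moves are diagonal; each such match move satisfies the prefix/suffix length equation because $X$ is optimal, so every match edge used survives the filter, and the path lies in $G$. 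A small technical point I would handle carefully is the non-match ``slide'' moves: either I allow axis-parallel edges in $G$ (labeled with a neutral symbol and then argue they do not contribute to $\mathrm{str}$), which breaks the clean ``$\Sigma$-DAG'' definition, or — cleaner — I contract all non-match moves, so that the only vertices kept are the reachable ``match cells'' on the optimal surface plus $s$ and $t$, and an edge from $\vec i$ to $\vec j$ labeled $c$ exists iff $\vec j$ is a coordinatewise-minimal match cell strictly above $\vec i$ on the optimal surface with $S_k[j_k]=c$ for all $k$. With the contracted version every $(s,t)$-path spells a string of length exactly $\ell$, so $L(G)\subseteq\Sigma^\ell$, and property (iii) of a $\Sigma$-DAG (every vertex on some $s$–$t$ path) is exactly what the prefix+suffix optimality filter ensures.

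The size and time bounds are then immediate: there are $O(\ell^m)$ grid cells, hence $O(\ell^m)$ vertices and, in the contracted version, at most $O(\ell^{2m})$ edges (or $O(\sigma\ell^m)$ with a more careful ``next match'' computation); computing $\mathrm{lcs}$, $\mathrm{lcs}^R$, doing the filtering, and extracting the edges all take time polynomial in $\ell$ for fixed $m$. I would state the construction as a lemma with these explicit polynomial bounds, noting $m=O(1)$ is what makes the exponents constant.

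The main obstacle I expect is not the asymptotics but getting the reduction to a bona fide $\Sigma$-DAG exactly right: in the raw DP grid an LCS corresponds to a monotone lattice path that mixes diagonal (match) and axis-parallel (skip) steps, and a naive edge-labeled graph either has unlabeled/neutral edges (violating the $\Sigma$-DAG definition) or, if one keeps only match cells, one must prove that the ``minimal match cell above'' relation both (a) captures every LCS and (b) introduces no spurious strings — in particular that one never ``jumps over'' a forced match. The cleanest way around this is the prefix–suffix optimality filter $\mathrm{lcs}(\vec i)+1+\mathrm{lcs}^R(\vec i+\vec 1)=\ell$ together with the observation that along any optimal monotone path the matched characters are forced to appear in the order dictated by $X$; making that last claim airtight (and handling ties, where several distinct match cells are simultaneously minimal and yield genuinely different continuations — which is precisely why $|LCS(\mathcal S)|$ can be exponential) is the delicate part of the argument.
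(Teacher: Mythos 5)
Your construction is essentially the paper's: it builds the Irving--Fraser DP grid DAG (with the forward/backward LCS values filtering to edges on optimal paths) and then eliminates the non-match moves to get a $\Sigma$-DAG with the same polynomial $O(\ell^m)$-type size and time bounds for constant $m$. The only difference is that the paper disposes of the ``delicate'' contraction step you flag by simply labeling the skip edges with $\eps$ and invoking the standard NFA $\eps$-removal algorithm, obtaining the $\Sigma$-DAG with $O(|\Sigma|\cdot\ell^m)$ edges without any minimal-next-match-cell analysis.
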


\begin{proof}
  By Irving and Fraser's algorithm~\cite{irving1992mlcs:two}, we can construct  a $m$-dimensional grid graph $N$ in $O(\ell^m)$ time and space, where  
  %% \begin{enumerate*}[(i)]
  %% \item
    (i) the source and sink are $s = (0,\dots,0)$ and $t = (|S_1|, \dots, |S_m|)$, respectively;   
  %% \item
    (ii) edge labels are symbols from $\Sigma\cup\set{\eps}$;  
  %% \item
    (iii) the number of edges is $\size(N) = \prod\idx i1m |S_i| \le$ $O(\ell^m)$;  
  %% \item
  and 
    (iv) all of $(s,t)$-paths spell out $LCS(\S)$. 
   %% \end{enumerate*}
  Then, application of the $\eps$-removal algorithm~\cite{hopcroft:ullman79book:automata} yields a $\Sigma$-DAG $G$ in $O(|\Sigma|\cdot \size(N))$ time and space,
  where $G$ has $O(|\Sigma|\cdot \size(N)) = O(|\Sigma| \ell^m)$ edges.
%This completes the proof. 
  %%where $\ell := \max_{X \in \S} |X|$. 
 \end{proof}

\begin{remark}\label{lem:lcsdag:reduce}
  As a direct consequence of \cref{lem:lcsdag}, we observe that if \textsc{Max-Min} (resp.~\textsc{Max-Sum}) \textsc{Diverse String Set} can be solved in $f(M, K, r, \Delta)$ time and $g(M, K, r, \Delta)$ space on a given input DAG $G$ of size $M = \size(G)$, 
  then \textsc{Max-Min} (resp.~\textsc{Max-Sum}) \textsc{Diverse LCSs} on $\S\subseteq \Sigma^r$ can be solvable in $t = O(|\Sigma|\cdot \ell^m + f(\ell^m, K, r, \Delta))$ time and $s = O(\ell^m + g(\ell^m, K, r, \delta))$ space,
  where $\ell = \fn{maxlen}(\S)$,
  since $\size(G) = O(\ell^m)$. 
\end{remark}

From \cref{lem:lcsdag:reduce}, for any constant $m\ge 2$, there exist a polynomial time reduction from \textsc{Max-Min} (resp.~\textsc{Max-Sum}) \textsc{Diverse LCSs} for $m$ strings to \textsc{Max-Min} (resp.~\textsc{Max-Sum}) \textsc{Diverse String Set} on $\Sigma$-DAGs, where the distance measure is Hamming distance.
%%and fpt-

\subsection{Computational complexity}

A problem with parameter $\kappa$ is said to be \textit{fixed-parameter tractable} (FPT) if there is an algorithm that solves it, whose running time on an instance $x$ is upperbounded by $f(\kappa(x))\cdot |x|^c$ for a  computable function $f(\kappa)$ and constant $c > 0$.
A many-one reduction $\phi$ is called an \textit{FPT-reduction} if it can be computed in FPT and
the 
%transformed 
parameter $\kappa(\phi(x))$
%% the parameter $\kappa(\phi(x))$ of transformed problem
is upper-bounded by a computable function of $\kappa(x)$. 
For notions not defined here, we refer 
to Ausiello \textit{et al.}~\cite{ausiello2012complexity} for \textit{approximability} and 
to~Flum and Grohe~\cite{flum:grohe:book2006param:theory} for \textit{parameterized complexity}. 

%% The parameterized class W[1] is the class of all problems FPT-reducible to \textsc{Wsat} with weft-$1$ or \textsc{$p$-Clique}~\cite{flum:grohe:book2006param:theory,downey:fwllows:2012parameterized}. 
%% For notions not found here, we refer to~\cite{vazirani:book2010approx} for approximation algorithms and to \cite{cygan:book2015parameterized,flum:grohe:book2006param:theory} for parameterized complexity.

%%%%%%%%%%%%%%%%%%%%%%%%%%%%
%%\clearpage
\section{Exact Algorithms for Bounded Number of Diverse Strings}
\label{sec:algo}

In this section, we show that both of \textsc{Max-Min} and \textsc{Max-Sum} versions of \textsc{Diverse String Set} problems can be solved by dynamic programming in polynomial time and space in the size an input $\Sigma$-DAG and integers $r$ and $\Delta$ for any constant $K$.
The corresponding results for \textsc{Diverse LCSs} problems will immediately follow from \cref{lem:lcsdag:reduce}.

%%%%%% subcaption
\begin{figure}[t]
  \centering
    \subcaptionbox{An input $\Sigma$-DAG $G_1$\label{fig:exam:main:kdivlcs:a}}{\includegraphics[height=3.5cm]{./fig}}
    \subcaptionbox{An example run of \cref{algo:k:const:dp} for $K = 3$\label{fig:exam:main:kdivlcs:b}}{\includegraphics[height=3.6cm]{./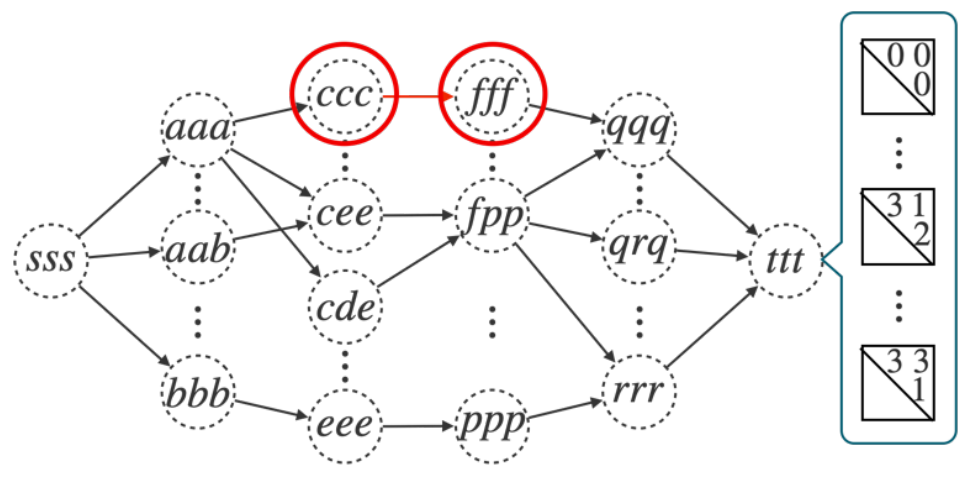}}
    \caption{Illustration of \cref{algo:k:const:dp} based on dynamic programming. 
    In (a) a $\Sigma$-DAG $G_1$ 
    %over $\Sigma = \set{A, B, C, D, E}$
    represents six LCSs in \cref{fig:example:lcs}. 
    In (b), circles and arrows indicate the states of the algorithm, which are $K$-tuples of vertices of $G_1$, and transition between them, respectively. 
    All states are associated with a set of $K\by K$-weight matrices, which are shown only for the sink $ttt$ in the figure. 
    % (a) An input $\Sigma$-DAG $G_1$ over $\Sigma = \set{A, B, C, D, E}$ representing the set $LCS(X_1, Y_1)$ of six LCSs in \cref{fig:example:lcs}. 
    % (b) An example run of the algorithm for $K = 3$, where circles indicate the states of the algorithm and arrows indicate the transition. 
    }\label{fig:exam:main:kdivlcs:all}
\end{figure}
%%%%%%

\subsection{Computing Max-Min Diverse Solutions}
We describe our dynamic programming algorithm for the \textsc{Max-Min Diverse String Set} problem. 
Given an $\Sigma$-DAG $G = (V, E, s, t)$ with $n$ vertices, 
%and $\size(G)$ edges
representing a set $L(G)\subseteq \Sigma^r$ of $r$-strings, we consider integers $\Delta\ge 0, r\ge 0$, and constant $K\ge 1$. 
%% We assume that $K$ is any constant, while $\size(G)$, $r$, and $\Delta$ are part of an input.
%A straightforward method
A brute-force approach could solve the problem in $O(|L(G)|^K)$ time by enumerating all combinations of $K$ $(s,t)$-paths in $G$ and selecting a $\Delta$-diverse solution $\X\subseteq L(G)$. 
%from all such subsets.
However, this is impractical even for constant $K$ because $|L(G)|$ can be exponential in the number of edges.

\paragraph{The DP-table.}
A straightforward method to solve the problem is enumerating all combinations of $K$-tuples of paths from $s$ to $t$ to find the best $K$-tuple. However, the number of such $K$-tuples can be exponential in $r$.
Instead, our DP-algorithm keeps track of only \textit{all possible patterns} of their pairwise Hamming distances.
Furthermore, it is sufficient to record only Hamming distance up to a given threshold $\Delta$. 
%to efficiently solve the problem using dynamic programming,
In this way, we can efficiently computes the best combination of $K$ paths provided that the number of patterns is manageable.
%% Pattern capture the pairwise Hamming distances for all possible $K$-tuples of length-$d$ prefixes in a DAG $G$ and for all $d\;(0\le d\le r)$.
%% Instead, our DP-algorithm relies on \textit{patterns} to efficiently solve the problem using dynamic programming, provided that the number of patterns is manageable.
%% Pattern capture the pairwise Hamming distances for all possible $K$-tuples of length-$d$ prefixes in a DAG $G$ and for all $d\;(0\le d\le r)$.
%%%

Formally, we let $d\;(0\le d\le r)$ be any integer and $\vec P = (P_1, \dots, P_K) \in (E^d)^K$ be any $K$-tuple of length-$d$ paths starting from the sink $s$ and ending at some nodes. Then, we define the \textit{pattern} of $K$-tuple $\vec P$ by the pair $\op{Pattern}(\vec P) = (\vec w, Z)$, where 
\begin{itemize}
\item $\vec w = (w_1, \dots, w_K) \in V_d^K$ is the $K$-tuple of
  %terminal
  vertices in $G$, 
called a \textit{state}, 
  such that for all $i \in [K]$, the $i$-th path $P_i$ starts from the source $s$ and ends at the $i$-th vertex $w_i$ of $\vec w$.
  
\item $Z = (Z_{i,j})_{i<j} \in [\Delta\withpos]^{K\by K}$ is an upper triangular matrix, called the \textit{weight matrix} for $\vec P$.
  For all $1\le i<j\le K$,
  $Z_{i,j} = \min(\Delta, d_{H}(\str(P_i), \str(P_j))) \in [\Delta\withpos]$
  is the Hamming distance between the string labels of $P_i$ and $P_j$ truncated by the threshold $\Delta$.
\end{itemize}

Our algorithm constructs as the DP-table 
$\WT = (\WT(\vec w, Z))_{\vec w, Z}$, which 
is a Boolean-valued table 
%$\WT$. 
%Intuitively, it 
that associates a collection of weight matrices $Z$ to each state $\vec w$ such that $Z$ belongs to the collection if and only if $\WT(\vec w, Z) = 1$. See \cref{fig:exam:main:kdivlcs:all} for example.
Formally, we define 
%the DP-table 
$\WT$ as follows.

\begin{definition}
  $\WT: V^K \by [\Delta\withpos]^{K\by K} \to \set{0,1}$ is a Boolean table such that for every
  $K$-tuple of vertices $\vec w \in V^K $ and
  weight matrix $Z \in [\Delta\withpos]^{K\by K}$,
  %% $(\vec w, Z) \in V^K \by (\Delta\withpos)^{K\by K}$,
  $\WT(\vec w, Z) = 1$ holds if and only if $(\vec w, Z) = \op{Pattern}(\vec P)$ holds
  for some $0\le d\le r$ and some $K$-tuple
  $\vec P \in (E^d)^K$
  %% $\vec P = (P_1, \dots, P_K)$
  of length-$d$ paths from the source $s$ to $\vec w$ in $G$.
\end{definition}

We estimate the size of the table $\WT$. 
Since $Z$ takes at most 
\begin{math}
  \Gamma
  = O(\Delta^{K^2}K^2)
\end{math}
distinct values, it can be encoded in 
\begin{math}
  \log \Gamma
  = O(K^2\log\Delta)
\end{math}
bits. Therefore, $\WT$ has at most
$|V|^K\by\Gamma
= O(\Delta^{K^2}K^2M^K)
%% = O(\Delta^{K^2}K^2M^KK^2\log\Delta) bits 
$
entries, where $M = \size(G)$. 
Consequently, for constant~$K$, $\WT$ can be stored in a multi-dimensional table of polynomial size in $M$ and $\Delta$ supporting random access in 
constant expected time or 
$O(\log\log(|V|\cdot\Delta))$ worst-case time~\cite{willard:1983yfast,cormen2022introduction}. 
% 
% Consequently, for constant~$K$, $\WT$ can be stored in a multi-dimensional table of polynomial size in $M$ and $\Delta$ allowing  
% \begin{math}
%   t
% %%   = O(\log(|V|^K\by\Gamma))
% %%   = O(K\log |V| + \log\Gamma)
% %%   = O(K\log|V| + K^2\log\Delta)
% %%   = O(K^2(\log|V| + \log\Delta))
%   = O(\log|V| + \log\Delta)
% \end{math}
% time random access using, e.g., the {balanced binary search tree}~\cite{cormen2022introduction}. 

%%%%%%%%%%%%%%%%%%%%%%%%%%%%
\begin{algorithm}[t]
  \caption{An exact algorithm for solving \textsc{Max-Min Diverse $r$-String} problem. Given a $\Sigma$-DAG $G = (V, E, s, t)$ representing a set $L(G)$ of $r$-strings and integers $K\ge 1, \Delta \ge 0$, decide if there exists some $\Delta$-diverse set of $K$ $r$-strings in $L(G)$. 
  }\label{algo:k:const:dp}
  %%%%
Set $\WT(\vec s, Z) := 0$ for all $Z \in [\Delta\withpos]^{K\by K}$, and 
set $\WT(\vec s, \op{Zero}) \gets  1$\; 
\For{$d \gets  1, \dots, r$}{
  \For{$\vec v \gets  (v_1,\dots, v_K) \in (V_d)^K$}{
    \For{$(v_1, c_1, w_1) \in \Eout(v_1), \dots, (v_K, c_K, w_K) \in \Eout(v_K)$}{
    %% \For{$e_1 \in \Eout(v_1), \dots, e_K \in \Eout(v_K)$}{
      %% Set $\vec w = (w_1,\dots, w_K)$ such that $w_1 := e_1.dst, \dots, w_1 := e_1.dst$\; 
      Set $\vec w \gets  (w_1,\dots, w_K)$\; 
      \For{$U \in [\Delta\withpos]^{K\by K}$ such that $\WT(\vec v, U) = 1$}{
        Set $Z = (Z_{i,j})_{i<j}$ with 
        $Z_{i,j} \gets  \min(\Delta, U_{i,j} + \Ind{ c_i \not= c_j }), \forall i,j \in [K]$\; 
        Set $\WT(\vec w, Z) \gets  1$ \Comment*{Update}
      }
      %% \For{$U \in (\Delta\withpos)^{K\by K}$ such that $\WT(v_1, \dots, v_K, U) := 1$}{
      %%   Set $\WT(w_1, \dots, w_K, Z) := 1$, where $Z \in (\Delta\withpos)^{K\by K}$ such that
      %%   $Z_{i,j} = \min(\Delta, U_{i,j} + \Ind{ e_i.lab \not= e_j.lab })$
      %%   for all $i<j$ \Comment*{Update}
      %% }
    }
  }
}
Answer YES if $\WT(\vec t, Z) := 1$ and $\Div[min]{d_H}(Z)\ge \Delta$ for some $Z$, and NO otherwise\;
%%\Comment*{Decide}
%% Answer YES if there exists a $Z \in (\Delta\withpos)^{K\by K}$ such that $\WT(\vec t, Z) := 1$ and $\Div[min]{d_H}(Z)\ge \Delta$, and NO otherwise \Comment*{Decide}
\end{algorithm}
%%%%%%%%%%%%%%%%%%%%%%%%%%%%

\paragraph{Computation of the DP-table.}
We denote the $K$-tuples of copies of the source $s$ and sink $t$ by $\vec s := (s,\dots, s)$ and $\vec t := (t,\dots, t) \in V^K$, respectively, as the initial and final states. 
The \textit{zero matrix} $\op{Zero} = (\op{Zero}_{i,j})_{i<j}$ is a special matrix where $\op{Zero}_{i,j} = 0$ for all $i<j$. 
Now, we present the recurrence for the DP-table $\WT$. 
%$\WT = (\WT(\vec w, Z))_{\vec w, Z}$.
%%Now, we present the recurrence for the DP-table $\WT = (\WT(\vec w, Z))_{\vec w, Z}$.

%%%%%%
\begin{lemma}[recurrence for $\WT$]\label{lem:k:const:dp:recurrence}
  For any $0\le d\le r$,
  any $\vec w\in V^K$ and
  any $Z = (Z_{i,j})_{i<j} \in [\Delta\withpos]^{K\by K}$,
  the entry\/ $\WT(\vec w, Z) \in \set{0,1}$ satisfies the following: 
\begin{enumerate}[(1)]
\item \textsf{Base case}: If $\vec w = \vec s$ and $Z = \op{Zero}$, then $\WT(\vec w, Z) = 1$.

\item \textsf{Induction case}: If $\vec w \not= \vec s$ and all vertices in $\vec w$ have the same depth $d\;(1 \le d\le r)$, namely, $\vec w \in V_{d}^K$, then $\WT(\vec w, z) = 1$ if and only if
  there exist
  \begin{itemize}
  \item $\vec v = (v_i)\idx i1K \in V_{d-1}^K$ such that each $v_i$ is a parent of $w_i$, i.e.,
    $(v_i, c_i, w_i) \in E$, and 
    %% $(v_i, c_i, w_i) \in \Eout(v_i)$, and 
  \item $U = (U_{i,j})_{i<j} \in [\Delta\withpos]^{K\by K}$ such that
    %% \begin{enumerate*}[(i)]
    %% \item
      (i) $\WT(\vec v, U) = 1$, and 
    %% \item
      (ii) $Z_{i,j} =$ $\min(\Delta, U_{i,j} + \Ind{ c_i\not= c_j })$ for all $1\le i<j\le K$. 
  %%   \end{enumerate*} 
\end{itemize}
  
  \item \textsf{Otherwise}: $\WT(\vec w, Z) = 0$. 
\end{enumerate}
\end{lemma}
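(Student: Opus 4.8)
The plan is to prove the three cases by unfolding the definition of $\WT$ in terms of the $\op{Pattern}$ of a $K$-tuple of length-$d$ paths, and to argue the induction step by a prefix/extension correspondence between length-$d$ path tuples and length-$(d-1)$ path tuples.

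First, for the base case, I would observe that the only $K$-tuple of length-$0$ paths from $s$ is $\vec P = (\pi_s,\dots,\pi_s)$ where $\pi_s$ is the empty path at the source; its pattern is exactly $(\vec s, \op{Zero})$ since each $\str(\pi_s) = \eps$ and $d_H(\eps,\eps) = 0$. Hence $\WT(\vec s,\op{Zero}) = 1$, and for $d = 0$ no other pair $(\vec w, Z)$ is realized. For the ``otherwise'' case, I would note that by Property~(3) of \cref{rem:rep:strdag} every vertex has a well-defined depth, so a $K$-tuple of paths from $s$ all of the same length $d$ must end in $V_d^K$; if $\vec w \notin V_d^K$ for every $d$, or if $\vec w = \vec s$ but $Z \ne \op{Zero}$, no pattern can equal $(\vec w, Z)$, giving $\WT(\vec w, Z) = 0$.

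The main work is the induction case, and I would prove the two directions separately. For the ``only if'' direction: if $\WT(\vec w, Z) = 1$ with $\vec w \in V_d^K$, $d \ge 1$, then $(\vec w, Z) = \op{Pattern}(\vec P)$ for some $\vec P = (P_1,\dots,P_K) \in (E^d)^K$. Since $d \ge 1$ and $\vec w \ne \vec s$, each $P_i$ is nonempty; write $P_i = P_i' \cdot (v_i, c_i, w_i)$, where $P_i'$ is the length-$(d-1)$ prefix ending at a parent $v_i$ of $w_i$ and $(v_i, c_i, w_i) \in E$ is its last edge. Set $\vec v = (v_1,\dots,v_K) \in V_{d-1}^K$ and let $U$ be the weight matrix of $\vec P' = (P_1',\dots,P_K')$; then $\WT(\vec v, U) = 1$ by definition. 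The key identity is that Hamming distance accumulates one position at a time: $d_H(\str(P_i),\str(P_j)) = d_H(\str(P_i'),\str(P_j')) + \Ind{c_i \ne c_j}$, so after truncation $Z_{i,j} = \min(\Delta, U_{i,j} + \Ind{c_i \ne c_j})$ — here I would spell out that truncation commutes with this kind of increment, i.e. $\min(\Delta, a+b) = \min(\Delta, \min(\Delta,a)+b)$ for $b \in \{0,1\}$, which is exactly what lets the recurrence carry only the truncated matrix $U$. This yields the asserted $\vec v$ and $U$. For the ``if'' direction: given such $\vec v$, $c_i$, and $U$ with $\WT(\vec v, U) = 1$, take a witnessing tuple $\vec P' \in (E^{d-1})^K$ with $\op{Pattern}(\vec P') = (\vec v, U)$ and extend each $P_i'$ by the edge $(v_i, c_i, w_i)$ to get $P_i \in E^d$; by the same accumulation identity, $\op{Pattern}(\vec P) = (\vec w, Z)$, so $\WT(\vec w, Z) = 1$.

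The only real obstacle is the bookkeeping around truncation: one must check that recording $\min(\Delta, \cdot)$ at every step loses no information relevant to the final YES/NO test, i.e. that the truncated recurrence faithfully tracks $\min(\Delta, d_H(\cdot,\cdot))$ for all prefixes. This follows from the elementary monotonicity fact $\min(\Delta, a + b) = \min(\Delta, \min(\Delta, a) + b)$ for integers $a \ge 0$, $b \ge 0$, applied inductively along each path; once this is isolated as a sublemma the rest is a routine prefix-extension bijection argument. I would also remark that the correctness of \cref{algo:k:const:dp} then follows immediately, since the algorithm simply materializes this recurrence layer by layer over $d = 0,1,\dots,r$ and inspects the entries at $\vec t$.
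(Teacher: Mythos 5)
Your proof is correct and takes exactly the route the paper intends: the paper's own ``proof'' of this lemma merely states that it is straightforward by induction on $0\le d\le r$ and omits all details, and your prefix/extension correspondence between length-$d$ and length-$(d-1)$ path tuples, together with the truncation identity $\min(\Delta,a+b)=\min(\Delta,\min(\Delta,a)+b)$ for $b\in\{0,1\}$, is precisely the induction being gestured at. No gaps; your write-up in fact supplies the bookkeeping (in particular the truncation sublemma and the treatment of the ``otherwise'' case via Property~(3) of \cref{rem:rep:strdag}) that the paper leaves implicit.
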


%% %%%%%% fig : computatoin of DP-table by algorithm 1
%% \begin{wrapfigure}[10]{r}{.49\textwidth}
%% %% \hline 
%% \abovecaptionskip=0\baselineskip
%% \belowcaptionskip=0\baselineskip
%%   \centering
%%   \vspace{-1.0\baselineskip}
%%   %%%%
%%     \nofbox{
%%     \includegraphics[height=3.3cm]{./fig}
%%     }
%% %  \hfill
%% % {Example run of \cref{algo:k:const:dp} on $G_1$}
%% % {\includegraphics[height=3.5cm]{./figrundp.pdf}}
%% \caption{An example run of \cref{algo:k:const:dp} based on dynamic programming with $K = 3$ on the $\Sigma$-DAG $G_1$ representing $LCS(X_1, Y_1)$ in \cref{fig:exam:main:kdivlcs:a}. 
%% }\label{fig:exam:main:kdivlcs:b}
%% \end{wrapfigure}

\begin{proof}
  The proof is straightforward by induction on $0\le d\le r$. Thus, we omit the  detail.  
\end{proof}

%% \begin{toappendix}
%% \label{sec:appendix:algo}
%% \end{toappendix}

%% \begin{proofsketch}
%%       Since the proof proceeds by induction on the depth $0\le d\le r$ of the $K$ components of $\vec v$ and is straightforward, we omit the proof. See \cref{sec:appendix:algo} for details. 
%% \end{proofsketch}

%%   \begin{proof}
%%     The lemma is proved by induction on $0\le d\le r$.
%%     (1) Base case: If $d = 0$, the claim is obvious since $s$ is the only vertices of depth $0$, and $\vec P = (\eps)\idx i1K$ is the $K$-tuple of empty paths. Case (3) is obvious. 
%%     (2) Induction case: Suppose that $d > 0$. Let $\vec w = (w_1, \dots, w_K)$. By definition, there are some $K$-tuple $\vec P = (P_i)\idx i1K \in (E^d)^K$ of length-$d$ prefixes.
%%     Since $d > 0$, we let $P_i = Q_i \cdot e_i$ for some $Q_i = P_i[1..d-1] \in E^{d-1}$ and some $e_i = (v_i, c_i, w_i) \in E$. 
%%     By induction hypothesis, we observe that $\WT(\vec v, U) = 1$ for some weight matrix $U = (U_{i,j})_{i<j}$ such that
%%     $U_{i,j} = d_H(Q_i, Q_j)$ (*).
%%     Now, we have that 
%%     \begin{align*}
%%       Z_{i,j}
%%       &= d_H(P_i, P_j)
%%       \\&= \textstyle d_H(Q_i, Q_j) + d_H(P_i[d], P_j[d])
%%       \\&= U_{i,j} + \Ind{ c_i \not= c_j }, 
%%     \end{align*}
%% for all $i,j$ with $i < j$.     
%% Since $\WT(\vec v, U) = 1$, the claim of case (2) follows from the above derivation. Finally, the case (3) is obvious. Combining (1)--(3), the lemma is proved. 
%%   \end{proof}

%% %%%%%%

\cref{fig:exam:main:kdivlcs:b} shows an example run of \cref{algo:k:const:dp} on a $\Sigma$-DAG $G_1$ in \cref{fig:exam:main:kdivlcs:a} representing the string set $L(G_1) = LCS(X_1, Y_1)$,
where squares indicate weight matrices. 
From \cref{lem:k:const:dp:recurrence}, we show \cref{lem:k:const:dp:algo:maxmin} on the correctness and time complexity of \cref{algo:k:const:dp}. 

\begin{theorem}[Polynomial time complexity of Max-Min Diverse String Set]\label{lem:k:const:dp:algo:maxmin}
  For any $K\ge 1$ and $\Delta \ge 0$, 
  %the modified version of 
  \cref{algo:k:const:dp} solves \textsc{Max-Min Diverse String Set} in
  $O(\Delta^{K^2} K^2 M^K (\log|V| + \log\Delta))$
%%   $O(\Delta^{K^2} K^2 M^K)$  
  time and space
  when an input string set $L$ is represented by a $\Sigma$-DAG, 
  where $M = \size(G)$ is the number of edges in $G$. 
\end{theorem}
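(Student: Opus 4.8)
The plan is to establish correctness via \cref{lem:k:const:dp:recurrence} and then account for the running time by bounding the work done at each layer of the DAG. Since \cref{lem:k:const:dp:recurrence} already characterizes the value of every entry $\WT(\vec w, Z)$, the correctness argument reduces to two observations. First, \cref{algo:k:const:dp} computes $\WT$ in increasing order of depth $d$: at the time we process vertices in $V_d$, all entries $\WT(\vec v, U)$ with $\vec v \in V_{d-1}^K$ have already been finalized, so the update in the innermost loop faithfully implements the induction case of the recurrence, and line~1 implements the base case. A minor point to spell out is that every $(s,t)$-path decomposes uniquely into a length-$d$ prefix for each $d$, and that by Property~(3) of \cref{rem:rep:strdag} all $K$ coordinates of a reachable state lie in the same $V_d$, so iterating over $(V_d)^K$ suffices. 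Second, at the sink we have $\WT(\vec t, Z) = 1$ iff $Z$ is the truncated pairwise-distance matrix of some $K$-tuple of $(s,t)$-paths, i.e.\ of some $K$-tuple of strings in $L(G)$ (with repetition allowed); the final test $\Div[min]{d_H}(Z) \ge \Delta$ — meaning $\min_{i<j} Z_{i,j} \ge \Delta$ — then holds for some such $Z$ exactly when $L(G)$ contains a $\Delta$-diverse $K$-subset, because truncation at $\Delta$ does not affect whether a minimum distance reaches the threshold. (One should note the $K$-tuple may a priori have repeated strings, but two equal strings have Hamming distance $0 < \Delta$, assuming $\Delta \ge 1$; the degenerate case $\Delta = 0$ is trivially YES as soon as $|L(G)| \ge K$, which the table also reports correctly.)

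For the running time, I would bound the cost layer by layer. Fix a depth $d$. The algorithm enumerates $K$-tuples $\vec v \in (V_d)^K$ and, for each, all $K$-tuples of outgoing edges $(v_i, c_i, w_i) \in \Eout(v_i)$; summing over $\vec v$, the number of (tuple, edge-tuple) pairs at depth $d$ is $\sum_{\vec v}\prod_i |\Eout(v_i)| = \bigl(\sum_{v \in V_d} |\Eout(v)|\bigr)^K = |E_d|^K$ where $E_d$ is the set of edges leaving depth-$d$ vertices. Since $\sum_d |E_d| = M$ and the function $x \mapsto x^K$ is superadditive on nonnegative reals, $\sum_d |E_d|^K \le M^K$. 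For each such pair, the algorithm iterates over all weight matrices $U$ with $\WT(\vec v, U) = 1$, of which there are at most $\Gamma = O(\Delta^{K^2})$ (up to the $K^2$ factor from the encoding), performs $O(K^2)$ arithmetic to build $Z$, and writes one table entry. With the van~Emde~Boas / $y$-fast-trie random access costing $O(\log\log(|V|\cdot\Delta)) = O(\log|V| + \log\Delta)$ per operation, the total is $O\bigl(M^K \cdot \Delta^{K^2} \cdot K^2 \cdot (\log|V| + \log\Delta)\bigr)$ time; the space is dominated by the table $\WT$ itself, which by the earlier size estimate is $O(\Delta^{K^2} K^2 M^K)$ entries, within the same bound (absorbing the per-entry word size).

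The only genuinely delicate point is the superadditivity step $\sum_d |E_d|^K \le \bigl(\sum_d |E_d|\bigr)^K = M^K$; everything else is bookkeeping. I would state it as a one-line lemma: for nonnegative reals $x_1,\dots,x_\ell$ and integer $K\ge 1$, $\sum_i x_i^K \le (\sum_i x_i)^K$, which is immediate since expanding the right-hand side yields $\sum_i x_i^K$ plus further nonnegative cross terms. The remaining subtlety — that "depth of a vertex" is well defined so that the partition $V = V_0 \cup \dots \cup V_r$ used to index the loops makes sense — has already been handled by \cref{rem:rep:strdag}(3), so I would simply cite it. Finally, I would remark that plugging this bound, with $M = O(|\Sigma|\,\ell^m)$, into \cref{lem:lcsdag:reduce} gives the corresponding polynomial-time guarantee for \textsc{Max-Min Diverse LCSs} with $m$ constant, which is the intended corollary.
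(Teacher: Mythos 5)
Your proposal is correct and takes essentially the same route as the paper, which leaves this proof implicit: correctness is read off from \cref{lem:k:const:dp:recurrence} (processed in increasing depth, with \cref{rem:rep:strdag}(3) justifying the layering), and the running time follows from the table-size estimate $|V|^K\cdot\Gamma = O(\Delta^{K^2}K^2M^K)$ together with your layer-wise count $\sum_d |E_d|^K \le M^K$ and the $O(\log\log(|V|\cdot\Delta))$ access cost, giving exactly the stated bound. The only imprecise spot is the parenthetical on $\Delta = 0$: as written, the algorithm answers YES whenever $L(G)\neq\emptyset$, so the table does not detect the case $1\le|L(G)|<K$ in that degenerate setting — but this is an artifact of the paper's own formulation (which ignores repeated strings entirely), and your handling of repetitions for $\Delta\ge 1$ is a point the paper itself glosses over.
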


\subsection{Computing Max-Sum Diverse Solutions}

We can solve \textsc{Max-Sum Diverse String Set} by modifying  \cref{algo:k:const:dp} as follows.
%%%% 
%We observe that
For computing the \textsc{Max-Sum} diversity, we only need to maintain the sum $z = \sum_{i<j} d_{H}(\str(P_i), \str(P_j))$ of all pairwise Hamming distances 
instead of the entire $(K\by K)$-weight matrix $Z$. 

\myparagraph{The new table $\WT'$:} 
For any $\vec w = (w_1, \dots, w_K)$ of the same depth $0\le d\le r$ and any integer
$0\le z \le rK$,
%% $z \in [rK]\withpos$,
we define: 
$\WT'(\vec w, z) = 1$ if and only if there exists a $K$-tuple of length-$d$ prefix paths $(P_1, \dots, P_K) \in (E^d)^K$ from $s$ to $w_1, \dots, w_K$, respectively, such that the sum of their pairwise Hamming distances is $z$, namely,
$z = \sum_{i<j} d_{H}(\str(P_i), \str(P_j))$.

\begin{lemma}[recurrence for $\WT'$]\label{lem:k:const:dp:recurrence:maxsum}
  For any $\vec w = (w_1, \dots, w_K) \in V^K$ and any integer
  $0\le z \le rK$, 
%%   $z \in [rK]\withpos$,
  the entry $\WT'(\vec w, z) \in \set{0,1}$ satisfies the following: 
\begin{enumerate}[(1)]
\item \textsf{Base case}: If $\vec w = \vec s$ and $z = 0$, then $\WT(\vec w, z) = 1$.

\item \textsf{Induction case}: If $\vec w \not= \vec s$ and all vertices in $\vec w$ have the same depth $d\;(1 \le d\le r)$, namely, $\vec w \in V_{d}^K$, then $\WT(\vec w, z) = 1$ if and only if there exist
  \begin{itemize}
  \item $\vec v = (v_i)\idx i1K \in V_{d-1}^K$ such that each $v_i$ is a parent of $w_i$, i.e.,
    $(v_i, c_i, w_i) \in E$, and 
  \item $0\le u\le rK$ such that
    %% \begin{enumerate*}[(i)]
    %% \item
      (i) $\WT(\vec v, u) = 1$, and 
    %% \item
      (ii) $z = \min(\Delta, u + \sum_{i<j}\Ind{ c_i \not= c_j })$. 
    %% \end{enumerate*} 
  \end{itemize}
  
  \item \textsf{Otherwise}: $\WT(\vec w, z) = 0$. 
\end{enumerate}
%% \begin{enumerate}[(1)]
%% \item If $\vec w = \vec s$ and $Z = \op{Zero}$, \:$\WT(\vec w, Z) = 1$.

%% \item If $\vec w \not= \vec s$ and all vertices in $\vec w$ have depth $1 \le d\le r$, namely, $\vec w \in V_{d}^K$, then $\WT(\vec w, z) = 1$ if and only if
%%     there exist 
%%     a $K$-tuple of vertices $\vec v = (v_i)\idx i1K \in V_{d-1}^K$ and 
%%     an integer $0\le u\le rK$ 
%%     such that
%%     \begin{enumerate*}[(i)]
%%     \item for every $i\in [K]$, $v_i$ is a parent of $w_i$, i.e., $e_i = (v_i, c_i, w_i) \in \Eout(v_i)$, 
%%     \item $\WT(\vec v, u) = 1$, and 
%%     \item $z = u + \sum_{i<j}\Ind{ c_i \not= c_j }$. 
%%     \end{enumerate*}
  
%%   \item Otherwise, $\WT(\vec w, Z) = 0$. 
%% \end{enumerate}
\end{lemma}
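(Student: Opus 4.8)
The plan is to establish the three cases by induction on the common depth $d$ of the coordinates of $\vec w$, after fixing the semantics that condition~(ii) imposes on the table. Because condition~(ii) applies $\min(\Delta,\cdot)$, I read $\WT'(\vec w, z) = 1$ as asserting that some $K$-tuple $\vec P = (P_1, \dots, P_K) \in (E^d)^K$ of length-$d$ prefix paths from $s$ to $w_1, \dots, w_K$ realizes the \emph{$\Delta$-truncated} sum $z = \min(\Delta, \sum_{i<j} d_H(\str(P_i), \str(P_j)))$. This truncation is harmless for the decision problem, since $\Div[sum]{d_H}(\X) \ge \Delta$ holds exactly when the truncated value equals $\Delta$, and it is precisely what keeps $z$ in the range the recurrence expects. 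Two elementary facts drive the argument. First, a \emph{Hamming decomposition}: if $P_i = P_i' \cdot (v_i, c_i, w_i)$ extends a length-$(d-1)$ prefix path $P_i'$ ending at $v_i$, then, since $\str(P_i')$ and $\str(P_j')$ both have length $d-1$ (by the depth property, i.e.\ Property~(3) of \cref{rem:rep:strdag}), we have $d_H(\str(P_i), \str(P_j)) = d_H(\str(P_i'), \str(P_j')) + \Ind{c_i \ne c_j}$, whence $\sum_{i<j} d_H(\str(P_i), \str(P_j)) = \big(\sum_{i<j} d_H(\str(P_i'), \str(P_j'))\big) + \sum_{i<j}\Ind{c_i\ne c_j}$. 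Second, a \emph{truncation identity}: for all reals $a \ge 0$ and $b \ge 0$, $\min(\Delta, a+b) = \min(\Delta, \min(\Delta, a)+b)$.

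For the base case $d = 0$, the only depth-$0$ vertex is $s$, so $\vec w = \vec s$; the unique $K$-tuple of length-$0$ paths spells out $K$ empty strings with all pairwise distances $0$, so the truncated sum is $z = \min(\Delta, 0) = 0$. This yields $\WT'(\vec s, 0) = 1$ and no other entry at $\vec s$, matching cases~(1) and~(3). For the induction case, fix $\vec w \in V_d^K$ with $d \ge 1$ and assume the claim at depth $d-1$. In the forward direction, take any tuple $\vec P$ witnessing $\WT'(\vec w, z) = 1$, split each $P_i$ at its last edge $(v_i, c_i, w_i)$, and set $u := \min(\Delta, \sum_{i<j} d_H(\str(P_i'), \str(P_j')))$; the induction hypothesis gives $\WT'(\vec v, u) = 1$, and combining the Hamming decomposition with the truncation identity (taking $a = \sum_{i<j} d_H(\str(P_i'),\str(P_j'))$ and $b = \sum_{i<j}\Ind{c_i\ne c_j} \ge 0$) gives $z = \min(\Delta, u + \sum_{i<j}\Ind{c_i\ne c_j})$, which is exactly condition~(ii). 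The converse reverses this: given parents $\vec v$, edges $(v_i, c_i, w_i)$, and $u$ with $\WT'(\vec v, u) = 1$ satisfying~(ii), the induction hypothesis supplies length-$(d-1)$ prefix paths $P_i'$ realizing $u$; extending each by $(v_i, c_i, w_i)$ and applying the same two facts shows the extended tuple realizes $z$, so $\WT'(\vec w, z) = 1$. The \textsf{Otherwise} case is vacuous: by the depth property every $s$-to-$w_i$ path has length $\dep(w_i)$, so no length-$d$ prefix tuple can end at a $\vec w$ whose coordinates fail to share depth $d$, and at $\vec w = \vec s$ only $z = 0$ is realizable.

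The step I expect to be the main obstacle, and the one a naive untruncated argument silently skips, is justifying that truncating the running sum at $\Delta$ at \emph{every} depth produces the same Boolean table as truncating only the final sum at the sink. This is exactly the content of the truncation identity $\min(\Delta, a+b) = \min(\Delta, \min(\Delta, a)+b)$ for $b \ge 0$: it lets the already-truncated predecessor value $u = \min(\Delta, a)$ be fed back into the recurrence without altering the truncated outcome $\min(\Delta, a+b)$. Nonnegativity of the increment $b = \sum_{i<j}\Ind{c_i \ne c_j}$ is essential here, since the identity fails otherwise; it is guaranteed because Hamming increments are sums of indicators. This is the same mechanism that validates the entrywise truncation in the Max-Min recurrence of \cref{lem:k:const:dp:recurrence}, specialized to the single scalar $z$ in place of the matrix $Z$, so the only genuinely new point relative to an untruncated sum is verifying that truncation commutes with the incremental update.
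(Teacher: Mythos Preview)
Your proposal is correct and follows the same inductive approach the paper intends: the paper gives no proof for this lemma at all, and for the analogous Max-Min recurrence (\cref{lem:k:const:dp:recurrence}) it writes only ``The proof is straightforward by induction on $0\le d\le r$. Thus, we omit the detail.'' Your argument is precisely that induction, with the details filled in. Your explicit treatment of the truncation identity $\min(\Delta,a+b)=\min(\Delta,\min(\Delta,a)+b)$ for $b\ge 0$ is a point the paper passes over in silence; indeed, the paper's prose definition of $\WT'$ uses the \emph{untruncated} sum while the recurrence in the lemma applies $\min(\Delta,\cdot)$, and your choice to read $\WT'$ under the truncated semantics is exactly what is needed to make the stated recurrence literally true.
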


From the above modification of \cref{algo:k:const:dp} and \cref{lem:k:const:dp:recurrence:maxsum}, we have \cref{lem:k:const:dp:algo:maxsum}. From this theorem, we see that the \textsc{Max-Sum} version of \textsc{Diverse String Set} can be solved faster than the \textsc{Max-Min} version by factor of $O(\Delta^{K-1})$. 

\begin{theorem}[Polynomial time complexity of Max-Sum Diverse String Set]\label{lem:k:const:dp:algo:maxsum}
  For any constant $K\ge 1$, the modified version of \cref{algo:k:const:dp} solves \textsc{Max-Sum Diverse String Set} under Hamming Distance in
  $O(\Delta K^2 M^K(\log |V| + \log \Delta))$
%%   $O(\Delta K^2 M^K)$
  time and space, where $M = \size(G)$ is the number of edges in $G$ and the input set $L$ is represented by a $\Sigma$-DAG. 
\end{theorem}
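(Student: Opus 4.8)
The plan is to mirror the analysis of \cref{algo:k:const:dp} done for the \textsc{Max-Min} case in \cref{lem:k:const:dp:algo:maxmin}, but with the single scalar $z$ in place of the weight matrix $Z$, so the bookkeeping simplifies and the running time improves. First I would verify correctness: by induction on $d$ using the recurrence in \cref{lem:k:const:dp:recurrence:maxsum}, the modified table satisfies $\WT'(\vec w, z) = 1$ iff there is a $K$-tuple of length-$d$ prefix paths from $s$ reaching $\vec w$ whose summed pairwise Hamming distances equal $z$; the base case is immediate since the zero-length prefixes give $z = 0$, and the induction step follows because extending each $P_i$ by one edge labeled $c_i$ increases $\sum_{i<j} d_H(\str(P_i),\str(P_j))$ by exactly $\sum_{i<j} \Ind{c_i \neq c_j}$. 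At the sink, $\WT'(\vec t, z) = 1$ certifies a set $\X \subseteq L(G)$ of $K$ (not necessarily distinct) $r$-strings with $\Div[sum]{d_H}(\X) = z$, so the algorithm answers YES iff some $z \ge \Delta$ is reachable at $\vec t$, which is exactly the \textsc{Max-Sum Diverse String Set} question.

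Next I would bound the table size and running time. The second coordinate $z$ ranges over $\set{0, 1, \dots, rK} = \set{0,\dots,\binom{K}{2}r}$ — actually it suffices to cap it at $\Delta$, giving $O(\Delta)$ distinct values, encodable in $O(\log\Delta)$ bits — and the first coordinate ranges over $V^K$, contributing a factor $O(M^K)$ where $M = \size(G) \ge |V|$. Hence $\WT'$ has $O(\Delta M^K)$ entries, each accessed in $O(\log|V| + \log\Delta)$ time via a van Emde Boas-type structure as in the \textsc{Max-Min} analysis. The triple loop over depths, over $K$-tuples $\vec v \in V_d^K$, over $K$-tuples of outgoing edges, and over reachable values $u$ performs, across all $d$, at most $O(M^K)$ edge-tuple choices times $O(\Delta)$ values of $u$; each such step computes the new $z$ in $O(K^2)$ time (summing $\binom{K}{2}$ indicator terms) and performs one table write. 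This yields the claimed $O(\Delta K^2 M^K(\log|V| + \log\Delta))$ time, and the same bound for space since the table dominates.

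The only mildly delicate point — and the step I'd be most careful about — is the range and truncation of $z$: the statement of \cref{lem:k:const:dp:recurrence:maxsum} writes the cap as $\min(\Delta, \cdot)$ in the induction case while also allowing $z$ up to $rK$ in the quantifier, so I would fix one convention (truncate at $\Delta$ throughout, analogously to the $\min(\Delta, \cdot)$ truncation of each matrix entry in the \textsc{Max-Min} algorithm) and note that truncation is sound because the decision question only asks whether the diversity \emph{reaches} $\Delta$; any larger true value is still recorded as $\Delta$ and still triggers a YES. Everything else is a routine transcription of the \textsc{Max-Min} argument with a scalar replacing the matrix, and the $O(\Delta^{K-1})$ speedup over \cref{lem:k:const:dp:algo:maxmin} is exactly the ratio $\Gamma / \Delta = \Theta(\Delta^{K^2-1}K^2)$ versus $\Theta(\Delta)$ collapsing to a single $\Delta$ factor — I would state this comparison explicitly as the paper does.
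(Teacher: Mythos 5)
Your proposal is correct and follows essentially the same route as the paper, which itself only asserts that the theorem follows from the described modification (scalar $z$ replacing the matrix $Z$) together with the recurrence of \cref{lem:k:const:dp:recurrence:maxsum}; your induction plus the $O(\Delta M^K)$ table-size and per-update accounting simply fills in the details the paper leaves implicit. Your explicit resolution of the $z\le rK$ versus $\min(\Delta,\cdot)$ truncation discrepancy is a sensible reading consistent with the paper's Max-Min treatment and does not change the argument.
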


%%%%%%%%%%%%%%%%%%%%%%%%%%%%
%\clearpage
\section{Approximation Algorithm for Unbounded Number of Diverse Strings}
\label{sec:approx}
%%%%%%

%%%%%
\begin{algorithm}[t]
  \caption{A $(1 - 2/K)$-approximation algorithm for Max-Sum Diversification  for a metric $d$ of negative type on $\X$. 
}\label{algo:localsearch}
\textbf{procedure} \proc{LocalSearch}$(\sig D, K, d)$\;
$\X \gets $ arbitrary $K$ solutions in $\sig D$\;
\For{ $i \gets 1, \dots, \lceil\frac{K(K-1)}{(K+1)} \ln \frac{(K+2)(K-1)^2}{4} \rceil $ }{
  \For{$X \in \X$ such that $\sig D\setminus \X \not= \emptyset$}{
    $Y\gets
    \displaystyle
    \mathop{\fn{argmax}}_{Y \in \sig D\setminus \X}
    \hspace{0.25em}
    \mathop{\textstyle\sum}_{X' \in \X\setminus\set{X}} d(X', Y)$\;  
    %%%
    $\X \gets (\X\setminus\set{X}) \cup \set{ Y }$\;    
  }
}%For
\fn{Output} $\X$; 
%% }
\end{algorithm}
%%%%%

To solve \textsc{Max-Sum Diverse String Set} for unbounded $K$,
we first introduce a local search procedure, proposed by Cevallos, Eisenbrand, and Zenklusen~\cite{cevallos2019improved}, for computing approximate diverse solutions in general finite metric spaces~(see \cite{deza1997geometry:book}). Then, we explain how to apply this algorithm to our problem in the space of equi-length strings equipped with Hamming distance. 
%% We introduce
%% %some
%% notions on metrics according to \cite{deza1997geometry:book}.

Let $\sig D = \set{x_1, \dots, x_n}$ be a finite set of $n\ge 1$ elements.
%%and $d$ be a distance over $\X$. 
A semi-metric is a function $d: \sig D\by\sig D\to \ratpos$ satisfying the following conditions (i)--(iii): 
%% \begin{enumerate*}[(i)]
%% \item
  (i) $d(x,x)=0, \forall x \in \sig D$; % (zero); 
%% \item
  (ii) $d(x,y)=d(y,x), \forall x,y\in \sig D$; % (symmetry). 
%% \item
  (iii) $d(x,z)\le d(x,y) + d(y,z), \forall x,y,z\in \sig D$ (triangle inequalities). 
%% \end{enumerate*}
Consider an inequality condition, called a \textit{negative inequality}: 
%%A \textit{negative inequality} is an inequality condition: 
\begin{align}\textstyle
\vec b^{\top} D\; \vec b 
& \textstyle := \sum_{i<j} b_i b_j d(x_i, x_j) \le 0, 
& \forall \vec b = (b_1, \dots, b_n) \in \zat^n,  
\label{eq:negative:ineq}
\end{align}
where $\vec b$ is a column vector and $D = (d_{ij})$ with $d_{ij} = d(x_i, x_j)$. 
For the vector $\vec b$ above, we define $\sum \vec b := \sum_{i=1}^n b_i$.  A semi-metric $d$ is said to be \textit{of negative type} if it satisfies the inequalities \cref{eq:negative:ineq} for all $\vec b\in \zat^n$ such that $\sum \vec b = 0$.
The class $\mathbbm{NEG}$ of semi-metrics of negative type satisfies the following properties.  

\begin{lemma}[Deza and Laurent~\cite{deza1997geometry:book}] \label{lem:prop:semimetric}
For the class $\mathbbm{NEG}$, the following properties hold: 
%% \begin{enumerate*}
%% \item
  (1) All $\ell_1$-metrics over $\rat^r$ are semi-metrics of negative type in $\mathbbm{NEG}$ for any $r\ge 1$. 
%% \item
  (2) The class $\mathbbm{NEG}$ is closed under linear combination with nonnegative coefficients in $\ratpos$. 
%% \end{enumerate*}
\end{lemma}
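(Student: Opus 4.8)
The plan is to treat the two parts in reverse order, since part~(2) is what powers part~(1). For part~(2), fix a common finite ground set $\sig D = \set{x_1,\dots,x_n}$, let $d_1,\dots,d_k$ be semi-metrics of negative type on $\sig D$, and let $\lambda_1,\dots,\lambda_k \in \ratpos$; set $d := \sum_{h} \lambda_h d_h$. Each semi-metric axiom is preserved: (i) and (ii) are trivial, and the triangle inequality (iii) is a linear inequality in the entries of $d$, so a nonnegative combination of satisfying functions satisfies it. For the negative-type condition, fix $\vec b \in \zat^n$ with $\sum \vec b = 0$; writing $D, D_h$ for the corresponding matrices, $\vec b^{\top} D\, \vec b = \sum_{h} \lambda_h\,(\vec b^{\top} D_h\, \vec b) \le 0$, since each $\vec b^{\top} D_h\, \vec b \le 0$ and $\lambda_h \ge 0$. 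Hence $d \in \mathbbm{NEG}$.

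For part~(1), I would first isolate a \emph{pullback lemma}: if $d$ is a semi-metric of negative type on a finite set $Y$ and $f : X \to Y$ is an arbitrary map from a finite set $X$, then $d^f(x,x') := d(f(x),f(x'))$ is a semi-metric of negative type on $X$. The semi-metric axioms are immediate (note $d^f$ need not be a metric even if $d$ is, which is why we only claim a semi-metric). For negative type, take $\vec b = (b_x)_{x\in X} \in \zat^{|X|}$ with $\sum \vec b = 0$ and aggregate over fibers: for each value $y$ in the image $\set{y_1,\dots,y_m}$ of $f$ put $c_\ell := \sum_{x : f(x) = y_\ell} b_x$, so that $\sum_\ell c_\ell = \sum_x b_x = 0$ and, discarding diagonal contributions via $d(y,y)=0$, $\sum_{x<x'} b_x b_{x'} d(f(x),f(x')) = \sum_{\ell<\ell'} c_\ell c_{\ell'} d(y_\ell,y_{\ell'}) \le 0$.

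The crux is then the base case: the one-dimensional line metric $d_\ell(s,t) = |s-t|$ on any finite subset of $\rat$ is of negative type. Given points which, by symmetry of the condition, we may assume sorted as $x_1 < x_2 < \dots < x_n$, and $\vec b$ with $\sum_i b_i = 0$, write $x_j - x_i = \sum_{k=i}^{j-1}\delta_k$ with $\delta_k := x_{k+1}-x_k \ge 0$, and set $B_k := \sum_{i \le k} b_i$ (so $B_n = 0$, hence $\sum_{j > k} b_j = -B_k$). Swapping the order of summation (partial/Abel summation),
\[
\sum_{i<j} b_i b_j (x_j - x_i)
= \sum_{k=1}^{n-1} \delta_k \Bigl(\sum_{i \le k} b_i\Bigr)\Bigl(\sum_{j > k} b_j\Bigr)
= -\sum_{k=1}^{n-1}\delta_k B_k^2 \le 0 .
\]
This is the step I expect to be the main obstacle — not because it is deep, but because the index bookkeeping of the double sum and the tracking of the sign must be done carefully; everything around it is soft.

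Finally I would assemble part~(1): restricted to any finite $\sig D \subseteq \rat^r$, the $\ell_1$-metric is $d(x,x') = \sum_{t=1}^{r}|x_t - x'_t| = \sum_{t=1}^{r} d_\ell(\pi_t(x),\pi_t(x'))$ where $\pi_t : \rat^r \to \rat$ is the $t$-th coordinate projection, and it is plainly a (semi-)metric. Each summand is of negative type by the pullback lemma applied to $d_\ell$ with $f = \pi_t$, so by part~(2) (all $\lambda_t = 1$) the sum $d$ lies in $\mathbbm{NEG}$; since this holds for every finite $\sig D$, all $\ell_1$-metrics over $\rat^r$ are of negative type. (Alternatively one could invoke Schoenberg's characterization of negative type via Euclidean embeddability of $\sqrt d$, but the route above is self-contained.)
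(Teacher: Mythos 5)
Your proof is correct. Note first that the paper does not prove this lemma at all: it is imported verbatim from Deza and Laurent's book, so there is no internal argument to compare against line by line. What you supply is a self-contained elementary verification, and it is sound: part~(2) is the right soft linearity argument (semi-metric axioms and the negative-type inequality are both preserved under nonnegative combinations); your pullback lemma correctly aggregates coefficients over fibers, using $d(y,y)=0$ to discard pairs mapped to the same point and preserving $\sum_\ell c_\ell = 0$; and the Abel-summation computation $\sum_{i<j} b_i b_j(x_j-x_i) = -\sum_k \delta_k B_k^2 \le 0$ for the sorted line metric is exactly right, including the sign bookkeeping via $\sum_{j>k} b_j = -B_k$. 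It is worth pointing out how this relates to the standard route in Deza--Laurent: there, one shows that every $\ell_1$-metric is a nonnegative combination of cut semimetrics and that each cut semimetric is of negative type; your identity $x_j - x_i = \sum_{k=i}^{j-1}\delta_k$ together with the factorization $\bigl(\sum_{i\le k} b_i\bigr)\bigl(\sum_{j>k} b_j\bigr)$ is precisely the cut decomposition of the line metric in disguise (the $k$-th term is $\delta_k$ times the cut semimetric of $\set{x_1,\dots,x_k}$), so the two arguments are morally the same, with yours reducing to dimension one via projections instead of invoking the cut cone. Your approach buys self-containment, and your pullback lemma has a side benefit for this paper: it would let one conclude directly that Hamming distance (a coordinate-wise sum of pullbacks of the discrete two-point metric) is of negative type, without the explicit isometry into $\set{0,0.5}^{r\sigma}$ used in the paper's Lemma 4.3; the paper's citation, of course, buys brevity.
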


In \cref{algo:localsearch}, we show 
%Cevallos, Eisenbrand, and Zenklusen~\cite{cevallos2019improved}
a local search procedure \textsc{LocalSearch}, proposed by Cevallos~\textit{et al.}~\cite{cevallos2019improved}, for computing a diverse solution $\X\subseteq \sig D$ with $|\X|=K$ approximately maximizing its Max-Sum diversity under a given semi-metric $d: \sig D\by\sig D\to \ratpos$ on a finite metric space $\sig D$ of $n$ points. 
The \textsc{Farthest Point} problem refer to the subproblem for computing $Y$ at Line~5. When the distance $d$ is a semi-metric of \textit{negative type}, they showed the following theorem.
%%on approximation ratio of the procedure \proc{LocalSearch}. 
%in \cref{algo:localsearch}. 
%%As a direct consequence of this result, \cite{cevallos2019improved} showed

\begin{theorem}[Cevallos~\textit{et al.}~\cite{cevallos2019improved}]\label{thm:cevallos2019improved}
  Suppose that $d$ is a metric of negative type over $\X$ in which the \textsc{Farthest Point} problem can be solved in polynomial time.
  For any $K\ge 1$, the procedure~\proc{LocalSearch}  in \cref{algo:localsearch} has approximation ratio $(1 - \frac{2}{K})$. 
\end{theorem}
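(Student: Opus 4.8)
The plan is to split the statement into two pieces: a clean bound for \emph{exact} local optima, and an accounting of the loss caused by stopping after the prescribed, $\mathrm{poly}(K)$-many iterations.

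\textbf{Exact local optima.}
Let $\X$ be a $K$-subset of $\sig D$ admitting no improving single swap, let $\mathcal O$ be an optimal solution, and write $\mathrm{OPT} = \Div[sum]{d}(\mathcal O)$. First I would dispose of the overlap: replace $\X,\mathcal O$ by $\X\setminus\mathcal O$ and $\mathcal O\setminus\X$ (the algorithm's swaps never touch $\X\cap\mathcal O$, and the negative-type vector below has zero entries there), so I may assume $\X\cap\mathcal O=\emptyset$. Set $\mathrm{cross}(\X,\mathcal O)=\sum_{x\in\X}\sum_{o\in\mathcal O}d(x,o)$. For each $x\in\X$, $o\in\mathcal O$, non-improvement of the swap of $x$ for $o$ expands to $\sum_{x'\in\X\setminus\{x\}}d(o,x')\le\sum_{x'\in\X\setminus\{x\}}d(x,x')$. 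Summing over $x\in\X$ with $o$ ranging through $\mathcal O$ via an arbitrary bijection $\pi\colon\X\to\mathcal O$, and splitting off the diagonal terms, rearranges to
\[
\mathrm{cross}(\X,\mathcal O) - \textstyle\sum_{x\in\X}d(x,\pi(x)) \;\le\; 2\,\Div[sum]{d}(\X),
\]
valid for \emph{every} $\pi$. Taking $\pi$ uniformly at random gives $\mathbb E_\pi[\sum_x d(x,\pi(x))]=\tfrac1K\mathrm{cross}(\X,\mathcal O)$, so some $\pi$ has matching cost $\le\tfrac1K\mathrm{cross}(\X,\mathcal O)$, and hence $\mathrm{cross}(\X,\mathcal O)\le\tfrac{2K}{K-1}\Div[sum]{d}(\X)$. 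Finally I would apply the negative-type inequality \eqref{eq:negative:ineq} to $\vec b=\mathbbm 1_{\X}-\mathbbm 1_{\mathcal O}$ (which has $\sum\vec b=0$), giving $\Div[sum]{d}(\X)+\mathrm{OPT}\le\mathrm{cross}(\X,\mathcal O)$. Combining the two bounds yields $\Div[sum]{d}(\X)\ge\tfrac{K-1}{K+1}\mathrm{OPT}$, strictly better than $(1-\tfrac2K)\mathrm{OPT}$, with slack $\tfrac{2}{K(K+1)}\mathrm{OPT}$.

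\textbf{Bounded iterations.}
The output of \cref{algo:localsearch} need not be an exact local optimum, so next I would show that after the prescribed number $T$ of outer iterations the current $\X$ already satisfies $\Div[sum]{d}(\X)\ge(1-\tfrac2K)\mathrm{OPT}$. The right potential is the \emph{gap to the exact-local-optimum guarantee}, $g(\X):=\tfrac{K-1}{K+1}\mathrm{OPT}-\Div[sum]{d}(\X)$. The key lemma to prove is that one outer iteration — a full sweep greedily re-optimizing each of the $K$ positions in turn — multiplies $g(\X)$ by at most $1-\tfrac{K+1}{K(K-1)}$ whenever $g(\X)>0$; this is obtained by rerunning the Step-1 argument along the chain of intermediate sets produced within the sweep (each greedy re-optimization of a position does at least as well as matching that position to $\pi(x)$ for a minimizing bijection), summing the per-position gains so that they telescope, and invoking \eqref{eq:negative:ineq}. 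Since initially $g(\X)\le\tfrac{K-1}{K+1}\mathrm{OPT}\le\mathrm{OPT}$ and it suffices to push $g(\X)$ below the Step-1 slack $\tfrac{2}{K(K+1)}\mathrm{OPT}$ — a reduction by a factor $\Theta(K^2)$ — the number of sweeps needed is $O\!\big(\tfrac{K(K-1)}{K+1}\ln K\big)$, matching the count $T$ hard-coded in \cref{algo:localsearch}. Each sweep is polynomial because Line~5 is exactly an instance of \textsc{Farthest Point}, polynomial-time solvable by hypothesis, and $T$ is polynomial in $K$; hence the algorithm runs in polynomial time and returns a $(1-\tfrac2K)$-approximate diverse solution.

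\textbf{Main obstacle.}
The delicate part is the sweep lemma: one cannot feed the set obtained at the \emph{end} of a sweep into the ``for every bijection $\pi$'' inequality, because the within-set distances controlling the $i$-th greedy swap are measured against the partially updated set, not the original $\X$. Setting up the telescoping over the intermediate sets, choosing the potential to be the gap to $\tfrac{K-1}{K+1}\mathrm{OPT}$ (rather than to $\mathrm{OPT}$, against which no geometric decrease is possible), and checking that the negative-type step survives this modification is where the real work lies and is what pins down the precise iteration count.
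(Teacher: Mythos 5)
This statement is quoted from Cevallos, Eisenbrand and Zenklusen \cite{cevallos2019improved}; the paper offers no proof of its own (it only invokes the result, together with \cref{lem:hamming:negativetype} and \cref{lem:fathest:sum}, to get \cref{thm:k:input:ptas}), so there is no in-paper argument to compare yours against. Judged on its own terms, your first part is correct and is essentially the known analysis of exact local optima: the swap inequalities summed along a bijection, the averaging argument giving $\mathrm{cross}(\X,\mathcal O)\le\frac{2K}{K-1}\Div[sum]{d}(\X)$, and the negative-type inequality \cref{eq:negative:ineq} applied to $\vec b=\mathbbm 1_{\X}-\mathbbm 1_{\mathcal O}$ combine exactly as you say to give $\Div[sum]{d}(\X)\ge\frac{K-1}{K+1}\mathrm{OPT}$ at a true local optimum (your treatment of $\X\cap\mathcal O$ is glossed over but is a standard, repairable technicality).

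The genuine gap is the second part. \cref{thm:cevallos2019improved} is about the procedure in \cref{algo:localsearch} with its hard-coded $\lceil\frac{K(K-1)}{K+1}\ln\frac{(K+2)(K-1)^2}{4}\rceil$ outer iterations, not about an idealized local optimum, and the entire content beyond your Step 1 is the quantitative convergence statement: that each pass (or each swap) shrinks the gap $g(\X)=\frac{K-1}{K+1}\mathrm{OPT}-\Div[sum]{d}(\X)$ by a factor $1-\frac{K+1}{K(K-1)}$. You state this lemma, check that the constants would then match the iteration count, and explicitly defer its proof (``where the real work lies''). But this is precisely the step where the argument of Step 1 does not transfer verbatim, because the within-set distances appearing in the swap inequalities change as the sweep progresses; asserting that the telescoping ``survives the negative-type step'' is the theorem, not a reduction of it. As it stands the proposal proves only the exact-local-optimum bound, which by itself does not yield a polynomial-time $(1-\frac{2}{K})$-approximation, since reaching an exact local optimum may require a number of improving swaps that is not polynomially bounded; the contraction lemma (Cevallos \emph{et al.}'s core contribution) must be proven, not postulated, for the statement to follow.
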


% \begin{theorem}[\cite{cevallos2019improved}]\label{thm:cevallos2019improved}
%   Suppose that $d$ is a metric of negative type over $\X$ in which the Farthest Point problem can be solved in polynomial time.
%   Then,  $\fn{MSD}(\X, d)$ admits a polynomial time approximation scheme (PTAS) with the procedure~\proc{LocalSearch} in Algorithm~\ref{algo:localsearch}. 
% \end{theorem}

We show that the Hamming distance actually has the desired property. 

\begin{lemma}\label{lem:hamming:negativetype}
For any integer $r\ge 1$, the Hamming distance $d_H$ over the set $\Sigma^r$ of $r$-strings is a semi-metric of negative type over $\Sigma^r$. 
\end{lemma}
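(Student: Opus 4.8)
The plan is to first dispatch the routine semi-metric axioms and then establish the negative-type inequality \cref{eq:negative:ineq} by reducing to a single coordinate. Non-negativity, symmetry, $d_H(X,X)=0$, and the triangle inequality for $d_H$ on $\Sigma^r$ are immediate, and I would state them without comment. The substance is the negative-type property, and the key observation is the coordinatewise decomposition $d_H(X,Y) = \sum_{k=1}^{r} d_k(X,Y)$, where $d_k(X,Y) := \Ind{X[k] \neq Y[k]}$ is the pullback of the discrete semi-metric on $\Sigma$ along the $k$-th projection. Since \cref{lem:prop:semimetric}(2) states that $\mathbbm{NEG}$ is closed under non-negative linear combinations, it suffices to show that each $d_k$ is of negative type over $\Sigma^r$.

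For the single-coordinate claim I would argue directly. Fix a finite set $\sig D = \set{x_1, \dots, x_n} \subseteq \Sigma^r$, a coordinate $k \in [r]$, and an arbitrary $\vec b = (b_1, \dots, b_n) \in \zat^n$ with $\sum \vec b = 0$. Partition the indices according to the symbol in position $k$: for each $a \in \Sigma$ set $B_a := \sum_{i \,:\, x_i[k]=a} b_i$, so that $\sum_{a \in \Sigma} B_a = \sum \vec b = 0$. A direct rearrangement of the double sum then gives
\[
  \sum_{i<j} b_i b_j\, d_k(x_i,x_j) \;=\; \sum_{a < a'} B_a B_{a'} \;=\; \tfrac12\Bigl(\bigl(\textstyle\sum_{a} B_a\bigr)^2 - \sum_{a} B_a^2\Bigr) \;=\; -\tfrac12 \sum_{a \in \Sigma} B_a^2 \;\le\; 0,
\]
which is exactly \cref{eq:negative:ineq} for $d_k$. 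Summing this identity over $k \in [r]$ yields the negative-type inequality for $d_H$ itself, with the sharper conclusion $\vec b^{\top} D\, \vec b = -\tfrac12 \sum_{k=1}^{r} \sum_{a \in \Sigma} \bigl(B^{(k)}_a\bigr)^2$, where $B^{(k)}_a$ denotes the group sum for coordinate $k$.

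As an alternative to this computation, one could instead note that $d_k$ equals one half of the $\ell_1$-distance between the one-hot encodings $e_{X[k]}, e_{Y[k]} \in \rat^{\sigma}$ of the $k$-th symbols, so that concatenating these encodings over all $r$ coordinates realizes $d_H$ as an $\ell_1$-metric on $\rat^{r\sigma}$ and \cref{lem:prop:semimetric}(1) applies directly. I do not expect a genuine obstacle here: the only point that needs care is the bookkeeping in the grouping step — correctly expressing $\sum_{i<j}$ over pairs of indices in distinct groups through the group sums $B_a$, and observing that the hypothesis $\sum \vec b = 0$ is precisely what cancels the positive square term. The closure step is a one-line invocation of \cref{lem:prop:semimetric}(2).
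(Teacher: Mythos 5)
Your proof is correct, but it takes a genuinely different route from the paper's. The paper proves the lemma by exhibiting an explicit isometry $\phi$ from $(\Sigma^r, d_H)$ into an $\ell_1$-metric space: each symbol is encoded as a vector with $0.5$ in one position and $0$ elsewhere, the encodings are concatenated over the $r$ coordinates, and then \cref{lem:prop:semimetric}(1) (all $\ell_1$-metrics are of negative type) is invoked; this is essentially the ``alternative'' you sketch at the end, up to the scaling of the one-hot vectors. Your main argument instead decomposes $d_H = \sum_{k=1}^r d_k$ into coordinatewise discrete semi-metrics and verifies the negative-type inequality for each $d_k$ by a direct computation: grouping indices by the $k$-th symbol with group sums $B_a$, the quadratic form becomes $\sum_{a<a'} B_a B_{a'} = -\tfrac12\sum_a B_a^2 \le 0$ under the constraint $\sum\vec b = 0$, and closure under nonnegative combinations (\cref{lem:prop:semimetric}(2)) finishes the argument. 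The bookkeeping checks out ($\sum_{i<j} b_i b_j\, d_k(x_i,x_j)$ does equal $\sum_{a<a'} B_a B_{a'}$, since the ordered double sum over indices in distinct groups is twice either side). What your route buys is self-containedness and a sharper quantitative conclusion, namely the exact identity $\vec b^{\top} D\,\vec b = -\tfrac12\sum_{k}\sum_a (B^{(k)}_a)^2$, without needing the nontrivial fact that $\ell_1$-metrics lie in $\mathbbm{NEG}$; what the paper's route buys is brevity given the cited Deza--Laurent result, plus an explicit $\ell_1$-embedding of Hamming distance that ties in with the isometry-based techniques discussed in its related-work section.
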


\begin{proof}
  Let $\Sigma = [\sigma]$ be any alphabet. 
  We give an isometry $\phi$ (see \cref{subsec:related})
  from the Hamming distance $(\Sigma^r, d_H)$ to the $\ell_1$-metric $(W, d_{\ell_1})$ over a subset $W$ of $\rat^m$ for $m = r\sigma$. For any symbol $i \in \Sigma$, we extend $\phi$ by $\phi_\Sigma(i) := 0^{i-1}(0.5)0^{\sigma-i} \in \set{0,0.5}^\sigma$ be a bitvector with $0.5$ at $i$-th position and $0$ at other bit positions
  such that for each $c,c' \in \Sigma$, 
  \begin{math}
    ||\phi_\Sigma(c) - \phi_\Sigma(c')||_1
    = \Ind{ c \not= c'}. 
  \end{math}
  For any $r$-string $S = S[1]\dots S[r] \in \Sigma^r$, we let
  $\phi(S) := \phi_\Sigma(S[1])\cdots \phi_\Sigma(S[r])
  \in W 
%%   \in \set{0,0.5}^{m}
  $,
  where $W := \set{0,0.5}^{m}$ and $m := r\sigma$. 
For any $S, S' \in \Sigma^r$, we can show 
\begin{inalign}
  d_{\ell_1}(\phi(S), \phi(S'))
  &= ||\phi(S)_j - \phi(S')_j||_1
  = \sum_{i\in[r]} ||\phi_\Sigma(S[i]) - \phi_\Sigma(S'[i])||_1
  &= \sum_{i\in[r]} \Ind{ S[i] \not= S'[i] }
  = d_H(S, S').
\end{inalign}
Thus, $\phi: \Sigma^r\to W$ is an \textit{isometry}~\cite{deza1997geometry:book} from $(\Sigma^r, d_H)$ to $(\set{0,0.5}^m, d_{\ell_1})$. 
%%% 
%% Clearly, $\phi(S)$ is a member of $\rat^m$ with dimension $m = r\sigma$ since $\phi(S) \in \set{0,0.5}^{m}\subseteq \rat^m$.
%%Combining the above arguments, we have that $\phi$ is an isometry from $(\Sigma^r, d_H)$ to $(\set{0,1}^m, d_{\ell_1})$.
By \cref{lem:prop:semimetric}, $\ell_1$-metric is a metric of negative type~\cite{deza1997geometry:book,cevallos2019improved}, and thus, the lemma is proved. 
\end{proof}

%% The remaining thing is efficiently solving the string version of the subproblem, called the \textsc{Farthest String} problem, that given a set $\X'\subseteq \Sigma^r$, asks to find the \textit{farthest} $Y$ from all elements but $X$ in $\X$ by maximizing the sum 
The remaining thing is efficiently solving the string version of the subproblem, called the \textsc{Farthest String} problem, that given a set $\X'\subseteq \Sigma^r$, asks to find the \textit{farthest} $Y$ from all elements in $\X'$ by maximizing the sum 
\begin{inalign}
\Div[sum]{d_H}(\X', Y) := 
\sum_{X' \in \X'} d_H(X', Y)    
\end{inalign}
over all elements $Y \in L(G)\setminus \X'$. 
%% For the class of $r$-strings, we show that \textsc{Farthest String} can be efficiently solved when $K$ is constant. 
For the class of $r$-strings, we show the next lemma. 

\begin{lemma}[\textsc{Farthest $r$-String}]\label{lem:fathest:sum}
  For any $K\ge 1$ and $\Delta \ge 0$, given $G$ and $\X'\subseteq L(G)$, \cref{algo:fathest:sum} computes the farthest $r$-string $Y \in L(G)$ that maximizes
  $\Div[sum]{d_H}(\X', Y)$
%%   $\Div[sum]{d_H}(\X\cup\set{Y})$
  over all $r$-strings in $L(G)$ in $O(K \Delta M)$ time and space, where $M$ is the number of edges in $G$. 
\end{lemma}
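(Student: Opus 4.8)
The plan is to solve the \textsc{Farthest $r$-String} problem by a single-pass dynamic programming over the layered structure of the $\Sigma$-DAG $G$, analogous in spirit to \cref{algo:k:const:dp} but with a far smaller state, because here we only need to track one aggregate quantity. First I would observe that, by linearity of the Hamming distance over positions, for any candidate $r$-string $Y$ spelled out by an $(s,t)$-path $P = (e_1, \dots, e_r)$ we have $\Div[sum]{d_H}(\X', Y) = \sum_{X' \in \X'} d_H(X', Y) = \sum_{p=1}^{r} \big(\sum_{X' \in \X'} \Ind{X'[p] \neq \lab(e_p)}\big)$, i.e.\ the objective decomposes as a sum of per-position contributions, each of which depends only on the label chosen at that position and on the multiset of $p$-th characters of the strings in $\X'$. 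Hence the problem is a longest-path problem in the DAG $G$ where the edge $e = (v, c, w)$ with $\dep(v) = p-1$ carries weight $\mathrm{cnt}_p(c) := |\{X' \in \X' : X'[p] \neq c\}|$; maximizing the total path weight over $(s,t)$-paths yields exactly the farthest string, and the path itself (recovered by standard predecessor bookkeeping) gives $Y$.

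Next I would describe the actual DP. Define, for each vertex $v$ at depth $p$, a table entry indexed by the truncated accumulated distance $\delta \in [\Delta\withpos]$, recording whether there is a length-$p$ prefix path from $s$ to $v$ whose (truncated) objective so far equals $\delta$ — or, slightly more simply, just store at each $v$ the maximum achievable prefix value $\fn{best}(v)$ capped at $\Delta$ together with a predecessor edge attaining it; since we only care whether the final value reaches $\Delta$ (the caller \textsc{LocalSearch} maximizes, and truncation at $\Delta$ is harmless once the paper fixes $\Delta$ to be at least $rK \ge$ the maximum possible sum, or alternatively one keeps the untruncated value which is bounded by $rK = O(K\Delta)$ anyway), a single max-value per vertex suffices. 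Process vertices in topological (equivalently, depth) order $V_0, V_1, \dots, V_r$; initialize $\fn{best}(s) = 0$; for each edge $(v, c, w)$ relax $\fn{best}(w) \gets \max(\fn{best}(w), \min(\Delta, \fn{best}(v) + \mathrm{cnt}_{\dep(v)+1}(c)))$. The per-position counts $\mathrm{cnt}_p(\cdot)$ are precomputed in $O(|\X'|\,r) = O(Kr)$ total time by scanning the $K$ strings once. Correctness follows by a straightforward induction on depth $p$: $\fn{best}(w)$ equals the maximum over all length-$p$ prefix paths ending at $w$ of the truncated partial objective, using the position-decomposition identity above and \cref{rem:rep:strdag}(3) which guarantees all paths into $w$ have the same length so the notion "position $p = \dep(v)+1$" is well defined along every incoming edge. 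Reading off $\max_Z$-style the value at the sink $t$ and tracing predecessors recovers an optimal $Y$; if one needs $Y \in L(G)\setminus\X'$ rather than $L(G)$, note that when the optimum is attained only by strings already in $\X'$ one can either argue this does not occur in the \textsc{LocalSearch} invocation, or break ties / enumerate the $O(K)$ best paths — I would state the simple version and remark on this point.

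For the complexity bound claimed in the lemma: the table has one entry per vertex (or $O(\Delta)$ entries per vertex if one keeps the Boolean-over-$\delta$ formulation), so $O(|V|\Delta)$ cells; each edge is relaxed once, costing $O(1)$ (or $O(\Delta)$ in the Boolean formulation) per edge, for $O(M\Delta)$ work on edges; precomputing counts is $O(Kr) = O(K\Delta)$ assuming $r \le \Delta$ in the regime of interest (or simply absorb $r$); summing gives $O(K\Delta M)$ time and space, matching the statement, with each arithmetic operation on numbers bounded by $\max(rK,\Delta)$ costing $O(1)$ word-RAM operations. I would also remark that this running time is dominated by — indeed much smaller than — the per-iteration cost of the outer \textsc{LocalSearch} loop, so plugging \cref{algo:fathest:sum} into \cref{algo:localsearch} and invoking \cref{thm:cevallos2019improved} together with \cref{lem:hamming:negativetype} yields the claimed PTAS for \textsc{Max-Sum Diverse String Set} (and, via \cref{lem:lcsdag:reduce}, for \textsc{Diverse LCSs}).

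The main obstacle I anticipate is the bookkeeping around the constraint $Y \notin \X'$ and the truncation at $\Delta$: one must make sure that capping partial sums at $\Delta$ does not lose the true maximizer (clean if the algorithm is allowed to treat "objective $\ge \Delta$" as saturation, which is exactly what the Boolean table indexed by $[\Delta\withpos]$ does, but needs a sentence), and that excluding the $K$ strings of $\X'$ from the feasible set can be handled within the same time budget — the cleanest route is probably to keep, at the sink, not one optimal value but enough information (e.g.\ the top $K{+}1$ distinct path-strings, or a per-vertex set of at most $K$ "forbidden continuations") to guarantee a feasible choice exists, or to simply observe that in the local-search context a maximizer outside $\X'$ always exists whenever $\sig D \setminus \X' \neq \emptyset$ and the metric separates points. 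Everything else is routine layered-DAG dynamic programming.
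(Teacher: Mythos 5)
Your proposal is correct and takes essentially the same approach as the paper: \cref{algo:fathest:sum} is exactly this layered dynamic program over the $\Sigma$-DAG that accumulates the (truncated) sum of mismatches against the fixed strings, which the paper justifies tersely as \cref{algo:k:const:dp} with all but one path fixed, recovering $Y$ by parent-pointer traceback. Your per-position count precomputation and single max-per-vertex table are only minor refinements of that DP (they even sharpen the bound to $O(\Delta M + Kr)$), and your remarks on the truncation at $\Delta$ and on $Y\notin\X'$ are more careful than, but consistent with, the paper's treatment.
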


\begin{proof}
  The procedure in \cref{algo:fathest:sum} solves the decision version of \textsc{Max-Sum Farthest $r$-String}. Since it is obtained from \cref{algo:k:const:dp} by fixing $K-1$ paths and searching only a remaining path in $G$, its correctness and time complexity immediately follows from that of \cref{lem:k:const:dp:algo:maxmin}. It is easy to modify \cref{algo:fathest:sum} to compute such $Y$ that maximizes $\Div[sum]{d_H}(\X', Y)$ by recording the parent pair $(v, y)$ of each $(w, z)$
  %during the computatoin,
  and then tracing back. 
  %a path. 
\end{proof}

%%%%%%%%%%%%%%%%%%%%%%%%%%%%
\begin{algorithm}[t]
  \caption{An exact algorithm for solving the \textsc{Max-Sum Farthest $r$-String} problem. 
    Given a $\Sigma$-DAG $G$ for a set $L(G)\subseteq \Sigma^r$, a set $\X = \set{X_1, \dots, X_K}\subseteq \Sigma^r$, and an integer $\Delta \ge 0$, it decides if there exists a $Y \in L(G)$ such that $\Div[sum]{d_H}(\X, Y) \ge \Delta$. 
%%     Given a $\Sigma$-DAG $G = (V, E, s, t)$ representing a set $L(G)$ of $r$-strings, a set $\X = \set{X_1, \dots, X_K}$ of $r$-strings, and an integer $\Delta \ge 0$, it decides if there exists some $r$-string $Y$ in $L(G)$ such that $\Div[sum]{d_H}(\X\cup\set{Y}) \ge \Delta$, where $[\Delta]_+ = [\Delta]\withpos$. 
}\label{algo:fathest:sum}
%%%%
Set $\WT(s, z) := 0$ for all $z \in [\Delta]_+$, and 
$\WT(s, 0) := 1$\; 
\For{$d := 1, \dots, r$}{
  \For{$v \in V_d$ and $(v, c, w) \in \Eout(v)$}{
%%   \For{$v \in V_d$ and $e \in \Eout(v)$}{
%%       Set $w := e.dst$\; 
      \For{$0\le u\le \Delta$ such that $\WT(v, u) := 1$}{
        Set $\WT(w, z) := 1$ for $z := u + \sum_{i \in [K]} \Ind{ c \not= X_i[d] }$ \Comment*{Update}
%%         Set $\WT(w, z) := 1$ for $z := u + \sum_{i \in [K]} \Ind{ e.lab \not= X_i[d] }$ \Comment*{Update}
      }%% U
  }%% v
  %%   }%% e 
  %% }%% v
}%% d
Answer YES if $\WT(t, \Delta) = 1$, and NO otherwise \Comment*{Decide}
\end{algorithm}
%%%%%%%%%%%%%%%%%%%%%%%%%%%%

Combining \cref{thm:cevallos2019improved},  \cref{lem:fathest:sum}, and \cref{lem:hamming:negativetype}, we obtain the following theorem on the existence of a \textit{polynomial time approximation scheme} (PTAS)~\cite{ausiello2012complexity} for \textsc{Max-Sum Diverse String Set} on $\Sigma$-DAGs.
From \cref{thm:k:input:ptas} and \cref{lem:lcsdag:reduce}, the corresponding result for \textsc{Max-Sum Diverse LCSs} will immediately follow. 

\begin{theorem}[PTAS for unbounded $K$]\label{thm:k:input:ptas}
  When $K$ is part of an input, \textsc{Max-Sum Diverse String Set} problem on a $\Sigma$-DAG $G$ admits PTAS.
\end{theorem}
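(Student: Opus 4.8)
The claim concerns the optimization version of \textsc{Max-Sum Diverse String Set} — maximize $\Div[sum]{d_H}(\X)$ over $\X\subseteq L(G)$ with $|\X|=K$ — so the plan is to exhibit, for every fixed $\eps>0$, a polynomial-time $(1-\eps)$-approximation. First I would dispose of feasibility: if $|L(G)|<K$ there is no solution, and this is testable in polynomial time by enumerating the distinct strings of $L(G)$ with polynomial delay (a powerset-construction traversal of the layered DAG $G$) until $K$ of them are produced or the language is exhausted. Then I would split on the size of $K$ relative to the constant $K_\eps:=\lceil 2/\eps\rceil$. If $K\le K_\eps$, I would run the exact dynamic program of \cref{lem:k:const:dp:algo:maxsum} with the diversity bound set to the maximum possible value $\binom{K}{2}r$ (so that no truncation occurs) and read off the largest $z$ with $\WT'(\vec t,z)=1$; this is the exact optimum. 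Since $K$ is then bounded by the constant $K_\eps$, the running time $O(\Delta K^2 M^K(\log|V|+\log\Delta))$ is polynomial in the input size for fixed $\eps$ (the exponent of $M=\size(G)$ being $O(1/\eps)$, which a PTAS permits).

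The substantive regime is $K>K_\eps$, where I would invoke \textsc{LocalSearch} (\cref{algo:localsearch}) on the ground set $\sig D:=L(G)$ with the metric $d:=d_H$. Two preconditions of \cref{thm:cevallos2019improved} must be checked: that $d_H$ restricted to $L(G)\subseteq\Sigma^r$ is a metric of negative type — this is \cref{lem:hamming:negativetype}, noting that distinctness of strings makes $d_H$ a genuine metric on $L(G)$ and that the negative-type property is inherited by subsets — and that the \textsc{Farthest Point} subproblem (here \textsc{Farthest $r$-String}) is solvable in polynomial time — this is \cref{lem:fathest:sum}. Then \cref{thm:cevallos2019improved} yields an output $\X$ with $\Div[sum]{d_H}(\X)\ge(1-\tfrac{2}{K})\mathrm{OPT}>(1-\tfrac{2}{K_\eps})\mathrm{OPT}\ge(1-\eps)\mathrm{OPT}$. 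For the running time, the outer loop of \cref{algo:localsearch} runs $O(K\log K)$ times, each phase makes $K$ oracle calls, and each call costs polynomial time by \cref{lem:fathest:sum} (with the distance bound set to the trivial maximum $O(rK)$); hence the whole procedure is polynomial in $K$, $r$, and $M$. Finally, via \cref{lem:lcsdag:reduce} the same PTAS transfers to \textsc{Max-Sum Diverse LCSs}.

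The step I expect to be the main obstacle is reconciling the oracle we have with the one \textsc{LocalSearch} actually needs. In the swap for $X\in\X$, \cref{algo:localsearch} requires the string maximizing $\sum_{X'\in\X\setminus\{X\}}d_H(X',Y)$ over $Y\in L(G)\setminus\X$, whereas \cref{algo:fathest:sum} (underlying \cref{lem:fathest:sum}) maximizes over all of $L(G)$; a maximizer lying inside $\X\setminus\{X\}$ would create a duplicate and drop $|\X|$ below $K$, invalidating the analysis. I would patch this by augmenting the DP state $(w,z)$ of \cref{algo:fathest:sum} with one extra coordinate recording, for the current length-$d$ path, which of the $\le K$ forbidden strings its label is still a prefix of — equivalently, its label prefix classified among the $O(Kr)$ distinct prefixes of the forbidden strings, plus a sink symbol $\bot$ once it has deviated from all of them. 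Since the objective is additive over positions, this extension is correct, it blows up the state space only by a factor $O(Kr)$, and restricting the accepted paths to class $\bot$ forces the returned string to lie outside $\X$. With this modification the three cited results combine cleanly into the claimed PTAS.
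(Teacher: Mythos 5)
Your proof follows essentially the same route as the paper's: the same case split on $K$ versus $2/\eps$, using the exact dynamic program for constant $K$ in one case and \textsc{LocalSearch} (\cref{algo:localsearch}) equipped with the farthest-string oracle (\cref{lem:fathest:sum}) and the negative-type property of $d_H$ (\cref{lem:hamming:negativetype}) via \cref{thm:cevallos2019improved} in the other. Your additional patch forcing the oracle to maximize over $L(G)\setminus\X$ rather than all of $L(G)$ is a sound refinement of a detail the paper's proof glosses over, not a different approach.
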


\begin{proof}
We show the theorem following the discussion of \cite{cevallos2019improved,hanaka2023framework}. Let $\eps > 0$ be any constant. Suppose that $\eps < 2/K$ holds. Then, $K < 2/\eps$, and thus, $K$ is a constant. In this case, by~\cref{lem:k:const:dp:algo:maxmin}, we can exactly solve the problem in polynomial time using \cref{algo:k:const:dp}. Otherwise, $2/K \le \eps$. Then, the $(1-2/K)$ approximation algorithm in \cref{algo:localsearch} equipped with \cref{algo:fathest:sum} achieves factor $1-\eps$ since $d_H$ is a negative type metric by \cref{lem:hamming:negativetype}. Hence, \textsc{Max-Sum Diverse String Set} admits a PTAS. This completes the proof. 
\end{proof}

%%%%%%%%%%%%%%%%%%%%%%%%%%%%
\section{FPT Algorithms for Bounded Number and Length of Diverse Strings}
\label{sec:fpt}
%%%%%%

In this section, we present
%FPT
fixed-parameter tractable (FPT) 
algorithms for the \textsc{Max-Min} and \textsc{Max-Sum Diverse String Set}
%%problems
parameterized with combinations of 
$K$ and $r$. 
%$K$, $r$, and $\Delta$.
Recall that a problem parameterized with $\kappa$ is said to be \textit{fixed-parameter tractable} if there exists an algorithm for the problem running on an input $x$ in time $f(\kappa(x))\cdot |x|^c$ for some computable function $f(\kappa)$ and constant $c > 0$~\cite{flum:grohe:book2006param:theory}.

For our purpose, we combine the \textit{color-coding technique} by Alon, Yuster, and Zwick~\cite{alon1995color} and
%% the dynamic programming algorithms
the algorithms in \cref{sec:algo}.
Consider a random $C$-coloring $c:\Sigma \to C$ from a set $C$ of $k\ge 1$ colors, which assigns a color $c(a)$ chosen from $C$ randomly and independently  to each $a\in \Sigma$. By applying this $C$-coloring to all each edges of an input $\Sigma$-DAG $G$, we obtain the $C$-colored DAG, called \textit{a $C$-DAG}, and denote it by $c(G)$. We show a lemma on reduction of $c(G)$.
%% We denote by $c(G)$ the colored $C$-DAG obtained from $G$ by coloring all edges with~$c$.
%% Before presenting the main result of this section,
%Let $C = \set{1, \dots, rK}$ be any set.
%% Then, we have the next lemma.
%%   %%an input $\Sigma$-DAG $G$ with $M\ge 1$ edges using colors from
%%   such that the label of each edge is assigned randomly and independently a color chosen from the set $C = \set{1, \dots, rK}$, followed by the two phases below.
%%   %  $C$-coloring $c: \Sigma\to C$

\begin{lemma}[computing a reduced $C$-DAG in FPT]\label{lem:const:trie}
  For any set $C$ of $k$ colors, there exists some $C$-DAG $H$ obtained by reducing $c(G)$ such that $L(H) = L(c(G))$ and $||H|| \le k^r$. Furthermore, such a $C$-DAG $H$ can be computed from $G$ and $C$ in $t_\fn{pre} = O(k^r\cdot \size(G))$ time and space.  
\end{lemma}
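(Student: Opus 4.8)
The plan is to reduce the colored DAG $c(G)$ by merging vertices that are indistinguishable with respect to the language they generate, after first observing that colors, not the original alphabet, determine everything we care about going forward.

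\textbf{Setup.} First I would note that since $L(G)\subseteq\Sigma^r$, by Property (3) of \cref{rem:rep:strdag} every vertex $v$ of $G$ (hence of $c(G)$) has a well-defined depth $\dep(v)\in\{0,\dots,r\}$, and the vertex set partitions as $V_0=\{s\},\dots,V_r=\{t\}$. The $C$-DAG $c(G)$ has exactly the same vertices and edges as $G$, only with each edge label $a\in\Sigma$ replaced by its color $c(a)\in C$; so the "colored language" $L(c(G))\subseteq C^r$ is the image of $L(G)$ under the symbol-wise recoloring map. This is all we need $c(G)$ to represent.

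\textbf{The reduction.} The key step is to collapse $c(G)$ layer by layer from the sink toward the source. For a vertex $v$, let $\Lambda(v)\subseteq C^{r-\dep(v)}$ be the set of color-strings spelled by paths from $v$ to $t$. I would merge two vertices $u,v$ of the same depth whenever $\Lambda(u)=\Lambda(v)$: redirect all incoming edges of $v$ to $u$ and delete $v$. This clearly preserves $L(c(G))$: any $(s,t)$-path through $v$ corresponds to one through $u$ spelling the same color-string, and vice versa. Performing this bottom-up (processing depth $r$, then $r-1$, down to $0$), once layers $d+1,\dots,r$ are fully reduced, two depth-$d$ vertices have equal $\Lambda$ iff they have, for every color $\gamma\in C$, the same set of depth-$(d+1)$ successors reachable via a $\gamma$-edge — a finite, checkable condition. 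After the reduction, call the result $H$; it is a $C$-DAG with $L(H)=L(c(G))$.

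\textbf{Size bound.} At each depth $d$, distinct surviving vertices have distinct sets $\Lambda(v)\subseteq C^{r-d}$, and more to the point each surviving $v$ is uniquely identified by $\Lambda(v)$, a nonempty subset of $C^{r-d}$; but we can do better by a direct counting argument: the total number of $(s,v)$-color-prefixes partitions among depth-$d$ vertices, and after reduction from the source side (or by the standard DAG-collapse argument) the number of distinct length-$d$ color-prefixes is at most $k^d$, and each is realized at a unique vertex, so $|V_d|\le k^d$ after a symmetric top-down pass; each vertex has at most... — I would instead argue the clean bound directly: each layer $V_d$ has at most $\min(k^d, k^{r-d})$ vertices once both passes are done, and the number of edges between consecutive layers is at most $|V_d|\cdot k$ (at most one $\gamma$-edge per color from each vertex, by the successor-set characterization), so $\|H\|=\size(H)\le \sum_{d} k\cdot k^{\,d}\le k\cdot\frac{k^{r}}{k-1}\le k^{r+1}$; tightening the indices (counting edges by their head layer and noting the last layer is a single vertex) gives $\|H\|\le k^r$ as claimed. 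I would double-check the exact index bookkeeping here, since this is where the precise constant in the exponent comes from.

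\textbf{Running time.} Building $c(G)$ from $G$ is $O(\size(G))$. The reduction is a bottom-up sweep: at each depth we group vertices by their (color-indexed) successor signatures, which are vectors of size $k$ of pointers into the already-reduced next layer; using radix/hash bucketing this costs $O(k\cdot|V_d|)$ plus the cost of touching incoming edges, and summed over all layers and bounded by the size of $c(G)$ before reduction this is $O(k\cdot\size(G))$ per layer in the worst case, i.e. $O(k^r\cdot\size(G))$ overall once we account for the fact that intermediate layers may still be large before collapsing — matching $t_{\mathrm{pre}}=O(k^r\cdot\size(G))$. I expect the main obstacle to be nailing the layer-by-layer size and time accounting so that both the $k^r$ edge bound and the $O(k^r\cdot\size(G))$ time bound come out with the stated exponents rather than $k^{r+1}$ or an extra factor; the merging correctness itself is routine.
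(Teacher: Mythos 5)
There is a genuine gap, and it sits exactly where you flagged uncertainty: the size bound. Your reduction only \emph{merges} vertices of $c(G)$ that are equivalent (equal suffix languages $\Lambda(v)$ bottom-up, and ``symmetrically'' equal prefix sets top-down). But $c(G)$ is a nondeterministic object, and quotienting an NFA-like DAG by language-equivalence does not make it small: at depth $d$ the surviving vertices are in bijection with the \emph{distinct subsets} $\Lambda(v)\subseteq C^{r-d}$ realized in $G$, whose number is bounded only by $|V_d|$ (up to $2^{k^{r-d}}$), not by $k^{r-d}$. Likewise, your claim that after the top-down pass ``each length-$d$ color-prefix is realized at a unique vertex'' fails: two unmerged vertices can have overlapping but unequal prefix sets, so one prefix can still lead to several vertices, and a single vertex can retain several outgoing edges with the same color $\gamma$ going to inequivalent successors -- so neither $|V_d|\le\min(k^d,k^{r-d})$ nor the ``one $\gamma$-edge per color per vertex'' edge count holds for the merged DAG. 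Concretely, if $\Sigma$ is large and the colored suffix languages of the depth-$d$ vertices of $G$ are pairwise distinct, your $H$ has essentially the same size as $c(G)$, which can far exceed $k^r$; so the construction does not meet $\|H\|\le k^r$.

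What the bound really requires is \emph{determinization}, not minimization: one must build the trie of the colored language $L(c(G))\subseteq C^r$, whose nodes are the distinct color-prefixes (at most $k^d$ at depth $d$), and this is what the paper does -- a breadth-first, subset-construction-style sweep that maintains a correspondence $\varphi$ between each trie node and the \emph{set} of $G$-vertices reachable by spelling that color-prefix, finally identifying all leaves into the sink. This is also where the stated time bound comes from: there are $O(k^r)$ prefixes, and each is expanded by scanning outgoing edges of its associated vertex set, giving $t_\fn{pre}=O(k^r\cdot\size(G))$. Your time accounting, by contrast, does not add up even on its own terms ($r$ layers at $O(k\cdot\size(G))$ each is $O(rk\cdot\size(G))$, not $O(k^r\cdot\size(G))$); the target bound is pattern-matched rather than derived. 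To repair the proof, replace the merge-based reduction by the prefix-trie (subset) construction and bound the trie size by $\sum_{d\le r}k^d=O(k^r)$ nodes and edges.
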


\begin{proof}
  %Since $L(G)$ consists of $r$-strings,
  We show a proof sketch. 
  Since $L(G)\subseteq \Sigma^r$, we see that the $C$-DAG $c(G)$ represents $L(c(G)) \subseteq C^r$ of size at most $||L(c(G))|| \le k^r$. By \cref{rem:rep:strdag}, there exists a $C$-DAG $H$ for $L(H) = L(c(G))$ with at most $k^r$ edges. 
  However, it is not straightforward how to compute such a succinct $H$ directly from $G$ and $c$ in $O(k^r\cdot\size(G))$ time and space 
since $||L(G)||$ can be much larger than $k^r + \size(G)$.
%%For this purpose, we use a procedure \proc{BuildColoredTrie} in \cref{algo:reduced:dag} (see Appendix B). 
We build a trie $T$ for $L(H)$ top-down using breadth-first search of $G$ from the source $s$ by maintaining a correspondence
$\varphi \subseteq V\by U$
between vertices $V$ in $G$ and vertices $U$ in $T$ (\cref{fig:fptalgo}). Then, we identify all leaves of $T$ to make the sink $t$. This runs in $O(k^r\cdot\size(G))$ time and $O(k^r + \size(G))$ space.
%% See \cref{sec:appendix:fpt} for the details. 
\end{proof}

\begin{toappendix}
  In \cref{algo:reduced:dag}, we present the procedure \proc{BuildColoredTrie} in the proof for \cref{lem:const:trie} for computation of a reduced $C$-DAG from an input $\Sigma$-DAG and a random coloring $c$.
The procedure uses a \textit{trie} for a set $L$ of strings, which is a (deterministic) finite automaton in the form of a rooted tree that represents strings in $L$ as the string labels spelled out by all paths from the root to its leaves.
  An example of the execution of the procedure will be found in \cref{fig:fptalgo} of \cref{sec:fpt}.
%% We show the next lemma. 

%%% algo : construct a trie 
\begin{algorithm}[t]
  \caption{The procedure \proc{BuildColoredTrie} in \cref{lem:const:trie}.
    It computes a $C$-DAG $H$ such that $L(H) = L(c(G))$ and $||H|| \le k^r$ from a $\Gamma$-DAG $G$ and a coloring $c: \Sigma \to C$ with $|C| = k$ by incrementally constructing
    the trie $T = (U, \op{goto}, root)$ storing $L(c(G))$ and 
    the correspondence $\varphi \subseteq U\by V$ between vertices of $T$ and $G$. 
    (See the proof of \cref{thm:k:r:delta:const:fpt:maxmin})
}\label{algo:reduced:dag}
  %%%%
  \Procedure \proc{BuildColoredTrie}$(G = (V, E, s, t), c:\Sigma\to C)$\;
  \KwWork{A trie $T = (U = \bigcup_{d} U_d, \op{goto}, root)$\;}
  $root \gets$ a new vertex; 
  $U_0 \gets \set{root}$;
  $\varphi(root) \gets \set{s}$\; 
  $\op{goto} \gets \emptyset$; 
  $\op{visited} \gets \emptyset$\; 
  \For{$d := 1, \dots, r$}{
    $U_d \gets \emptyset$\; 
    \For (\brcomment{A vertex $x$ in $T$}) {$x \in U_{d-1}$ and $v \in \varphi(x)$}{
        \iIf{$\op{visited}(v) \not= \bot$}{
        \textbf{continue}\; 
        }
        $\op{visited}(v) \gets 1$ 
        \For  (\brcomment{An edge $e$ in $G$}) {$e = (v, a, w) \in \Eout(v)$}{
            $c \gets c(a)$\; 
            \If{$\op{goto}(x, c) = \bot$}{
              $y \gets$ a new vertex in $T$; 
              $\op{goto}(x, c) \gets y$; 
              $U_d \gets U_d \cup\set{ y }$\; 
            }
            $y \gets \op{goto}(x, c)$ \Comment*{a vertex $y$ in $T$}
            $\varphi(y) \gets \varphi(y) \cup\set{ w }$\; 
        }%%For e
    }%%For x, v 
  }%%For d
  Let $T = (U, \op{goto}, root)$ be a trie with a vertex set $U = \bigcup_{d} U_d$ and an edge set $\op{goto}$, and the root\;  
  Return the $\Sigma$-DAG $H$ obtained from the trie $T$ by merging all leaves\; 
\end{algorithm}
%%%
\end{toappendix}

%%%%%%%%%%%%%
\begin{toappendix}
\label{sec:appendix:fpt}
%%% exp : counter 
\pagefigfptalgo=\thepage
%% \end{toappendix}

%%%%%% subcaption
\begin{figure}[t]
  \abovecaptionskip=-0.0\baselineskip
  %\belowcaptionskip=-0.5\baselineskip
  \centering
  \vspace{-0.5\baselineskip}
  \includegraphics[width=0.9\textwidth,angle=0]{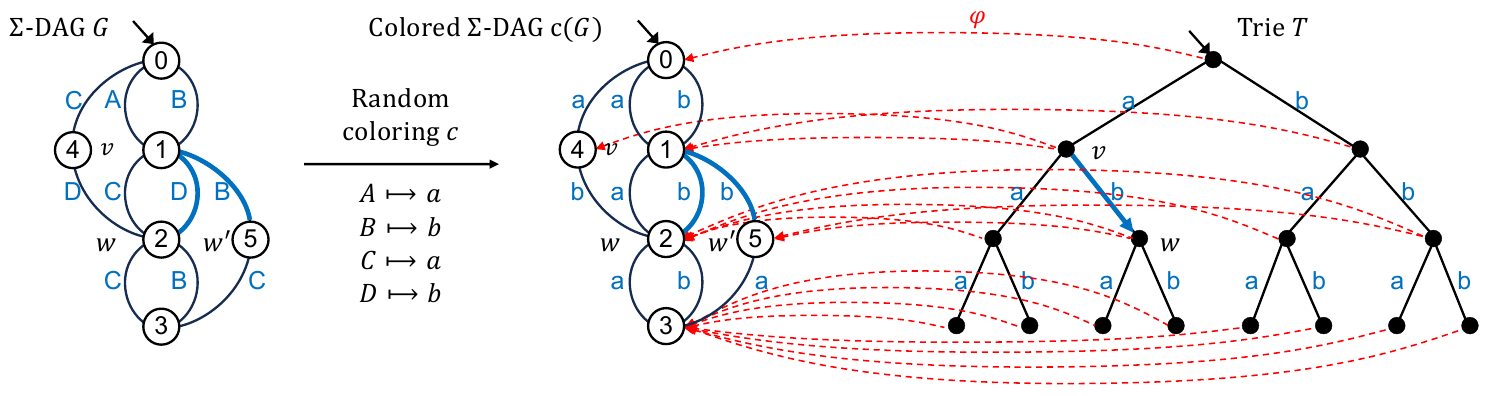}
  \caption{Illustration of
    %the procedure in
    the proof for \cref{lem:const:trie}, where dashed lines indicates a correspondence~$\varphi$.
  }\label{fig:fptalgo}
\end{figure}
%%%%%%%%%%%%

%of the subprocedure \proc{BuildColoredTrie}
\cref{fig:fptalgo} illustrates computation of reduced $C$-DAG $H$ from an input $\Sigma$-DAG $G$ over alphabet $\Sigma = \set{A, B, C, D}$ in \cref{lem:const:trie}, which shows $G$  (left), a random coloring $c$ on $C = \set{a,b}$, a colored $C$-DAG $c(G)$ (middle), and a reduced $C$-DAG $H$ in the form of trie $T$ (right). 
Combining \cref{lem:const:trie}, \cref{lem:k:const:dp:algo:maxmin}, and Alon \textit{et al.}~\cite{alon1995color}, we show the next theorem. 

%%%%%%%%%%%%%

\begin{theorem}\label{thm:k:r:delta:const:fpt:maxmin}
%%   When $r, K$, and $\Delta$ are parameters,
  When $r$ and $K$ are parameters,
  the \textsc{Max-Min Diverse String Set} on a $\Sigma$-DAG for $r$-strings is fixed-parameter tractable (FPT), where $\size(G)$ is
  %part of
  an input. 
\end{theorem}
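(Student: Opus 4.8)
The plan is to combine the color-coding technique with the dynamic programming algorithm from \cref{sec:algo}. First I would observe that a set $\X = \set{X_1,\dots,X_K}$ of $K$ distinct $r$-strings in $L(G)$ that is $\Delta$-diverse corresponds to $K$ distinct $(s,t)$-paths in $G$, and the Hamming distance between any two of them depends only on the symbols read along the paths. The key point of color-coding here is that, rather than guaranteeing distinctness directly (which is expensive), we use a random coloring $c:\Sigma\to C$ with $|C| = k$ colors for an appropriate $k$ (polynomial in $K$, say $k = O(K^2)$ by a birthday-type argument, or more simply $k$ chosen so that with good probability the $K$ chosen strings receive \emph{pairwise distinct color sequences}), so that it suffices to search for $K$ \emph{distinctly colored} strings in $c(G)$. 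By \cref{lem:const:trie}, we may first replace $c(G)$ with a reduced $C$-DAG $H$ with $\|H\| \le k^r$ edges, computable in $O(k^r\cdot\size(G))$ time; since $k$ and $r$ are parameters, this is FPT in the combined parameter and polynomial in $\size(G)$.

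Next I would run a variant of \cref{algo:k:const:dp} on the small DAG $H$. The modification is that the DP state, in addition to the $K$-tuple $\vec w$ of current vertices and the (truncated) weight matrix $Z$, must also track enough information to eventually certify that the $K$ selected strings are pairwise distinct after colors are stripped, or — if we take the cleaner route of requiring distinct \emph{color} sequences in $H$ — we only need to ensure the $K$ paths in $H$ are pairwise distinct, which is automatic once we additionally record, for each pair $i<j$, whether the two color-paths have already diverged. This adds only a $K\by K$ Boolean matrix to the state, so the table size is still $|V(H)|^K \cdot \Delta^{O(K^2)} \cdot 2^{O(K^2)} = (k^r)^K\cdot \Delta^{O(K^2)}\cdot 2^{O(K^2)}$, which is $f(K,r)\cdot\poly(\size(G))$ after noting $\Delta \le rK$ is itself bounded by a function of the parameters. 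Running \cref{algo:k:const:dp} on $H$ with this augmented state then decides, for a single coloring, whether a $\Delta$-diverse family of $K$ distinctly-colored $r$-strings exists.

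The standard derandomization / success-amplification step comes last: a single random coloring succeeds with probability at least $1/e^{O(K)}$ (more precisely, bounded below by a function of $K$ only), so repeating $O(e^{O(K)}\log\tfrac1\delta)$ times gives a Monte Carlo FPT algorithm with error $\delta$; alternatively, using a perfect hash family of $\Sigma$ onto $[k]$ of size $e^{O(k)}\log|\Sigma| = f(K,r)\cdot\poly(\size(G))$ (Alon--Yuster--Zwick~\cite{alon1995color}) yields a deterministic FPT algorithm, by iterating over all colorings in the family and returning YES iff some coloring succeeds. Correctness in the forward direction is the key lemma: any actual $\Delta$-diverse $K$-set of $r$-strings is ``hit'' by some coloring in the family, i.e.\ mapped injectively on the relevant coordinates so that the $K$ strings get distinct color sequences.

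The main obstacle I expect is getting the color-coding set-up exactly right for \emph{distinctness of strings} rather than distinctness of a set of $\le r$ special elements: the natural color-coding statement colors a universe so that a target small subset becomes rainbow, but here the objects we must keep distinct are whole strings, not symbols, so the coloring must instead be chosen so that the $K$ target strings, viewed as points in $\Sigma^r$, are separated — e.g.\ by ensuring for every pair $i\ne j$ some position where $X_i,X_j$ already differ also gets differently-colored symbols. Making this precise, bounding the success probability (a union bound over $\binom{K}{2}$ pairs, each failing with probability roughly $1/k$), and then reconciling ``distinct colored strings in $H$'' with ``distinct original strings in $G$'' — since two different original strings could collapse to the same colored string — is the delicate part; the safe resolution is to demand the stronger property that the coloring separates all $\binom{K}{2}$ pairs, which guarantees both distinctness and that no spurious collapses affect the chosen $K$ paths, at the cost of a slightly larger $k = \Theta(K^2)$.
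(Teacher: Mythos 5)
Your overall architecture --- randomly color $\Sigma$, shrink the colored DAG to a $C$-DAG $H$ of size at most $k^r$ via \cref{lem:const:trie}, run the DP of \cref{algo:k:const:dp} on $H$, and derandomize with Alon--Yuster--Zwick --- is the same as the paper's. The gap is in what you ask of the coloring. You take $k=\Theta(K^2)$ and require only that every pair $X_i,X_j$ of the target solution has \emph{some} differing position whose two symbols receive distinct colors. That event guarantees the $K$ strings stay pairwise distinct after coloring, but it does not preserve Hamming distances: the DP on $H$ measures distances between the \emph{color} sequences, and $d_H(c(X_i),c(X_j))$ can drop far below $\Delta$ when the symbols at the other differing positions are merged by $c$. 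So completeness fails: a YES-instance with a $\Delta$-diverse $\X\subseteq L(G)$ may admit no coloring satisfying your property under which the DP on $H$ finds a $\Delta$-diverse colored $K$-set, and your algorithm would wrongly answer NO. (The direction you flagged as delicate --- lifting distinct colored strings back to $G$ --- is in fact the easy one: $d_H(c(X),c(Y))\le d_H(X,Y)$ always holds, and distinct colored strings have distinct preimages in $L(G)$, so soundness holds for every coloring.)

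The fix, which is what the paper does, is to demand the stronger colorful event that $c$ is injective on the set of \emph{all} symbols occurring in the $K$ target strings --- at most $rK$ symbols --- using $|C|=rK$ colors; then for the target solution every colored pairwise distance equals the original one, and the threshold $\Delta$ (WLOG $\Delta\le r$) can be checked directly on $H$. This event has probability at least $(rK)!/(rK)^{rK}$, a function of the parameters only, and it is exactly the property that perfect hash families for size-$rK$ subsets of $\Sigma$ derandomize. Alternatively you could keep your union-bound flavour, but then you must separate every pair at every differing position, i.e.\ at most $\binom{K}{2}r$ events, which needs $k=\Theta(rK^2)$ colors; your ``one separated position per pair'' requirement is too weak. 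Once the colorful event is strengthened, the rest of your proposal (the $k^r$ size bound, the DP with states $(\vec w, Z)$, repetition and derandomization) goes through, and the extra Boolean divergence matrix is unnecessary, since for Max-Min with $\Delta\ge 1$ the weight matrix already certifies distinctness.
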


\begin{proof}
We show a sketch of the proof.
  %% See \cref{sec:appendix:fpt} for the full proof.
  We show a randomized algorithm using Alon \textit{et al.}'s color-coding technique~\cite{alon1995color}.
  Let $L(G)\subseteq \Sigma^r, k = rK$, and $C = [rK]$.
  We assume without loss of generality that $\Delta \le r$. 
  We randomly color edges of $G$ from $C$. 
  Then, we perform two phases below.
\begin{itemize}  
\item \textit{Preprocessing phase}: Using the FPT-algorithm of \cref{lem:const:trie}, reduce the colored $C$-DAG $c(G)$ with $\size(G)$ into another $C$-DAG $H$ with $L(H) = L(c(G)) \subseteq C^r$ and size bounded by $(rK)^r$. \cref{lem:const:trie} shows that this requires $t_\fn{pre} = O((rK)^r\cdot \size(G))$ time and space. 
  
\item \textit{Search phase}: Find a $\Delta$-diverse subset $\Y$ in $L(H)$ of size $|\Y| = K$ from $H$ using a modified version of \cref{algo:k:const:dp} in \cref{sec:algo} (details in footnote% 
%%%
\footnote{This modification of \cref{algo:k:const:dp} is easily done at Line~7 of \cref{algo:k:const:dp} by replacing the term $\Ind{ lab(e_i)\not= lab(e_j) }$ with the term $\Ind{ \set{c(lab(e_i))\not= c(lab(e_j))} \land \set{lab(e_i)\not= lab(e_j)} }$. }). 
%% \footnote{This modification of \cref{algo:k:const:dp} is easily done at Line~7 of \cref{algo:k:const:dp} by checking if both of $\Ind{ c(lab(e_i))\not= c(lab(e_j)) }$ and $\Ind{ lab(e_i)\not= lab(e_j) }$ hold, instead of checking if $\Ind{ lab(e_i)\not= lab(e_j) }$. }). 
%% \footnote{This modification of \cref{algo:k:const:dp} is easily done at Line~7 of \cref{algo:k:const:dp} by checking if both of $\Ind{ c(e_i.lab)\not= c(e_j.lab) }$ and $\Ind{ e_i.lab\not= e_j.lab }$ hold, instead of just checking if $\Ind{ e_i.lab\not= e_j.lab }$. }). 
%%%
If such $\Y$ exists and $c$ is invertible, then $\X = c^{-1}(\Y)$ is a $\Delta$-diverse solution for the original problem. The search phase takes
\begin{math}
  t_\fn{search}
  = O(K^2 \Delta^{K^2} (rK)^{rK})
%%   = O(K^2 r^{K^2} (rK)^{rK})
%%   = O(K^2 r^{K^2} r^{rK} K^{rK})
%%   = O(K^2 r^{K^2+rK} K^{rK})
  =: g(K, r)
\end{math}
time, where $\Delta \le r$ is used. 
\end{itemize}

With the probability $p = (rK)!/(rK)^{rK} \ge 2^{-rK}$, for $C = [rK]$, the random $C$-coloring yields a colorful $\Delta$-diverse subset $\Y = c(\X) \subseteq L(H)$. Repeating the above process $2^{rK}$ times and derandomizing it using Alon \textit{et al.}~\cite{alon1995color} yields an FPT algorithm with total running time 
\begin{inalign}
  t = 2^{rK} r\log(rK) (t_\fn{pre} + t_\fn{search})
  = f(K, r, \Delta) \cdot \size(G),
\end{inalign}
where $f(K, r, \Delta) = O(2^{rK} r\log(rK)\cdot \{ (rK)^r + g(K, r) \})$ depends only on parameters $r$ and $K$. This completes the proof. 
\end{proof}

Similarly, we obtain the following result for Max-Sum Diversity.

\begin{theorem}\label{thm:k:r:delta:const:fpt:maxsum}
  When $r$ and $K$ are parameters, the \textit{Max-Sum Diverse String Set} on $\Sigma$-graphs for $r$-strings is fixed-parameter tractable (FPT), where $\size(G)$ is part of an input. 
\end{theorem}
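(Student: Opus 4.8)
The plan is to reuse the proof of \cref{thm:k:r:delta:const:fpt:maxmin} almost verbatim, changing only the dynamic program invoked in its search phase: in place of the Max-Min DP of \cref{algo:k:const:dp}, I run the Max-Sum DP underlying \cref{lem:k:const:dp:algo:maxsum}. As in the Max-Min case, I put $k := rK$, use a color set $C := [rK]$, and observe that we may assume $\Delta \le K^2 r$, since the Max-Sum diversity of any $K$ strings of length $r$ cannot exceed $\binom{K}{2}r$, so a larger threshold is answered NO outright. A random coloring $c : \Sigma \to C$ is applied to the edge labels of the input $\Sigma$-DAG $G$, producing $c(G)$.

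The preprocessing phase is then identical to the Max-Min one: by \cref{lem:const:trie} I reduce $c(G)$ to a $C$-DAG $H$ with $L(H) = L(c(G)) \subseteq C^r$ and $||H|| \le (rK)^r$, in $t_\fn{pre} = O((rK)^r\cdot\size(G))$ time and space, so that the graph size handed to the DP no longer depends on $\size(G)$. In the search phase I run the Max-Sum DP of \cref{lem:k:const:dp:algo:maxsum} on $H$, modified exactly as in the footnote of the proof of \cref{thm:k:r:delta:const:fpt:maxmin}, so that a coordinate contributes to a pairwise distance only when the two traversed edges carry distinct \emph{original} labels \emph{and} distinct colors; this decides whether $L(H)$ contains a colorful $K$-subset whose genuine Max-Sum Hamming diversity is at least $\Delta$. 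By \cref{lem:k:const:dp:algo:maxsum} the search costs $t_\fn{search} = O(\Delta K^2 (rK)^{rK}\cdot r\log(rK))$ time, which, under $\Delta \le K^2 r$, is bounded by a computable function of $K$ and $r$ alone; note it is smaller by roughly a factor $\Delta^{K^2-1}$ than the Max-Min search phase, mirroring the speedup of \cref{lem:k:const:dp:algo:maxsum} over \cref{lem:k:const:dp:algo:maxmin}.

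Correctness and derandomization carry over unchanged. By the color-coding argument already used for \cref{thm:k:r:delta:const:fpt:maxmin}, a fixed $\Delta$-diverse $K$-subset $\X$ of $L(G)$ becomes colorful --- meaning $c$ preserves every pairwise Hamming distance among its members, and hence its whole Max-Sum diversity --- with probability at least $(rK)!/(rK)^{rK} \ge 2^{-rK}$, while any colorful solution found in $H$ pulls back through $c^{-1}$ to a true $\Delta$-diverse subset of $L(G)$. Repeating the color-and-search step $2^{rK}$ times gives constant success probability, and derandomizing with a perfect hash family of Alon, Yuster, and Zwick~\cite{alon1995color} costs only an extra $2^{O(rK)}\log\size(G)$ factor, for a total running time of the form $f(K,r)\cdot\size(G)$ with $f$ computable. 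Since every ingredient is borrowed, I anticipate no real obstacle; the only point deserving an explicit line of justification is that maintaining the pairwise-distance sum truncated at $\Delta$ --- rather than the full weight matrix truncated at $\Delta$ as in the Max-Min DP --- still suffices to certify $\Div[sum]{d_H}(\X) \ge \Delta$, which is immediate because the Sum-diversity constraint depends only on whether that truncated sum reaches $\Delta$.
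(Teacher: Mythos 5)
Your proposal is correct and follows essentially the same route as the paper: the paper's proof likewise reuses the color-coding argument of \cref{thm:k:r:delta:const:fpt:maxmin} verbatim, swapping in the Max-Sum DP of \cref{lem:k:const:dp:algo:maxsum} for the search phase and bounding $t_\fn{search}$ by $O(\Delta K^2 (rK)^{rK})$ after substituting $M \le (rK)^r$. Your explicit cap $\Delta \le \binom{K}{2}r$ is a harmless (indeed slightly cleaner) refinement of the paper's treatment, which simply keeps $\Delta$ as a linear factor in the running time.
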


\begin{proof}
  The proof proceeds by a similar discussion to the one in the proof of \cref{thm:k:r:delta:const:fpt:maxmin}.
  The only difference is the time complexity of $t_\fn{search}$. In the case of Max-Sum diversity, the search time of the modified algorithm in \cref{lem:k:const:dp:algo:maxsum} is $t_\fn{search} = O(\Delta K^2 M^K)$, where $M = \size(G)$. 
By substituting $M \le (rK)^k$ for $t_\fn{search}$, we have 
  \begin{math}
    t_\fn{search}
%%     = O(\Delta K^2 M_*^K)
    = g'(K, r) \Delta, 
  \end{math}
  where $g'(K, r) := O(K^2 (rK)^{rK})$. 
  Since $g'(K, r)$ depends only on parameters,
  the claim follows. 
  %$K$ and $r$,
%%   the algorithm is FPT. 
\end{proof}

%%%%%%
\section{Hardness results}
\label{sec:hardness}

\begin{toappendix}
\label{sec:appendix:hardness}
\end{toappendix}

To complement the positive results in \cref{sec:algo} and \cref{sec:approx}, we show some negative results in classic and parameterized complexity.
%%in the following.
In what follows, $\sigma = |\Sigma|$ is an alphabet size, $K$ is the number of strings to select, $r$ is the length of equi-length strings, and $\Delta$ is a diversity threshold. In all results below, we assume that $\sigma$ are constants,  and 
without loss of generality 
from \cref{rem:rep:strdag} 
that an input set $L$ of $r$-strings is explicitly given as the set itself.

\subsection{Hardness of Diverse String Set for Unbounded $K$}
\label{subsec:hard:div:strset}

%% \begin{toappendix}
%% \bigskip\noindent {\large\bf\textsf{\cref{subsec:hard:div:strset}~(Hardness of Diverse String Set for Unbounded $K$)}}
%% \end{toappendix}

Firstly, we observe the NP-hardness of \textsc{Max-Min} and \textsc{Max-Sum Diverse String Set} holds for unbounded $K$ even for constants $r\ge 3$.

\begin{theorem}[NP-hardness for unbounded $K$]\label{thm:nphard:div:strdag}
  When $K$ is part of an input, 
  \textsc{Max-Min} and \textsc{Max-Sum Diverse String Set} on $\Sigma$-graphs for $r$-strings are NP-hard even for any constant $r \ge 3$.
%%and $\Delta \ge 3$.
\end{theorem}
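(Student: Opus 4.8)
<br>

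The plan is to reduce from a classic NP-hard problem whose structure naturally matches a collection of short strings with a pairwise-distance constraint. The most promising source is \textsc{Independent Set} (or equivalently \textsc{Clique}), or alternatively a packing-type problem; the idea is to encode graph vertices as $r$-strings so that two strings are ``far'' in Hamming distance exactly when the corresponding vertices are non-adjacent, and ``close'' when they are adjacent. Concretely, for a graph $H = (V, E)$ with $|V| = n$, I would try to construct a set $L \subseteq \Sigma^r$ with one string $x_v$ per vertex $v$, all of a fixed small length $r$ (aiming for $r = 3$, since that is the claimed bound), together with integers $K$ and $\Delta$, such that $H$ has an independent set of size $K$ if and only if there is a $\Delta$-diverse subset $\X \subseteq L$ with $|\X| = K$. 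For \textsc{Max-Min} this is the cleanest: set $\Delta$ so that $d_H(x_u, x_v) \ge \Delta$ iff $uv \notin E$, and then a $K$-subset with min-diversity $\ge \Delta$ is precisely a size-$K$ independent set.

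The main obstacle is the length constraint $r \ge 3$: with strings of length only $3$ over a constant alphabet, there are only $O(1)$ distinct strings, so one cannot give each of $n$ vertices its own distinct string directly. This forces a different, more clever gadget. The natural fix is to blow up the alphabet to be non-constant, or — since $\sigma$ is required constant here — to instead reduce from a problem where the number of ``candidate items'' is itself bounded, or to use a more global encoding. I would likely reduce instead from a variant where $r$ is small but we exploit that $L$ itself can be large: e.g., take $\Sigma$ of constant size, $r = 3$, and let $L$ consist of many length-$3$ strings that are \emph{not} all distinct as graph-vertex labels but rather encode a SAT-like or hypergraph-covering structure, so that selecting $K$ pairwise-$\Delta$-far strings corresponds to a consistent partial assignment. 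A concrete candidate: reduce from a restricted \textsc{3-Dimensional Matching} or from \textsc{Independent Set} on graphs of bounded degree by grouping vertices, where each length-$3$ string records (block index, within-block choice, tag), and $\Delta = 3$ forces any two selected strings to differ in all three coordinates, hence to come from different blocks — turning the problem into picking one vertex per block subject to non-adjacency.

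The key steps, in order, would be: (1) fix the source problem and verify it is NP-hard even under whatever restriction we need (bounded degree, exact-$3$, etc.); (2) define the map from an instance to $(L, K, r, \Delta)$ with $r = 3$ and constant $\sigma$, and check $|L|$ and the construction are polynomial; (3) prove the forward direction — a solution of the source instance yields a $\Delta$-diverse $K$-subset — which is usually immediate from the encoding; (4) prove the backward direction — any $\Delta$-diverse $K$-subset ``decodes'' to a valid solution — this is the delicate part, since one must rule out degenerate $K$-subsets that are diverse for unintended reasons, typically handled by padding/tagging coordinates that force structural separation; (5) observe the same construction (possibly with a rescaled $\Delta$) works for \textsc{Max-Sum}, using that on a $K$-set a min-diversity lower bound and a sum-diversity lower bound coincide when all pairwise distances are forced into $\{0, \Delta\}$ or a two-value set — so setting the \textsc{Max-Sum} threshold to $\binom{K}{2}\Delta$ reuses the \textsc{Max-Min} argument verbatim; (6) finally invoke \cref{rem:rep:strdag}(1) to note the explicitly-given set $L$ is trivially representable as a $\Sigma$-DAG of size $\|L\|$, so the hardness transfers to the $\Sigma$-DAG formulation as stated. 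I expect step (4), controlling unintended diverse subsets with only three positions to work with, to be where the real combinatorial care is needed, and where the choice of source problem will be made or broken.
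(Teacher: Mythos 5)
Your proposal stops short of a proof, and the place where it stalls is the decisive one. You impose on yourself the constraint that $\sigma$ be constant together with $r=3$, and then look for a ``clever gadget'' to get around the fact that there are only $O(1)$ length-$3$ strings. But under that constraint no gadget can exist: $|L|\le \sigma^3=O(1)$, so the problem is solvable in constant time and the statement would simply be false. The theorem fixes only $r$; the alphabet must grow with the input, and once you allow that, the reduction you mention only in passing (3DM) works essentially verbatim and is the paper's proof: given $A=B=C=[n]$ and $F\subseteq[n]^3$, set $\Sigma=A\cup B\cup C$, $L=F\subseteq\Sigma^3$, $K=n$, $\Delta=r=3$; a subfamily $M\subseteq F$ is a matching iff every pair of its triples disagrees in all three coordinates, i.e.\ iff its min-diversity is $\ge 3$. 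Your preferred alternative, \textsc{Independent Set} with $r=3$, cannot handle arbitrary graphs even with a large alphabet: with $\Delta=3$ the ``conflicting'' pairs are exactly those agreeing in some coordinate, so the conflict graphs realizable by length-$3$ strings are precisely unions of three equivalence relations --- which is exactly the structure of 3DM and not of general \textsc{Independent Set} (the paper's \textsc{Clique}-based construction is reserved for the W[1]-hardness result and needs $r=\binom{n}{2}$, not constant $r$). Your ``block index, within-block choice, tag'' encoding would in any case need non-constantly many symbols for the block index, contradicting the very constraint that motivated it.

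The parts of your plan that are sound are the peripheral ones, and they do match the paper: transferring hardness from the explicit set $L$ to the $\Sigma$-DAG representation via \cref{rem:rep:strdag}, and handling \textsc{Max-Sum} by the threshold $\Delta'=3\binom{K}{2}$, justified by the fact that each pairwise distance is at most $r=\Delta$, so the sum reaches $\Delta\binom{K}{2}$ only when every pair is at distance exactly $\Delta$ (your phrasing about distances lying in a two-value set is unnecessary; the upper bound alone suffices). But as submitted, the core of the argument --- a concrete, verified $r=3$ construction --- is missing, and the direction you single out for finding it is a dead end.
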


\begin{proof}
  We reduce an NP-hard problem 3DM~\cite{garey:johnson1983:intractability} to \textsc{Max-Min Diverse String Set} by a trivial reduction. Recall that given an instance consists of sets $A = B = C = [n]$ for some $n\ge 1$ and a set family $F \subseteq [n]^3$, and 3DM asks if there exists some subset $M \subseteq F$ that is a \textit{matching}, that is, any two vectors $X, Y \in M$ have no position $i \in [3]$ at which the corresponding symbols agree, i.e., $X[i] = Y[i]$. Then, we construct an instance of \textsc{Max-Min Diverse String Set} with $r = 3$ with 
  an alphabet $\Sigma = A\cup B\cup C$,
  a string set $L = F \subseteq \Sigma^3$,
  integers $K = n$ and $\Delta = r = 3$.
  Obviously, this transformation is polynomial time computable. 
  Then, it is not hard to see that for any $M\subseteq F$, $M$ is a matching if and only if $\Div[min]{d_H}(M) \ge \Delta$ holds.
  On the other hand, for \textsc{Max-Min Diverse String Set}, if we let $\Delta' = \binom{K}{2}$ then for any $M\subseteq F$, $M$ is a matching if and only if $\Div[sum]{d_H}(M) \ge \Delta'$ holds. Combining the above arguments, the theorem is proved. 
\end{proof}

%%%%%% fig : hardmatch hardgraph
\pagefighardmatch=\thepage
%%%
\begin{figure}[t]
%% \hline 
\abovecaptionskip=0.5\baselineskip
\belowcaptionskip=0\baselineskip
\centering
\vspace{-1.25\baselineskip}
%%%%
\nofbox{\includegraphics[height=4.9\baselineskip]{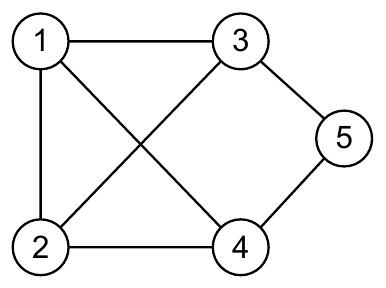}}
\hfil
\nofbox{\includegraphics[height=4.9\baselineskip]{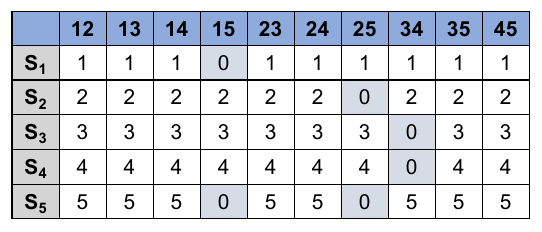}}
\caption{An example of reduction for the proof of \cref{thm:w1hard:div:strdag} in the case of $n = 5$, consisting of an instance $G$ of \textsc{Clique}, with a vertex set $V = \set{1,\dots,5}$ and a edge set $E \subseteq \sig E = \set{12, 13, \dots, 45}$ (left), and the associated instance $F = \set{S_1, \dots, S_n}$ of \textsc{Diverse $r$-String Set}, where $F$ contains $n=5$ $r$-strings with $r = |\sig E| = 10$ (right). Shadowed cells indicate the occurrences of symbol $0$.  
}\label{fig:hardmatch}
\end{figure}
%%%%%%

We remark that 3DM is shown to be in FPT by Fellows, Knauer, Nishimura, Ragde, Rosamond, Stege, Thilikos, and Whitesides~\cite{fellows2008faster}.
Besides, we showed in \cref{sec:fpt} that \textsc{Diverse $r$-String Set} is FPT when parameterized with $K + r$ (\textsc{Max-Sum}) or $K + r + \Delta$ (\textsc{Max-Min}), respectively.
We show that the latter problem is W[1]-hard parameterized with~$K$. 

\begin{theorem}[\Wone-hardness of the string set and $\Sigma$-DAG versions for unbounded $K$]\label{thm:w1hard:div:strdag}
  When parameterized with $K$, \textsc{Max-Min} and \textsc{Max-Sum Diverse String Set} for a set $L$ of $r$-strings are W[1]-hard whether a string set $L$ is represented by either a string set $L$ or a $\Sigma$-DAG for $L$, where $r$ and $\Delta$ are part of an input. 
\end{theorem}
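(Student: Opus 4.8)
The plan is to give an FPT-reduction from \textsc{Clique}, parameterized by the clique size $k$, which is the canonical W[1]-complete problem. The picture in \cref{fig:hardmatch} already sketches the intended construction, so the task is to formalize it. Given a graph $G=(V,E)$ with $V=[n]$ and a target clique size $k$, I would build a string set $L=\{S_1,\dots,S_n\}$ over the binary alphabet $\Sigma=\{0,1\}$, where the strings have length $r=|\binom{[n]}{2}|=\binom n2$: index the coordinates by the potential edges (unordered pairs) $\{i,j\}$, and set $S_i[\{i,j\}]=1$ exactly when $\{i,j\}\in E$, and $S_i$ has a $0$ in every coordinate $\{j,j'\}$ with $i\notin\{j,j'\}$ as well as in coordinates $\{i,j\}\notin E$. (One may need a slight variant—e.g.\ using a distinguished symbol, or padding—to make all strings pairwise distinct regardless of $G$; this is a routine adjustment.) Then I would set $K=k$, and compute the diversity threshold: for two vertices $i\ne j$, the strings $S_i$ and $S_j$ can differ only in the two coordinates $\{i,j'\}$ containing $i$ and $\{j,j'\}$ containing $j$; examining cases shows that $d_H(S_i,S_j)=2$ if $\{i,j\}\in E$ and $d_H(S_i,S_j)=0$ (or a smaller constant, in the distinct-strings variant) if $\{i,j\}\notin E$. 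Hence a set $\X=\{S_{i_1},\dots,S_{i_k}\}$ satisfies $\Div[min]{d_H}(\X)\ge 2$ iff all pairs $\{i_a,i_b\}$ are edges, i.e.\ iff $\{i_1,\dots,i_k\}$ is a $k$-clique in $G$. This gives the \textsc{Max-Min} case with $\Delta=2$.

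For the \textsc{Max-Sum} case I would reuse exactly the same instance and observe that, since every pair contributes either $2$ (edge) or $\le$ some smaller constant (non-edge), the set $\X$ of size $k$ maximizes $\Div[sum]{d_H}$ precisely when all $\binom k2$ pairs are edges; so setting $\Delta'=2\binom k2$ we get $\Div[sum]{d_H}(\X)\ge\Delta'$ iff $\X$ corresponds to a $k$-clique. I would then verify the reduction is an FPT-reduction: the instance has size polynomial in $n$ (indeed $O(n^2)$ strings-coordinate entries, since $|L|=n$ and $r=\binom n2$), it is computable in polynomial time, and the new parameter $K=k$ depends only on the old parameter $k$. Finally, since \cref{rem:rep:strdag}(1) lets us build a $\Sigma$-DAG $G'$ with $L(G')=L$ and $\size(G')\le\|L\|$ in polynomial time, the same hardness transfers verbatim to the $\Sigma$-DAG representation; and $r$ and $\Delta$ are indeed part of the input (not bounded), as the statement requires.

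The main obstacle I anticipate is not the combinatorics of the Hamming distances—which is a short case analysis—but ensuring the construction is robust: the strings $S_i$ must be pairwise \emph{distinct} (so that choosing $K$ of them is meaningful and $L$ genuinely has $n$ elements), and the non-edge distance must be strictly controllable so that the threshold cleanly separates cliques from non-cliques in both the \textsc{Min} and \textsc{Sum} measures simultaneously. If the naive binary encoding above ever produces $S_i=S_j$ for $\{i,j\}\notin E$, I would fix this by prepending to each $S_i$ a short ``identity block'' of $\lceil\log n\rceil$ coordinates spelling out $i$ in binary; this adds a uniform additive contribution to every pairwise distance, so the thresholds $\Delta,\Delta'$ shift by a computable amount and the argument goes through unchanged, while $r$ grows only by $O(\log n)$. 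A second, minor point to check is that the alphabet size stays constant (it does: $\sigma=2$ suffices), consistent with the standing assumption in \cref{sec:hardness} that $\sigma$ is constant.
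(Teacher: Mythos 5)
There is a genuine gap: the distance analysis at the heart of your reduction is wrong, and the construction does not separate cliques from non-cliques. With your binary encoding ($S_i[e]=1$ iff $i\in e$ and $e\in E$, else $0$), two strings $S_i,S_j$ always \emph{agree} at the coordinate $\{i,j\}$ (both are $1$ if it is an edge, both $0$ if not), and they disagree exactly at the coordinates $\{i,j'\}\in E$ with $j'\neq j$ and $\{j,j'\}\in E$ with $j'\neq i$. Hence
$d_H(S_i,S_j)=\deg(i)+\deg(j)-2\cdot\mathbbm{1}\{\{i,j\}\in E\}$,
which is not $2$ versus $0$: it depends on the degrees, and adjacency actually \emph{decreases} the distance by $2$. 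For instance, two isolated vertices give distance $0$ but so do many non-adjacent pairs of degree-$1$ vertices give distance $2$, matching your ``edge'' threshold; in a star, the non-adjacent leaves are at distance $2$ while an adjacent center--leaf pair is at distance $n-2$. So the thresholds $\Delta=2$ and $\Delta'=2\binom{k}{2}$ certify nothing about cliques, and both the Max-Min and Max-Sum directions of your equivalence fail. Your proposed ``identity block'' patch has a further (minor) flaw: prepending the $\lceil\log n\rceil$-bit binary code of $i$ adds $d_H(\mathrm{bin}(i),\mathrm{bin}(j))$, which varies with $i,j$, not a uniform additive constant.

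The paper's construction avoids exactly this trap by \emph{not} working over a constant alphabet: it takes $\Sigma=[n]\cup\{0\}$ and sets $S_i(e)=0$ if $i\in e$ and $e\notin E$, and $S_i(e)=i$ otherwise. Then distinct strings mismatch at every coordinate except possibly the single coordinate $e=\{i,j\}$, where they agree iff $\{i,j\}\notin E$; thus $d_H(S_i,S_j)=r$ iff $\{i,j\}\in E$ and $r-1$ otherwise, so $\Delta=r$ (and $\Delta'=r\binom{K}{2}$ for Max-Sum) cleanly characterizes $K$-cliques, with $r$ and $\Delta$ part of the input as the statement allows. Your insistence on $\sigma=2$ is precisely what breaks the argument; the standing constant-$\sigma$ assumption in the hardness section is not what this theorem relies on, and establishing W[1]-hardness over a binary alphabet would require a genuinely different (and not routine) encoding in which pairwise distances are degree-independent. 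The outer shell of your plan (reduce from \textsc{Clique}, set $K=k$, transfer to the $\Sigma$-DAG representation via the trie construction of the remark on $\Sigma$-DAGs) matches the paper, but the core gadget must be replaced.
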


\begin{proof}
  We establish the W[1]-hardness of \textsc{Max-Min Diverse String Set} with parameter $K$ by reduction from \textsc{Clique} with parameter $K$. This builds on the NP-hardness of \textsc{$r$-Set packing} in
  Ausiello \textit{et al.}~\cite{ausiello1980structure}  with minor modifications (see also \cite{fellows2008faster}).
  Given a graph $G = (V, E)$ with $n$ vertices and a parameter $K \in \nat$, where $V = [n]$ and $E \subseteq \sig{E}$,
  we let $\sig{E} := \sete{ \set{i,j}  \mid i, j \in V, i\not= j }$. 
  We define the transformation $\phi_1$
  from $\pair{G, K}$ to $\pair{\Sigma, r, F, \Delta}$ and $\kappa(K) = K$ as follows.  
  We let  $\Sigma = [n]\withpos$,
  $r = |\sig{E}| = \binom{n}{2}$, and
  $\Delta = r$.
We view each $r$-string $S$ as a mapping $S: \sig{E} \to \Sigma$ assigning symbol $S(e) \in \Sigma$ to each unordered pair $e \in \sig{E}$. 
  We construct a family
  $F = \sete{ S_i \mid i \in V }$
  of $r$-strings such that 
  $G$ has a clique of $K$ elements
  %$\iff$
  if and only if 
  there exists a subset $M\subseteq F$ with 
  %% \begin{enumerate*}[(a)]
  %% \item
    (a) size $|M| \ge \kappa(K) = K$, and
  %% \item
    (b) diversity $d_H(S, S') \ge r = \Delta$ for all distinct $S, S' \in M$ (*).  
  %% \end{enumerate*}
Each $r$-string $S_i$ is defined based on the existence of the edges in $E$: 
(i) $S_i(e) = 0$ if $(i \in e)\land (e \not\in E)$, and
(ii) $S_i(e) = i$ otherwise.
%% %%where $e$ ranges over $\sig{E}$. 
%from $G$ and $K = 3$for $n = 5$ and $r = 10$. 
%% %% where $\sig{E} = \set{12, 13, \dots, 45}$ and $r = |\sig{E}| = 10$ (right).
%% For example, we see that $S_1$ and $S_2$ for connected pair $\set{1,2} \in E$ have conflicts at all positions, while $S_2$ and $S_5$ for unconnected pair $\set{2,5} \not\in E$ have some position with the same symbol $0$.
By definition, $d_H(S_i, S_j) \le r$. We show that for any $i,j \in \sig E$, $S_i$ and $S_j$ have conflicts at all positions, i.e. $d_H(S_i,S_j) = r$, if and only if $\set{i,j} \in E$.
See \cref{fig:hardmatch} for example  of $F$. 
To see the correctness, suppose that $e = \set{i,j}\not\in E$. Then, it follows from $(i)$ that $S_i(e) = S_j(e) = 0$ since $(i\in e) \land (j \in e)\land e\not\in E$.
Conversely, if $e' = \set{i,j}\in E$, the condition (i) $S_i(e) = S_j(e) = 0$ does not hold for any $e \in \sig E$ because if $e \not= e'$, one of $(i \in e)$ and $(j \in e)$ does not hold, and if $e = e'$, $e\not\in E$ does not hold. This proves the claim (*). Since $\phi_1$ and $\kappa$ form an FPT-reduction. The theorem is proved.
\end{proof}

In this subsection, we show the hardness results of Diverse LCSs for unbounded $K$ in classic and parameterized settings by reducing them to those of \textsc{Diverse String Set} in~\cref{subsec:hard:div:strset}. 

% %%%
% Consider the set $LCS(S_1, S_2)$ of all longest common subsequence of two strings $S_1, S_2$. To each LCS $X \in LCS(S_1, S_2)$ of length $r = |X|$, we can associate a \textit{non-crossing} matching $M$ in a bipartite graph $\sig B = (Pos_1\cup Pos_2, E)$, where $Pos_1$ and $Pos_2$ are the sets of all positions in $S_1$ and $S_2$, respectively, and $\sig B$ has an edge $(i,j)$ if and only if positions $i$ and $j$ have the same symbols in $S_1$ and $S_2$, respectively. 
% %% We show a key lemma. 

\begin{theorem}\label{lem:fptreduce:strdag:to:lcs}
  Under Hamming distance, 
  \textsc{Max-Min} (resp.~\textsc{Max-Sum}) \textsc{Diverse String Set} for $m\ge 2$ strings parameterized with $K$ 
  is FPT-reducible to
  \textsc{Max-Min} (resp.~\textsc{Max-Sum}) \textsc{Diverse LCSs} for two string ($m = 2$) parameterized with $K$, where $m$ is part of an input. Moreover, the reduction is also a polynomial time reduction from \textsc{Max-Min} (resp.~\textsc{Max-Sum}) \textsc{Diverse String Set} to  \textsc{Max-Min} (resp.~\textsc{Max-Sum}) \textsc{Diverse LCSs}. 
\end{theorem}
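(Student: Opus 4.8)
The plan is to encode an arbitrary equi-length string set $L \subseteq \Sigma^r$ as the LCS set of just two strings $X, Y$ over a slightly larger alphabet, so that the diverse-selection problem is preserved verbatim. The key idea is a classic "blocking" or "shifting" trick: pad each position so that forced choices become the only way to reach the common length. First I would fix $L = \{S_1, \dots, S_N\} \subseteq \Sigma^r$ and introduce $r$ fresh separator symbols $\#_1, \dots, \#_r \notin \Sigma$ (alternatively one could use a single separator repeated with enough multiplicity; $r$ distinct ones keep the argument clean). For each position $d \in [r]$, let $\Sigma_d = \{ S_i[d] : i \in [N] \}$ be the set of symbols that actually occur at position $d$ across $L$, listed in some fixed order $c_{d,1}, \dots, c_{d,k_d}$. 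The two strings are built block by block: block $d$ of $X$ is $\#_d\, c_{d,1} c_{d,2} \cdots c_{d,k_d}\, \#_d$ and block $d$ of $Y$ is $\#_d\, c_{d,k_d} \cdots c_{d,2} c_{d,1}\, \#_d$ (the alphabet of position-$d$ symbols, in opposite orders, fenced by the separators). Then $X = B_1^X B_2^X \cdots B_r^X$ and $Y = B_1^Y B_2^Y \cdots B_r^Y$.

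The next step is to characterize $LCS(X,Y)$. Because the separators $\#_d$ appear only in block $d$ and in the same relative order in both strings, every common subsequence must "respect the blocks": it can use at most the two copies of $\#_d$ in order, and between/around them, since the middle parts $c_{d,1}\cdots c_{d,k_d}$ and its reverse have LCS of length exactly $1$ (any two symbols chosen in increasing order from one are in decreasing order in the other), it can pick up at most one symbol of $\Sigma_d$ per block. So every common subsequence has length at most $3r$ (two separators plus one data symbol per block), and the strings of length exactly $3r$ that are common subsequences are precisely $\#_1 a_1 \#_1 \#_2 a_2 \#_2 \cdots \#_r a_r \#_r$ with $a_d \in \Sigma_d$ — one needs to check that every such string is genuinely a common subsequence, which is immediate by construction, and that the separators must all be used (otherwise the length drops below $3r$, using that omitting a $\#_d$ costs a separator that cannot be recovered elsewhere). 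Hence $LCS(X,Y) = \{ \#_1 a_1 \#_1 \cdots \#_r a_r \#_r : a_1 \in \Sigma_1, \dots, a_r \in \Sigma_r \}$, the full Cartesian product $\Sigma_1 \times \cdots \times \Sigma_r$ with separators interleaved.

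The catch is that $LCS(X,Y)$ as just described is the full product $\prod_d \Sigma_d$, which is generally strictly larger than $L$; we need exactly $L$. The fix is the standard one for LCS-encoding reductions: add, for each $S_i \in L$, a "guard" in a third coordinate — but since $m$ must be $2$, instead we enrich the per-position blocks so that only the combinations appearing in $L$ survive. Concretely, replace the single alphabet listing in each block by one listing per string of $L$: block $d$ of $X$ becomes the concatenation over $i=1,\dots,N$ of $\#_{d,i}\, S_i[d]\, \#_{d,i}$ and block $d$ of $Y$ the concatenation over $i=N,\dots,1$ of $\#_{d,i}\, S_i[d]\, \#_{d,i}$, using fresh separators $\#_{d,i}$; now a length-maximal common subsequence must commit to a single index $i$ consistently across all $r$ positions (the opposite orderings force the matched $\#_{d,i}$-blocks to be a single common index, the same $i$ in every position because using index $i$ in position $d$ and index $i' \ne i$ in position $d+1$ is fine for separators but — and here is the real content — one arranges the separator multiplicities/orderings so that the maximal length $4r$, say, is attainable only by a monotone consistent choice $i_1 \le i_2 \le \cdots$, combined with a symmetric gadget forcing $i_1 \ge i_2 \ge \cdots$, pinning all $i_d$ equal). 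Then $LCS(X,Y)$ is in bijection with $L$ via $S_i \mapsto$ (its decorated encoding), and crucially the bijection is an **isometry** for Hamming distance up to an affine rescaling: two encodings agree in a decorated position iff the underlying $S_i, S_j$ agree there and $i = j$ in that slot, so $d_H$ on encodings equals $d_H$ on the originals times a fixed constant (the separator coordinates contribute $0$ between distinct $i$'s in a uniform way, or one checks they contribute a constant). This last isometry-up-to-scaling claim, making $\Div^{\mathrm{min}}$ and $\Div^{\mathrm{sum}}$ transform predictably so one can set the new thresholds $\Delta'$ from $\Delta$, is the step I expect to be the main obstacle: getting the separator coordinates to either vanish or contribute a constant offset between any two distinct selected LCSs, uniformly, requires care in the gadget design, and is where the bulk of the verification lies. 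The reduction is computable in polynomial time in $\|L\|$ and leaves $K$ untouched, so it is simultaneously a polynomial-time reduction and an FPT-reduction (with $\kappa(K)=K$), yielding both claimed statements.
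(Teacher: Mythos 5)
Your second construction---the one that is supposed to cut the Cartesian product down to exactly $L$---has a genuine gap at its central point, and you essentially admit it: the clause ``one arranges the separator multiplicities/orderings so that the maximal length is attainable only by a monotone consistent choice \dots combined with a symmetric gadget'' is precisely the gadget that needs to be built, and it is not built. Worse, the construction as literally stated cannot work: since the separators $\#_{d,i}$ are fresh per block $d$, no symbol is shared between different blocks, so the matching problem decomposes block by block and there is nothing forcing the index chosen in block $d$ to agree with the index chosen in block $d+1$. Every mixed choice $(i_1,\dots,i_r)\in[N]^r$ yields a common subsequence of the maximal length $3r$ (namely $\#_{1,i_1}S_{i_1}[1]\#_{1,i_1}\cdots\#_{r,i_r}S_{i_r}[r]\#_{r,i_r}$), so $LCS(X,Y)$ is again in bijection with a full product, not with $L$. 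Any correct fix must couple the positions through per-string structure that spans the whole string, which is a different design from per-position blocks; your secondary worry (the Hamming-distance rescaling) is comparatively easy once the LCS set is pinned down, since distinct per-string separators contribute a uniform additive offset.

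For comparison, the paper achieves the coupling globally rather than position-wise: each $X_i$ is decorated into $T_i = P_i X_i Q_i$ with prefix and suffix tags of length $s$ made of symbols unique to $i$, then cut into segments $A_i W_i B_i$ with $|A_i|$ strictly decreasing and $|B_i|$ strictly increasing in $i$; the first input string concatenates the segments grouped by type in increasing index order, while the second concatenates the whole decorated strings $T_s,\dots,T_1$ in decreasing order. A non-crossing-matching argument shows a maximal common subsequence uses at most one segment from each of the three groups, and the opposite length gradients on $|A_i|$ and $|B_i|$ force the three chosen indices to coincide, giving $LCS(S_1,S_2)=\{T_1,\dots,T_s\}$ exactly; then $d_H(T_i,T_j)=d_H(X_i,X_j)+2s$, so the thresholds shift by $2s$ (Max-Min) or $2s\binom{K}{2}$ (Max-Sum) and $K$ is untouched. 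Your proposal would need an analogous cross-position forcing mechanism spelled out before the theorem follows.
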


%%% See \cref{sec:appendix:hardness} for the proof. 
%% \begin{proofsketch}
%%   We omit the proof due to space constraint. See \cref{sec:appendix:hardness} for the proof. 
%% \end{proofsketch}

We defer the proof of \cref{lem:fptreduce:strdag:to:lcs} in \cref{subsec:proof:thm:six:three}. 
Combining \cref{thm:nphard:div:strdag}, \cref{thm:w1hard:div:strdag}, and \cref{lem:fptreduce:strdag:to:lcs}, we have the corollaries. 

\begin{corollary}
  [NP-hardness]
%%   [NP-hardness of Diverse LCSs for unbounded $K$]
  \label{thm:nphard:divlcs}
  When $K$ is part of an input, 
   \textsc{Max-Min} and \textsc{Max-Sum Diverse LCSs} for two $r$-strings are NP-hard, where $r$ and $\Delta$ are part of an input. 
\end{corollary}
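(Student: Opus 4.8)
\textbf{Proof plan for \cref{thm:nphard:divlcs}.}
The plan is to obtain this corollary as an immediate consequence of \cref{thm:nphard:div:strdag} and the FPT-reduction of \cref{lem:fptreduce:strdag:to:lcs}. By \cref{thm:nphard:div:strdag}, both \textsc{Max-Min} and \textsc{Max-Sum Diverse String Set} are NP-hard when $K$ is part of the input, already for $r = 3$; in particular the hard instances there are plain string sets $L \subseteq \Sigma^3$ (recall that, as noted before \cref{subsec:hard:div:strset}, we may assume an input string set is given explicitly). The second sentence of \cref{lem:fptreduce:strdag:to:lcs} asserts that the reduction used there is also a polynomial-time many-one reduction from \textsc{Diverse String Set} (for $m \ge 2$ strings, with $m$ part of the input) to \textsc{Diverse LCSs} for two strings ($m = 2$), separately for the \textsc{Max-Min} and \textsc{Max-Sum} variants. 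Composing these two facts yields a polynomial-time reduction from an NP-hard problem to \textsc{Max-Min} (resp.\ \textsc{Max-Sum}) \textsc{Diverse LCSs} on two $r$-strings, with $K$, $r$, and $\Delta$ all part of the input, which establishes NP-hardness.

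Concretely, I would first take an arbitrary instance of \textsc{Max-Min Diverse String Set} (resp.\ \textsc{Max-Sum}) given by a string set $L \subseteq \Sigma^r$ together with integers $K$ and $\Delta$; since the source 3DM-based construction in \cref{thm:nphard:div:strdag} produces such an instance in polynomial time, this step contributes only polynomial overhead. Then I would apply the transformation of \cref{lem:fptreduce:strdag:to:lcs} to produce, in polynomial time, a pair of strings $X, Y$ (together with suitably adjusted $K'$, $r'$, $\Delta'$) such that $LCS(\set{X,Y})$ encodes $L$ in the sense required for the diversity threshold to be preserved; by the ``moreover'' clause of \cref{lem:fptreduce:strdag:to:lcs} this is a valid polynomial-time reduction for both diversity measures. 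Correctness of the composed reduction is inherited verbatim from the correctness statements of \cref{thm:nphard:div:strdag} and \cref{lem:fptreduce:strdag:to:lcs}, so no new combinatorial argument is needed here. Finally I would note that $r$ and $\Delta$ remain part of the input in the resulting \textsc{Diverse LCSs} instance, as claimed, because the reduction of \cref{lem:fptreduce:strdag:to:lcs} does not fix them to constants.

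The only mild subtlety — and the closest thing to an obstacle — is bookkeeping of the parameter $r$: the hardness in \cref{thm:nphard:div:strdag} holds with $r = 3$ constant on the \textsc{Diverse String Set} side, whereas the reduction of \cref{lem:fptreduce:strdag:to:lcs} may blow up the string length, so the resulting \textsc{Diverse LCSs} instance has $r$ (and $\Delta$) growing with the input size rather than bounded. This is harmless for an NP-hardness statement, and indeed matches the phrasing of the corollary (``$r$ and $\Delta$ are part of an input''); one simply does not claim a constant-$r$ version here, in contrast to \cref{thm:nphard:div:strdag}. Since everything else is a direct composition of results already proven in the excerpt, the proof is short: state the two ingredients, compose the reductions, and invoke their correctness.
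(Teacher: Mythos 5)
Your proposal is correct and matches the paper's own argument: the corollary is obtained exactly by composing the NP-hardness of \textsc{Diverse String Set} from \cref{thm:nphard:div:strdag} with the polynomial-time reduction to \textsc{Diverse LCSs} for two strings given by the ``moreover'' clause of \cref{lem:fptreduce:strdag:to:lcs}. Your remark that $r$ is no longer constant after the reduction, so the corollary only asserts hardness with $r$ and $\Delta$ part of the input, is also consistent with the paper's statement.
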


%% \begin{proof}
%% From \cref{lem:fptreduce:strdag:to:lcs}, the claim immediately follows from \cref{thm:nphard:div:strdag}. 
%% \end{proof}

%% \begin{proof}
%%   From \cref{lem:fptreduce:strdag:to:lcs}, 
%%   we can show  that $\Div[min]{d_H}(M) \ge \Delta \iff \Div[min]{d_H}(\phi_2(M)) \ge \Delta + \alpha = \Delta'$
%%   %for Max-Min Diversity
%%   by setting $\Delta' := \Delta + \alpha$. 
%%   Similarly, we can show that $\Div[min]{d_H}(M) \ge \Delta \iff \Div[min]{d_H}(\phi_2(M)) \ge \Delta + \binom{K}{2}\alpha = \Delta'$
%%   %for Max-Min Diversity
%%   by setting $\Delta' := \Delta + \binom{K}{2}\alpha$.
%% By the above arguments, the modified transformations correctly reduces \textsc{Max-Min Diverse String Set} to \textsc{Max-Min Diverse LCSs} and to \textsc{Max-Min Diverse LCSs}. Hence, the result follows from \cref{thm:nphard:div:strdag}.
%% \end{proof}

\begin{corollary}[\Wone-hardness]
  \label{thm:w1hard:divlcs}
  When parameterized with $K$, 
  \textsc{Max-Min} and \textsc{Max-Sum Diverse LCSs} for two $r$-strings
  are W[1]-hard, where $r$ and $\Delta$ are part of an input. 
\end{corollary}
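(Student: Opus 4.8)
The statement to prove is \cref{thm:w1hard:divlcs}: that \textsc{Max-Min} and \textsc{Max-Sum Diverse LCSs} for two $r$-strings are W[1]-hard when parameterized with $K$. The plan is to obtain this as an immediate consequence of two facts already established in the excerpt: \cref{thm:w1hard:div:strdag}, which gives W[1]-hardness (parameterized by $K$) of \textsc{Max-Min}/\textsc{Max-Sum Diverse String Set} for an explicitly given string set $L$; and \cref{lem:fptreduce:strdag:to:lcs}, which provides an FPT-reduction (and simultaneously a polynomial-time reduction) from \textsc{Diverse String Set} for $m \ge 2$ strings to \textsc{Diverse LCSs} for two strings, in both the \textsc{Max-Min} and \textsc{Max-Sum} variants, with parameter $K$ preserved. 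Chaining these two reductions yields an FPT-reduction from \textsc{Clique} (parameterized by clique size) to \textsc{Diverse LCSs} for two $r$-strings parameterized by $K$, which is exactly what W[1]-hardness demands.

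\textbf{Steps in order.} First I would invoke \cref{thm:w1hard:div:strdag} to fix a W[1]-hard source problem: \textsc{Max-Min} (resp.\ \textsc{Max-Sum}) \textsc{Diverse String Set} parameterized by $K$, where the input is an explicit set $L$ of $r$-strings and $r, \Delta$ are part of the input. Second, I would apply \cref{lem:fptreduce:strdag:to:lcs} to this instance; since that lemma's reduction is an FPT-reduction that maps \textsc{Diverse String Set} for $m \ge 2$ strings to \textsc{Diverse LCSs} for exactly two strings while keeping the parameter $K$ bounded by a computable function of itself (indeed $\kappa(\phi(x)) = K$), the composition is again an FPT-reduction. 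Third, I would note that W[1]-hardness is closed under FPT-reductions, so \textsc{Max-Min} (resp.\ \textsc{Max-Sum}) \textsc{Diverse LCSs} for two $r$-strings parameterized by $K$ inherits W[1]-hardness, with $r$ and $\Delta$ remaining part of the input. This is precisely the packaging offered by \cref{thm:w1hard:divlcs}'s placement right after \cref{lem:fptreduce:strdag:to:lcs} — the corollary is meant to be one line.

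\textbf{Main obstacle.} There is essentially no obstacle at the level of this statement itself: all the work has been pushed into \cref{lem:fptreduce:strdag:to:lcs}, whose proof is deferred to \cref{subsec:proof:thm:six:three}. The real content — and the step I would expect to be delicate — lives there: one must encode an arbitrary equi-length string set $L$ (or its $\Sigma$-DAG) as the LCS set of just two strings $X, Y$, so that $\mathrm{LCS}(X,Y)$ is in distance-preserving bijection with $L$, while keeping $|X|, |Y|$ polynomial in $\|L\|$ and not blowing up the parameter. The challenge is controlling the LCS length and forbidding spurious common subsequences of equal length, so that the Hamming-distance diversity of candidate subsets is faithfully transported. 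But for the corollary as stated, the only thing to check is that the composition of two FPT-reductions (equivalently, two polynomial-time reductions, for the NP-hardness companion in \cref{thm:nphard:divlcs}) is again of the same type and that it lands on the two-string case with $m$ merely part of the input, both of which are routine.
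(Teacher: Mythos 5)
Your proposal matches the paper's own argument: the corollary is obtained exactly by composing the W[1]-hardness of \textsc{Diverse String Set} parameterized by $K$ (\cref{thm:w1hard:div:strdag}) with the FPT-reduction of \cref{lem:fptreduce:strdag:to:lcs} to the two-string \textsc{Diverse LCSs} case, with all substantive work residing in that lemma. This is correct and essentially identical to the paper's one-line derivation.
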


%% \begin{proof}
%% From \cref{lem:fptreduce:strdag:to:lcs}, the claim immediately follows from \cref{thm:w1hard:div:strdag}. 
%% \end{proof}

%%%%%%%
\subsubsection{Proof for \cref{lem:fptreduce:strdag:to:lcs}}
\label{subsec:proof:thm:six:three}

\begin{toappendix}
  %% \sectionhead{6.1.1\quad {Proof for \cref{lem:fptreduce:strdag:to:lcs}}}
  %% \medskip 

  %% \begin{myproofonly}{\cref{lem:w1hard:claim:two}}
  %% $LCS(S_1, S_2)
  %% = \sete{ T_j \mid i\in [s]}
  %% $, where $T_j = P_j\cdot X_j\cdot Q_j$ for all $j \in [s]$. 
  %% \end{myproofonly}

  In \cref{subsec:proof:thm:six:three}, we gave two key lemmas \cref{lem:w1hard:claim:matching} and \cref{lem:w1hard:claim:two}. These lemmas ensures that the LCSs of the constructed input strings $S_1$ and $S_2$ correctly encodes the original string set $L = \set{X_i}\idx i1s$ by the set of LCSs, namely, $LCS(S_1, S_2) = \sete{T_j = P_j\cdot X_j\cdot Q_j}\idx j1s$. 
  In this section, we show the diferred proofs for Claim~1 and Claim~2 in the proof for \cref{lem:w1hard:claim:two} in the proof for \cref{lem:w1hard:claim:two}. 

\end{toappendix}

In this subsection, we show the proof of \cref{lem:fptreduce:strdag:to:lcs}, which is deferred in the previous section. 
%% From now on, we show the proof of \cref{lem:fptreduce:strdag:to:lcs}. 
Suppose that we are given an instance of \textsc{Max-Sum Diverse String Set} consisting of  integers $K, r\ge 1$, $\Delta \ge 0$, and any 
%equi-length string 
set 
$L = \set{X_i}\idx i1s\subseteq \Sigma^r$ 
of $r$-strings, where $s = |L| \ge 2$.
We let 
$\Xi = \sete{ a_{i,j}, b_{i,j} \mid i,j \in [s] }$ be a set of mutually distinct symbols, and 
$\Gamma = \Sigma\cup\Xi$ be a new alphabet with $\Sigma\cap\Xi = \emptyset$.
%%% 
% Let 
% $\sete{P_i, Q_i \mid i \in [s] }$
% %$\set{P_1, \dots, P_s, Q_1, \dots, Q_s}$
% be a set of $2s$ $s$-strings over $\Gamma$ such that for all $i \in [s]$, 
% $P_i := a_{i1}\dots a_{is}$ and 
% $Q_i := b_{i1}\dots b_{is} \in \Gamma^s$. 
%% 
We let 
$\sig T = \set{T_i:= P_i X_i Q_i}\idx i1s$ 
be the set of $s$ strings of length $|T_i| = r + 2s$ over $\Gamma$, where 
\begin{math}
P_i := a_{i1}\dots a_{is} \in \Gamma^s, 
Q_i := b_{i1}\dots b_{is} \in \Gamma^s, 
\forall i \in [s] 
\end{math}. 
%Since $|X_i| = r$, $|P_i| = |Q_i| = s$, we observe that $|T_i| = r + 2s$.

%% \begin{align}
%%   T_i &:= P_i X_i Q_i,
%%   & \forall i \in [s]. 
%% \end{align}

Now, we construct two input strings $S_1$ and $S_2$ over $\Gamma$ in an instance of \textsc{Max-Min Diverse LCSs} so that $LCS(\S_1, \S_2) = \sig T$. 
% Now, we construct two input strings $S_1$ and $S_2$ over $\Gamma$ with intention that $LCS(\S_1, \S_2) = \sig T$. 
For all $i \in [s]$, we factorize each  strings $T_i$ of length $(r+2s)$ into three substrings $A_i, W_i, B_i \in \Gamma^+$, called \textit{segments}, such that $T_i = A_i\cdot W_i\cdot B_i$ such that 
%\begin{itemize}
%\item 
(i) We partition $P_i$ into $P_i = A_i\cdot\bar{A_i}$, where $A_i := P_i[1..s-i+1]$ is the prefix with length $s-i+1$ and $\bar{A_i} = P_i[s-i+2..s]$ is the suffix with length $i-1$ of $P_i$. 
%\item 
(ii) We partition $Q_i$ into $Q_i = \bar{B_i}\cdot B_i$, where 
$\bar{B_i} = Q_i[1..s-i]$ is the prefix with length $s-i$ and 
$B_i:= Q_i[s-i+1..s]$ is the suffix with  length $i$ 
of $Q_i$.
%%\item 
(iii) We obtain a string $W_i := \bar{A_i}\cdot X_i \cdot \bar{B_i}$ with 
length $r + s - 1$ from $X_i$ by prepending and appending $\bar{A_i}$ and $\bar{B_i}$ to $X_i$. 
%\end{itemize}
%% 
Let 
$\sig A = \sete{ A_i }\idx i1s$, 
$\sig B = \sete{ B_i }\idx i1s$, and 
$\sig W = \sete{ W_i }\idx i1s$
be the \textit{groups} of the segments of the \textit{same types}. 
See \cref{fig:redlcs2} for examples of $\sig A, \sig B$, and $\sig W$. 
Then, we define the set $\S = \set{S_1, S_2}$ of two input strings $S_1$ and $S_2$ of the same length
\begin{math}
|S_1| = |S_2| 
%= s\times |T_i|
= s(r+2s)
\end{math} 
by: 
\begin{align}
  \S_1
  &= \textstyle \prod_{i=1}^{s} A_i\cdot \prod_{i=1}^{s} W_i\cdot \prod_{i=1}^{s} B_i
  = (A_1\cdots A_s) \cdot (W_1\cdots W_s) \cdot (B_1\cdots B_s), 
  \nonumber 
  \\\S_2 
  &= \textstyle \prod_{i=s}^{1} T_i 
  = \prod_{i=s}^{1} (A_i\cdot W_i\cdot B_i) 
  = (A_s\cdot W_s\cdot B_s)\cdots (A_1\cdot W_1\cdot B_1). 
  \label{eq:two:inputstrings}
\end{align}

\cref{fig:redlcs} shows an example of  $\S$ for $s = 4$. 
We 
%can easily 
observe the following properties of  $\S$: 
(P1) $S_1$ and $S_2$ are segment-wise permutations of each other; 
(P2) if all segments in any group $\sig Z = \set{Z_i}\idx i1s \in \set{\sig A, \sig B, \sig W}$ occur one of two input strings, say $S_1$, in the order $Z_1, \dots, Z_s$, then they occur in the other, say $S_2$, in the reverse order $Z_s, \dots, Z_1$;  
(P3) $A_i$'s (resp.~$B_i$'s) appear in $S_2$ from left to right in the order $A_s, \dots, A_1$ (resp.~$B_s, \dots,  B_1$); 
(P4) $\sig A$ and $\sig B$ satisfy 
$|A_1| > \dots > |A_s|$ and 
$|B_1| < \dots < |B_s|$. 
%%%
We associate a bipartite graph $\sig B(\S) = (V = V_1\cup V_2, E)$ to $\S$, where 
(i) $V_k$ consists of all positions in $S_k$ for $k=1,2$, and 
(ii) $E\subseteq V_1\by V_2$ is an edge set such that 
%$\forall e = (i_1, i_2) \in V_1\by V_2$,
$e = (i_1, i_2) \in E$ if and only if  both ends of $e$ have the same label $S_1[i_1] = S_2[i_2] \in \Sigma$. 
%%% 
Any sequence $M = ((i_k, j_k))\idx k1\ell \in E^\ell$ of $\ell$ edges 
%in $\sig B(\S)$ 
is an \textit{(ordered) matching} if 
%no pair of distinct edges $(i_1, i_2), (j_1, j_2)$ in $M$ share an end, i.e., 
$i_1 \not= j_1$ and $i_2 \not= j_2$, and is \textit{non-crossing} if 
%no pair of distinct edges $(i_1, i_2), (j_1, j_2)$ in $M$ cross each other, i.e., 
$(i_1 < j_1)$ and $(i_2 < j_2)$. 
%The next lemma characterizes $LCS(S_1, S_2)$. 
%%% 

%%%%%% subcaption
\begin{figure}
  \centering
  \centering
    \subcaptionbox{The set $\sig T$\label{fig:redlcs2}}{\includegraphics[height=3.1cm]{./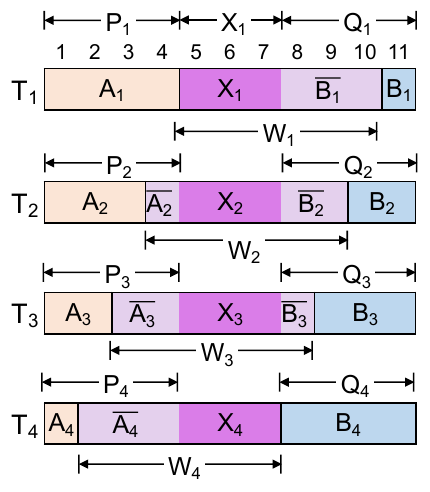}}
    %% \label{fig:example:lcsdag:inputdag}
    \subcaptionbox{Input strings $S_1$ and $S_2$\label{fig:redlcs}}{\includegraphics[height=3.2cm]{./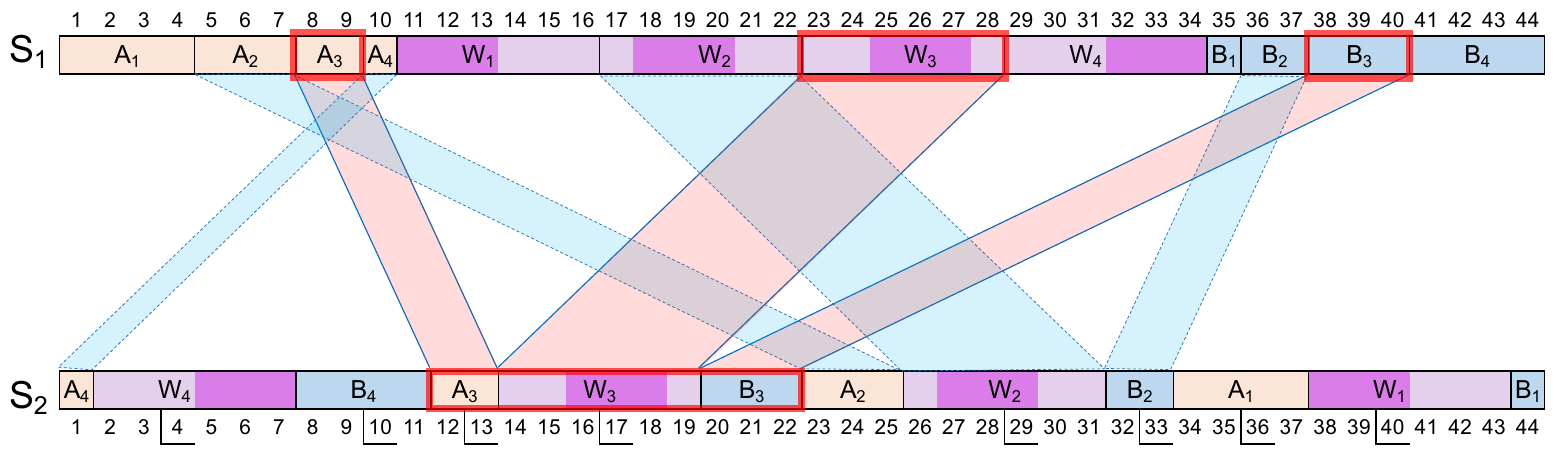}}
    \caption{
      Construction of the FPT-reduction from \textsc{Max-Min Diverse String Set} to \textsc{Max-Min Diverse LCS} in the proof of \cref{lem:fptreduce:strdag:to:lcs}, where $s = 4$.
      We show (a) the set $\sig T$ of $s$ $r$-strings and (b) a pair of input strings $S_1$ and $S_2$. 
      Red and blue parallelograms, respectively, indicate allowed and prohibited matchings between the copies of blocks $T_3 = A_3 W_3 B_3$ in $S_1$ and $S_2$.
    }
  %% \label{fig:exam:main:kdivlcs}
  %% \label{fig:example:lcsdag}  
\end{figure}
%%%%%%

\begin{lemma}\label{lem:w1hard:claim:matching}
For any $M \subseteq V_1\by V_2$ and any $\ell \ge 0$, $\sig B(\S)$ has a non-crossing matching $M$ of size $\ell$ if and only if $S_1$ and $S_2$ have a common subsequence $C$ with length $\ell$ of $S_1$ and $S_2$. Moreover, the length $\ell=|M|$ is maximum if and only if $C \in LCS(S_1, S_2)$. 
\end{lemma}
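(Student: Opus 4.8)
The plan is to set up a size-preserving correspondence between non-crossing matchings of the bipartite graph $\sig{B}(\S)$ and common subsequences of $S_1$ and $S_2$, and then read off both the biconditional and the ``moreover'' clause from it. I would first prove that a non-crossing matching yields a common subsequence of the same length. Given a non-crossing matching $M$ of size $\ell$, reindex its edges so that their $S_1$-endpoints are increasing, $i_1 < i_2 < \cdots < i_\ell$. The non-crossing condition forces the $S_2$-endpoints to be increasing as well, $j_1 < j_2 < \cdots < j_\ell$ (any inversion with $i_k < i_{k'}$ but $j_k > j_{k'}$ would be a crossing pair, contradicting the definition). Since each $(i_k, j_k)$ is an edge of $\sig{B}(\S)$, we have $S_1[i_k] = S_2[j_k] =: c_k \in \Sigma$, so $C := c_1 c_2 \cdots c_\ell$ is read simultaneously from $S_1$ along the positions $i_1 < \cdots < i_\ell$ and from $S_2$ along $j_1 < \cdots < j_\ell$; hence $C$ is a common subsequence of length $\ell$.

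Conversely, I would show that a common subsequence $C = c_1 \cdots c_\ell$ of $S_1$ and $S_2$ gives back such a matching. The definition of subsequence provides witnessing embeddings $i_1 < \cdots < i_\ell$ with $S_1[i_k] = c_k$ and $j_1 < \cdots < j_\ell$ with $S_2[j_k] = c_k$. Each pair $(i_k, j_k)$ has endpoints with the common label $c_k \in \Sigma$, so it lies in $E$; the sequence $((i_k, j_k))_{k=1}^{\ell}$ has pairwise distinct $S_1$-coordinates and pairwise distinct $S_2$-coordinates by strict monotonicity, so it is an ordered matching, and it is non-crossing because both coordinate sequences increase. Thus $\sig{B}(\S)$ has a non-crossing matching of size $\ell$. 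The degenerate case $\ell = 0$ corresponds to $C = \eps$ and the empty matching and is immediate in both directions, completing the biconditional.

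For the ``moreover'' clause, both constructions above preserve the size parameter, so $\sig{B}(\S)$ has a non-crossing matching of size $> \ell$ if and only if $S_1$ and $S_2$ have a common subsequence of length $> \ell$. Consequently a non-crossing matching $M$ has maximum size exactly when its associated common subsequence $C$ has maximum length among all common subsequences of $S_1$ and $S_2$, i.e.\ when $C \in LCS(S_1, S_2)$, and the reverse implication is symmetric. I do not expect a genuine obstacle here, since this is the classical equivalence between longest common subsequences and maximum non-crossing alignments; the only point needing care is the bookkeeping around the ``ordered matching'' and ``non-crossing'' definitions — specifically verifying that sorting the edges of a non-crossing matching by their $S_1$-endpoints simultaneously sorts their $S_2$-endpoints, and that size-preservation is genuinely symmetric so that maximality transfers in both directions.
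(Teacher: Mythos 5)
Your proof is correct and follows essentially the same route as the paper's: order the edges of the non-crossing matching so that both coordinate sequences increase and read off the common labels, with the converse obtained from the witnessing embeddings of a common subsequence. The paper's own proof only sketches the forward direction, so your fuller treatment of the converse and the maximality transfer is consistent with (and slightly more complete than) what appears there.
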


\begin{proof}
If there exists a non-crossing matching $M = \sete{(i_k,j_k) \mid k \in [\ell]} \subseteq E$ of size $\ell\ge 0$, 
we can order the edges in the increasing order such that $i_{\pi(1)} < \dots < i_{\pi(\ell)}$, $j_{\pi(1)} < \dots < j_{\pi(\ell)}$ for some permutation $\pi$ on $[\ell]$. Then, the string $S_1(M) := S_1[i_{\pi(1)}]\cdots S_1[i_{\pi(\ell)}] \in \Sigma^\ell$ (equivalently, $S_2(M) := S_2[j_1]\cdots S_2[j_\ell]$) forms the common subsequence associated to $M$.     
\end{proof}

In \cref{lem:w1hard:claim:matching}, we call a non-crossing ordered matching $M$ associated with a common subsequence $C$ a \textit{matching labeled with} $C$. 
We show the next lemma.

\begin{lemma}\label{lem:w1hard:claim:two}
  $LCS(S_1, S_2) = \sete{ T_j \mid i\in [s]}$, where $T_j = P_j\cdot X_j\cdot Q_j$ for all $j \in [s]$. 
\end{lemma}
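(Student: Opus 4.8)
The proof has two directions; I would handle both through the correspondence between common subsequences of $S_1,S_2$ and non-crossing ordered matchings of the bipartite graph $\sig{B}(\S)$ established in \cref{lem:w1hard:claim:matching}.

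\emph{The inclusion $\sig{T}\subseteq LCS(S_1,S_2)$.} I would first check that every $T_j=A_jW_jB_j$ is a common subsequence of length $r+2s$: it occurs contiguously as the block $T_j$ of $S_2=T_s\cdots T_1$, and in $S_1=(A_1\cdots A_s)(W_1\cdots W_s)(B_1\cdots B_s)$ the three segments $A_j,W_j,B_j$ occur, in this left-to-right order, inside the prefix $A_1\cdots A_s$, the middle $W_1\cdots W_s$, and the suffix $B_1\cdots B_s$ respectively. Since $|T_j|=|P_j|+|X_j|+|Q_j|=s+r+s=r+2s$, this gives $lcs(S_1,S_2)\ge r+2s$. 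It then suffices to prove that every common subsequence $C$ of $S_1$ and $S_2$ has $|C|\le r+2s$, with equality only if $C\in\sig{T}$; together with the previous bound this yields both $lcs(S_1,S_2)=r+2s$ and $LCS(S_1,S_2)=\sig{T}$. I would package the $\supseteq$ part and the upper bound with its equality characterization as the two internal claims referenced in the appendix.

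\emph{Structural analysis of a matching.} Fix a common subsequence $C$ and a matching $M$ labeled with $C$. Classify the edges of $M$ by the alphabet of their endpoints: an \emph{$a$-edge} carries some $a_{i,j}$ (lying in a segment $A_i$ or $\bar{A_i}$), a \emph{$b$-edge} carries some $b_{i,j}$ (lying in $\bar{B_i}$ or $B_i$), and a \emph{$\Sigma$-edge} carries a letter of $\Sigma$ (lying in some $X_i$). Since the symbols of $\Xi$ are pairwise distinct, each $a$-edge and each $b$-edge is \emph{forced}: it joins the unique occurrence of its symbol in $S_1$ to the unique occurrence in $S_2$. Using this together with properties (P1)--(P4) --- crucially, that the segments of each group $\sig{A},\sig{B},\sig{W}$ appear in $S_2$ in the reverse of their order in $S_1$, while $M$ is increasing in both strings --- I would show that $M$ meets at most one segment $A_\mu$, at most one $\bar{A_\nu}$, at most one $\bar{B_{\nu'}}$, and at most one $B_{\mu'}$, and that the indices are mutually compatible: each of $\nu\le\mu$, $\nu'\ge\mu'$, $\mu\ge\mu'$, and ``if $M$ meets both $\bar{A_\nu}$ and $\bar{B_{\nu'}}$ then $\nu=\nu'$'' follows by comparing the positions of the two relevant segments in $S_1$ and in $S_2$. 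From $|A_\mu|+|\bar{A_\nu}|=(s-\mu+1)+(\nu-1)=s+\nu-\mu\le s$ it follows that $M$ has at most $s$ $a$-edges, with equality exactly when $M$ meets \emph{all} of $P_\mu$ for a single $\mu$; symmetrically $M$ has at most $s$ $b$-edges, with equality only when $M$ meets all of $Q_{\mu'}$ for a single $\mu'$; and in that equality case the compatibility constraints force $\mu=\mu'$.

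\emph{Controlling the $\Sigma$-edges, and finishing.} This is where I expect the real work to lie, since, unlike the $\Xi$-symbols, letters of $\Sigma$ are not unique, so a $\Sigma$-edge may join a position of $X_j$ in $S_1$ to a position of $X_{j'}$ in $S_2$ with $j\neq j'$; $M$ is thus not a priori confined to one $X$-segment. The key point I would establish is that the guards $\bar{A_j},\bar{B_j}$ around $X_j$ inside $W_j$, combined with the reversed block order of the $W_i$ in $S_2$, pin the $\Sigma$-edges down: when $M$ attains the maximal $a$- and $b$-counts it meets all of $P_\mu$ and all of $Q_\mu$ for one common $\mu$, and then the fact that the copies of $A_\mu$ (resp.\ $B_\mu$) sit at the start (resp.\ end) of $T_\mu$ in $S_2$ but in the first (resp.\ last) third of $S_1$ leaves, in matching order, no room for a $\Sigma$-edge before the $\bar{A_\mu}$-edges or after the $\bar{B_\mu}$-edges; every $\Sigma$-edge must therefore lie between them, hence joins $X_\mu$ in $S_1$ to $X_\mu$ in $S_2$ and contributes at most $|X_\mu|=r$. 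A careful bookkeeping of how the three thirds of $S_1$ interleave with the blocks $T_s,\dots,T_1$ of $S_2$ --- including the degenerate cases $\mu=1$ (where $\bar{A_1}=\eps$, so $A_1$ alone is the first block of $S_1$ and the last of $S_2$) and $\mu=s$ (where $\bar{B_s}=\eps$) --- then gives $|M|\le s+r+s=r+2s$; and in the equality case $M$ meets exactly $P_\mu$, all of $X_\mu$, and $Q_\mu$ for one $\mu$, so $C=P_\mu X_\mu Q_\mu=T_\mu$. The main obstacle, which I would isolate into the two claims, is carrying out this interleaving argument uniformly, so that it also bounds the $\Sigma$-edges when the $a$- and $b$-counts are not maximal.
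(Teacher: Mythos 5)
Your high-level route is the same as the paper's: pass to non-crossing matchings via \cref{lem:w1hard:claim:matching}, use the fact that every symbol of $\Xi$ occurs exactly once in each of $S_1,S_2$ so that $a$- and $b$-edges are forced, and compare segment lengths to single out $T_\mu$. Your accounting of the $\Xi$-edges (at most one $A_\mu$, one $\bar{A_\nu}$, one $\bar{B_{\nu'}}$, one $B_{\mu'}$; the index compatibilities; at most $s$ edges per side with equality only when a whole $P_\mu$, resp.\ $Q_{\mu'}$, is matched and $\mu=\mu'$) is correct, and is in fact more careful than the paper's Claim~1, which tacitly treats a maximal matching as a union of submatchings labeled by \emph{whole} segments. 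The genuine gap is exactly the step you postpone: bounding the $\Sigma$-edges ``uniformly'', i.e.\ also when the $a$- and $b$-counts are not maximal. That step cannot be carried out, because the inequality you aim at, $|C|\le r+2s$ for every common subsequence $C$, is false for arbitrary $L$. Since letters of $\Sigma$ recur across blocks, a non-crossing matching may consist purely of $\Sigma$-edges matching the subsequence $X_1X_2\cdots X_s$ of $S_1$ against the subsequence $X_sX_{s-1}\cdots X_1$ of $S_2$ along a staircase of block pairs, and its size is governed by common subsequences of these concatenations, not by $r+2s$. Concretely, for $X_i=a^{r-1}c_i$ with pairwise distinct last letters $c_i$, the string $a^{s(r-1)}$ is a common subsequence of $S_1$ and $S_2$; already for $s=2$, $r=7$ its length is $12>11=r+2s$, so $LCS(S_1,S_2)\neq\sete{T_j \mid j\in[s]}$.

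Consequently no bookkeeping of how the three thirds of $S_1$ interleave with the blocks $T_s,\dots,T_1$ of $S_2$ can close your argument: the guards $\bar{A_j},\bar{B_j}$ pin down only $\Xi$-edges, and giving up the at most $2s$ forced edges can be overcompensated by cross-block $\Sigma$-matches whenever distinct $X_i,X_j$ share long common subsequences (which the instances fed into \cref{lem:fptreduce:strdag:to:lcs}, e.g.\ those of \cref{thm:w1hard:div:strdag} with many occurrences of the symbol $0$, do not exclude). Any repair must either restrict $L$ so that distinct $X_i,X_j$ have only short common subsequences, or modify the gadget so that $\Sigma$-edges joining $X_i$ in $S_1$ to $X_j$ in $S_2$ with $i\neq j$ are impossible while the pairwise Hamming distances of the $T_j$ are preserved. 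Note that the paper's own proof is silent on this point (its Claim~1 only rules out two whole-segment submatchings from the same group), so the obstruction you flagged is real and is not an artefact of your particular decomposition.
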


\begin{proof}
%Let $\sig Z \in \set{\sig A, \sig B, \sig W}$ be any of three groups of segments, where $\sig Z = \set{Z_i}\idx i1s$. 
%From the construction of $S_1$ and $S_2$, 
We first observe that each segment $Z \in \Sigma^+$ in each group $\sig Z$ within $\set{\sig A, \sig B, \sig W}$ occurs exactly once in each of $S_1$ and $S_2$, respectively, as a consecutive substring. 
%$ S_k[p_k..p_k+|Z|-1] = Z$ for each $k=1,2$. 
Consequently, For each $Z$ in $\sig Z$, $\sig B(\S)$ has exactly one non-crossing matching 
$M_Z$
%$M_Z = ((i_h, j_h))\idx h1{|Z|} \in E^{|Z|}$ 
\textit{labeled with $Z$ connecting} the occurrences of $Z$ 
%$S_1[i_1..i_1+|Z|-1]$ and 
%$S_2[j_1..j_1+|Z|-1]$ 
in $S_1$ and $S_2$.
From (P2), we show the next claim. 

\textit{Claim}~1: 
If $\sig B(\S)$ contains any inclusion-wise maximal non-crossing matching $M_*$, it connects exactly one segment $Z$ from each of three groups $\sig A, \sig B$, and $\sig W$. 

\begin{toappendix}
\par\medskip
\textit{Claim}~1: 
If $\sig B(\S)$ contains any inclusion-wise maximal non-crossing matching $M_*$, it connects exactly one segment $Z$ from each of three groups $\sig A, \sig B$, and $\sig W$. 
\par 
(Proof for Claim~1)
$M$ can be labeled by at most one segment from each segment group $\sig Z \in \set{\sig A, \sig B, \sig W}$. 
To see this, suppose to contradicts that $M$ contains submatches $M_i$ and $M_j$ labeled with two distinct segments, $Z_i$ and $Z_j$ with $i < j$, respectively, from the same group $\sig Z$. 
By construction, all segments in $\sig Z$ occur one of two input strings, say $S_1$, in the increasing order $Z_1, \dots, Z_s$, then they occur in the other, say $S_2$, in the decreasing order $Z_s, \dots, Z_1$ (see \cref{fig:redlcs}). 
Then, it follows that the submatchings $M_i$ and $M_j$ must cross each other. From \cref{lem:w1hard:claim:matching}, this  contradicts 
%the assumption 
that $M$ is non-crossing. 
(End of Proof for Claim~1)
\end{toappendix}

From Claim~1, we assume that a maximum (thus, inclusion-maximal) non-crossing matching $M_*$ contains submatches labeled with segments $A_i, W_j, B_k$ one from each group in any order, where $i,j,k\in [s]$. Then, $M$ must contain $A_i, W_j, B_k$ in this order, namely, $A_i\cdot W_j\cdot B_k \in CS(S_1, S_2)$ because some edges cross otherwise (see \cref{fig:redlcs}). Therefore, we have that the concatenation $T_{j_*} := A_{j_*}\cdot W_{j_*}\cdot B_{j_*}$ belongs to $CS(S_1, S_2)$, and it always has a matching in $\sig B(S_1,S_2)$. 
%(see also  \cref{fig:redlcs}). 
From (P3) and (P4), we can show the next claim. 

\textit{Claim~2}: If $M_*$ is maximal and contains $A_i\cdot W_{j_*}\cdot B_k$, then 
%the condition 
$i=j_*=k$ holds.

\begin{toappendix}
\par\medskip
\textit{Claim~2}: If $M_*$ is maximal and contains $A_i\cdot W_{j_*}\cdot B_k$, then 
%the condition 
$i=j_*=k$ holds.
\par 
(Proof for Claim~2) 
Suppose to contradict that $\sig B(\S)$ contains a maximal match $A_i\cdot W_{j_*}\cdot B_k$ with $i,j,k$. Without loss of generality, we fix the submatch for $W_{j_*}$. 
If $i\not= j_*$ holds, the submatch $M_{A_i}$ for $A_i$ (resp.~$M_{B_k}$ for $B_k$) must appear in $\sig B$ to the left (resp.~right) of the submatch $M_{W_{j_*}}$ for $W_{j_*}$. 
Since $A_i$'s (resp.~$B_i$'s) appear in $S_2$ from left to right in the order $A_s, \dots, A_1$ (resp.~$B_s, \dots,  B_1$), this implies that $i > j_*$ (resp.~$j_* > k$). 
By construction, $i > j_*$ implies $|A_i| < |A_{j_*}|$ (resp.~$j_* > k$ implies $|B_{j_*}| > |B_k|$). 
Therefore, $|A_i\cdot W_{j_*}\cdot B_k| < |A_{j_*}\cdot W_{j_*}\cdot B_{j_*}|$ unless $i =j_* = k$. (End of Proof for Claim~2)
\end{toappendix}

From Claim~2, we conclude that $T_{j_*} = A_{j}\cdot W_{j}\cdot B_{j}$ is the all and only members of $LCS(S_1,S_2)$ for all $j \in [s]$. 
Since $A_{j}\cdot W_{j}\cdot B_{j} = P_{j}\cdot X_{j} \cdot B_{j} = T_{j}$, the lemma is proved. 
\end{proof}

Using~\cref{lem:w1hard:claim:two}, we finish the proof for \cref{lem:fptreduce:strdag:to:lcs}. 

\begin{myproofonly}{Proof for \cref{lem:fptreduce:strdag:to:lcs}}
Recall that integers $K, r\ge 1$, $\Delta \ge 0$, and a string set $L = \set{X_1, \dots, X_s}\subseteq \Sigma^r$ of $r$-strings form an instance of \textsc{Max-Min Diverse String Set}. 
Let $\Delta' := \Delta + 2s$, $K' = \kappa(K) := K$, and $\S = \set{S_1, S_2} \subseteq \Gamma^*$ be the associated instance of \textsc{Max-Min Diverse LCS} for two input strings. 
Since the parameter $\kappa(K) = K$ depends only on $K$, it is obvious that this transformation can be computed in FPT. We show that this forms an FPT-reduction from the former problem to the latter problem. 
%% 
%\textsc{Max-Min Diverse String Set} parameterized with $K$ to \textsc{Max-Min Diverse LCS} for two strings parameterized with $K'$ as follows.
  %%%
%  Given $L = \sete{ X_i \mid i \in [K]}\subseteq \Sigma^*$ and $K$, we let $S_1$ and $S_2$ be two strings constructed above, and  
%% 
By \cref{lem:w1hard:claim:two},  we have the next claim. 

\textit{Claim}~3: For any $i,j \in [s]$, $d_H(T_i, T_j) = d_H(X_i, X_j) + 2s$. 

(Proof for Claim~3)
By \cref{lem:w1hard:claim:two},  $LCS(S_1, S_2) = \sete{ T_j \mid i \in [K]}$.
By construction, 
$T_j = P_j \cdot X_j \cdot Q_j$ and 
$|P_j| = |Q_j| = s$, and $d_H(P_i, P_j) = d_H(Q_i, Q_j) = s$ for any $i,j\in [s],\; i\not= j$. Therefore, we can decompose $d_H(T_i, T_j)$ by 
\begin{math}
  d_H(T_i, T_j)
= d_H(P_i, P_j) + d_H(X_i, X_j) +  d_H(Q_i, Q_j)
= d_H(X_i, X_j) + 2s
\end{math}
(End of Proof for Claim~3)

Suppose that $\Y\subseteq LCS(S_1, S_2)$ is any feasible solution such that $|\Y| = K'$ for \textsc{Max-Sum Diverse LCSs}, where $K' = K$. 
From~\cref{lem:w1hard:claim:two}, we can put $\Y = \set{T_{i_j}}_{j \in J}$ for some  $J \subseteq [s]$.
From Claim~3, we can see that 
\begin{math}
    \Div[min]{d_H}(\Y)
    = \Div[min]{d_H}(\X) + 2s, 
\end{math}
where $\X = \set{X_j}_{j \in J}$ is a solution for \textsc{Max-Min Diverse String Set}. 
Thus, $\Div[min]{d_H}(\X) \ge \Delta$ if and only if $\Div[min]{d_H}(\Y) \ge \Delta + 2s = \Delta'$. This shows that the transformation properly forms NP- and FPT-reductions. 
To obtain NP- and FPT-reductions for the \textsc{Max-Sum} version, we keep $K$ and $\S = \set{S_1, S_2}$ in the previous proof, and modify $\Delta' := \Delta + 2s\binom{K}{2}'$, 
where $\binom{K}{2}'$ $\!:= $ $\set{ (i,j)\in \binom{K}{2}\mid i \not= j }$. 
From Claim~3, we have that $\Div[sum]{d_H}(\Y) = \Div[sum]{d_H}(\X) + 2s\binom{K}{2}'\!$, 
and the correctness of the reduction immediately follows. 
Combining the above arguments, the theorem is proved. 
%%%% 
\end{myproofonly}

%%%%

%%%%%%%%%%%%%%%%%%%%%%%%%%%%%%%%%%%%%%%%%%%%%

%% We studied the computational complexity of the \textsc{Max-Min} and \textsc{Max-Sum} versions of \textsc{Diverse String Set} and \textsc{Diverse LCSs} problems. We show the complexity of exact, approximate, and parameterized solutions for each problem. The approximability of \textsc{Max-Min Diverse String Set} remains an interesting open problem.
%% Future work includes extending our results to other string distances, e.g., \textit{edit distance}, and applying our proposed methods to other diversity maximization problems on strings. 
%% Future work includes extending our results to other string distances, e.g., \textit{edit distance}, and applying our proposed methods to other diversity maximization problems on strings, whose feasible solutions are represented by $\Sigma$-DAGs.
%% Details will be provided in the full paper.  
%Applications  to problems related to as well as approximate string and hypertext matching
%labeled trees

\section*{Acknowledgements}

The authors express sincere thanks  to anonymous reviewers 
for their valuable comments, 
which significantly improved the presentation and quality of this paper.
The last author would like to thank Norihito Yasuda, Tesshu Hanaka, Kazuhiro Kurita, Hirotaka Ono of AFSA project, and Shinji Ito for fruitful discussions and helpful comments.

%In addition, he thanks Shinji Ito for providing  connection between diversity maximization and multiplicative update. 
%%National Institute of Advanced Industrial Science and Technology
%in combinatorial optimization problems, and Shinji Ito for the connection between diversity maximization and multiplicative update. 
%% The last author would like to thank Tesshu Hanaka, Kazuhiro Kurita, Hirotaka Ono, and other members of AFSA project for their discussions on diversity maximization in combinatorial optimization problems, and Shinji Ito for the connection between diversity maximization and multiplicative update. 
%% This work was
%% %partially
%% supported by JSPS Kakenhi Grant Numbers
%% JP20H00595, JP23K28034, JP24H00686, JP24H00697, %% Kobayashi
%% JP20H05963, JP20H00595, %% Arimura
%% and JP20H05962. %% Punzi and Uno

%% https://academia.stackexchange.com/questions/25935/acknowledging-a-single-funding-source-do-i-say-partially-supported

%%% EOF
 %% surrogate for arXiv submission 
%%%%

\clearpage
% ---- Bibliography ----
%
\bibliography{ref}

%% \clearpage
%% \appendix
\end{document}